\pgfplotsset{compat=1.8}
\newtheorem{thm}{Theorem}
\newtheorem{lem}{Lemma}
\newtheorem{prop}{Proposition}
\newtheorem{corollary}{Corollary}
\newtheorem{remark}{Remark}
\newtheorem{rem}{Remark}
\newcommand{\punt}[1]{}
\def\@copyrightspace{\relax}
\newcommand{\singlecore}{single-processor cup game\xspace}
\newcommand{\SingleCore}{Single-Processor Cup Game\xspace}
\newcommand{\multicore}{multi-processor cup game\xspace}
\newcommand{\tweakedemphmulticore}{renormalized multi-processor cup game\xspace}
\newcommand{\tweakedmulticore}{(renormalized) multi-processor cup game\xspace}
\newcommand{\untweakedmulticore}{(non-renormalized) multi-processor cup game\xspace}
\newcommand{\untweakedemphmulticore}{non-renormalized multi-processor cup game\xspace}
\newcommand{\MultiCore}{Multi-Processor Cup Game\xspace}
\newcommand{\flushgame}{cup-flushing game\xspace}
\newcommand{\FlushGame}{Cup-Flushing Game\xspace}
\newcommand{\efc}{greedy algorithm\xspace}
\newcommand{\smoothefc}{smoothed greedy algorithm\xspace}
\newcommand{\multicoresmoothefc}{multi-processor smoothed greedy algorithm\xspace}
\newcommand{\defn}[1]       {{\textit{\textbf{\boldmath #1}}}}
\newcommand{\E}{\mathbb{E}}
\newcommand{\poly}{\mbox{poly}}
\newcommand{\av}{\operatorname{av}}
\date{}
\newcommand{\namedcomment}[3]{\xspace}
\newcommand{\mab}[1]{\namedcomment{mab}{red}{#1}}
\renewcommand{\epsilon}{\varepsilon}
\newcommand{\secref}[1]         {Section~\ref{sec:#1}}
\newcommand{\seclabel}[1]    {\label{sec:#1}}
\newcommand{\eqlabel}[1]    {\label{eq:#1}}
\renewcommand{\eqref}[1]          {Eq.~\ref{eq:#1}}
\date{}
\begin{document}

\title{Achieving Optimal Backlog in Multi-Processor Cup Games}


\author{
Michael A. Bender\thanks{Department of Computer Science, 
Stony Brook University, Stony Brook, NY 11794-2424 USA. 
\texttt{bender@cs.stonybrook.edu}.}
\and
Mart\'{\i}n Farach-Colton\thanks{Department of Computer Science, Rutgers University, Piscataway, NJ 08854 USA. \texttt{martin@farach-colton.com}}
\and 
William Kuszmaul\thanks{CSAIL, MIT, Cambridge MA 02139 USA. \texttt{kuszmaul.mit.edu}. Supported by an MIT Akamai Fellowship and a Fannie \& John Hertz Foundation Fellowship}\vspace*{0.2pt}
}

\begin{titlingpage}
\maketitle
\sloppy
\thispagestyle{empty}

  \sloppy
  \begin{abstract}
The single- and multi- processor cup games can be used to model natural problems in areas such as processor scheduling, deamortization, and buffer management.

At the beginning of the single-processor cup game, $n$ cups sit in a row, initially empty. In each step of the game, a filler distributes $1$ unit of water among the cups, and then an emptier selects a cup and removes $1 + \epsilon$ units from that cup. The goal of the emptier is to minimize the amount of water in the fullest cup, also known as the backlog. It is known that the greedy algorithm (i.e., empty the fullest cup) achieves backlog $O(\log n)$, and that no deterministic algorithm can do better.

We show that the performance of the greedy algorithm can be greatly improved with a small amount of randomization: After any step $i$, and for any $k \ge \Omega(\log \epsilon^{-1})$, the emptier achieves backlog at most $O(k)$ with probability at least $1 -O(2^{-2^k})$. Our algorithm, which we call the \defn{smoothed greedy algorithm}, can also be interpreted as a one-shot smoothed analysis of the standard greedy algorithm.

Whereas bounds for the single-processor cup game have been known for more than fifteen years, proving nontrivial bounds on backlog for the multi-processor extension has remained open. We present a simple analysis of the greedy algorithm for the multi-processor cup game, establishing a backlog of $O(\epsilon^{-1} \log n)$, as long as $\delta$, the game's other speed-augmentation constant, is at least $1/\poly(n)$.

Turning to randomized algorithms, we encounter an unexpected phenomenon: When the number of processors $p$ is large, the backlog after each step drops to \emph{constant} with large probability. Specifically, we show that if $\delta$ and $\epsilon$ satisfy reasonable constraints, then there exists an algorithm that bounds the backlog after a given step by three or less with probability at least $1 - O(\exp(-\Omega(\epsilon^2 p))$. We further extend the guarantees of our randomized algorithm to consider larger backlogs.

When $\epsilon$ is constant, we prove that our results are asymptotically optimal, in the sense that no algorithms can achieve better bounds, up to constant factors in the backlog and in $p$. Moreover, we prove robustness results, demonstrating that our randomized algorithms continue to behave well even when placed in bad starting states.

\end{abstract}



\end{titlingpage}

 
\newpage

\section{Introduction}
\seclabel{intro}

A \defn{cup game}~\cite{DietzSl87,SleatorTa85,DietzRa91,AdlerBeFr03,LitmanMo09,BenderFeKr15,RosenblumGoTa04,ChrobakCsIm01}
is a multi-round game in which there there are $n$ cups, each
initially empty, and two players. In each round, the \defn{filler}
distributes water to cups subject to some constraints, and then the
\defn{emptier} removes water from the cups subject to other
constraints. For example, the filler may be constrained to add a total
of at most one unit of water per round, while the emptier may be
constrained to remove water from at most one cup.

The \defn{high-water mark} after a given round is the maximum amount of water in any cup. The emptier's goal is to minimize the high-water mark after each round and the filler acts as an adversary. In a cup game where the emptier uses a randomized strategy, the filler is \defn{oblivious}; that is, the filler does not know the random choices made by emptier (and thus does not know the state of the cups).

In this paper, we focus on two cup games that naturally arise in the study of processor scheduling \cite{AdlerBeFr03,LitmanMo09,DietzRa91}. 
These games (and their relaxations) have also appeared in a variety of other applications, ranging from deamortization \cite{AmirFaId95,DietzRa91,DietzSl87,AmirFr14,Mortensen03,GoodrichPa13,FischerGa15,Kopelowitz12}, to buffer management in network switches~\cite{Goldwasser10,AzarLi06,RosenblumGoTa04,Gail93}, to quality of service in real-time scheduling~\cite{BaruahCoPl96,AdlerBeFr03,LitmanMo09}.

\paragraph*{\large The \singlecore.} Consider $n$ threads, and suppose that at each time step they receive an aggregate of one new unit of work to do. The scheduler must then pick a thread to execute for that time step. The thread can then make $(1 + \epsilon)$ units of progress; here $\epsilon$ is a \defn{speed-augmentation constant}. Rephrasing the problem as a cup game, at each step the filler may distribute up to one unit of water among all of the cups, and the emptier may then select a cup and remove up to $1 + \epsilon$ units of water from it. In this context, the high-water mark is often referred to as the \defn{backlog}.


The \singlecore and its variants have been studied extensively \cite{AdlerBeFr03,DietzRa91,DietzSl87,BaruahCoPl96,Liu69}. Notably, if the emptier follows the greedy algorithm, and always removes water from the fullest cup, then the backlog will never exceed $O(\log n)$ \cite{AdlerBeFr03}. In fact, this remains true even when the speed-augmentation constant $\epsilon$ is zero. It is further known that no matter the value of $\epsilon$, no deterministic algorithm can do better, even by an additive constant factor (see the variant of the lower bound presented in \cite{BenderFeKr15} and \cite{DietzRa91}).\footnote{The fact that the bounds are unaffected by the choice of $\epsilon$ has led past authors to implicitly either take $\epsilon$ to be zero \cite{AdlerBeFr03,Liu69,BaruahCoPl96} or to be very large \cite{DietzSl87, DietzRa91, BenderFeKr15}. Combining the lower bound for the latter case with the upper bound for the former, one obtains the result for all values of $\epsilon$ with no additional work.}


\smallskip

{\bf Our single-processor results.} We show that the performance of the greedy emptying algorithm for the \singlecore can be greatly improved with a small amount of randomization: After any step $i$, and for any $k \ge \Omega(\log \epsilon^{-1})$, the emptier achieves backlog at most $O(k)$ with probability at least $1 -O(2^{-2^k})$. An important consequence is that with high probability in $n$ (i.e., probability $1 - 1/\poly(n)$), the backlog will not exceed $O(\log \log n)$, as long as $\epsilon \ge O(1/\log n)$. Moreover, we show that these bounds are optimal, meaning that no algorithm can obtain the same probabilistic bounds for asymptotically smaller backlogs $k$.

The algorithm, which we call \emph{the \smoothefc}, works as follows: The emptier begins the game by artificially inserting into each cup $j$ a random quantity $r_j \in [0, 1]$. Then at the end of each step, the emptier follows the deterministic \efc, except with a small but critical modification: If the fullest cup contains less than one unit of water, then the emptier does not remove any water at all.

This simple algorithm gives an explanation, in the form of a smoothed analysis, for why real-world systems should be expected to exhibit good behavior when following greedy strategies. Indeed, by randomly perturbing the start-state of the cups, a deterministic algorithm can suddenly obtain the guarantees of the (optimal) randomized algorithm. In this sense, our randomized algorithm can be interpreted as a one-shot smoothed analysis of the standard greedy algorithm.

The small amount of resource augmentation used by the \smoothefc makes it robust to the setting in which cups begin in a bad initial starting state. In particular, we show that if $b$ units of water are maliciously placed into cups at the beginning of the game, then for steps $i > \frac{b}{\varepsilon}$, the $b$ units of water have no affect on the guarantees given by the algorithm.

\paragraph*{\large The \multicore.} The multi-processor version of the same scheduling question has proven to be much harder. In each step of the \defn{\multicore}, the filler distributes some amount of water proportional to $p$, the number of processors, among the cups (i.e., threads), and the emptier picks $p$ cups and removes a unit of water from each.  If the filler is unrestricted in their placement of the water, then they can ensure an arbitrarily large backlog by placing all of the water in the same cup at each step; the emptier will then only be able to remove a single unit from that cup. A natural restriction, therefore, is to bound the amount of water placed in each cup at each step by some quantity $1 - \delta$, where $\delta$ is, like $\epsilon$, a speed-augmentation constant

The difficulty of the multi-processor version of the question stems from the fact that the emptier can remove at most one unit of water from each cup. In the scheduling problem, this corresponds with the fact that distinct processors cannot make progress on the same thread simultaneously, even if that thread has a vastly greater backlog of work than its peers. As noted by Liu \cite{Liu69}, and subsequently reiterated by later authors \cite{LitmanMo09,BaruahCoPl96}, this adds a ``surprising amount of difficulty'' to the scheduling problem.

Litman and Moran-Schein \cite{LitmanMo09} showed that if the emptier has speed-augmentation constant $\delta$ and is permitted to be semi-clairvoyant, with knowledge of the schedule for when the next $1 + \delta^{-1}$ units of water will arrive in each cup, then the emptier can follow a simple deadline-based algorithm to achieve backlog $O(1 + \delta^{-1})$. Providing provable guarantees for a fully \emph{non-clairvoyant} emptier has been a long-standing open problem~\cite{LitmanMo09,AdlerBeFr03,BaruahCoPl96}. It has been unknown, for example, whether a sufficiently clever filler might be able to achieve a backlog as a function of $p$.

\textbf{Our multi-processor results.}
We present the first provable results for the \multicore without the use of clairvoyance. Specifically, we consider the game in which at each step, the filler places $(1 - \epsilon)p$ units of water among the cups, putting no more than $1 - \delta$ units of water into any particular cup; the emptier then removes up to one unit of water from each of $p$ cups of their choice. We prove that the greedy Empty-Fullest-Cups algorithm achieves a backlog of $O\left(\frac{1}{\epsilon} \log n\right)$ for any $0 < \epsilon < 1$ and $\delta \ge \frac{1}{\poly(n)}$. When $\epsilon$ is constant, we show that this is provably optimal up to constant factors.

Turning to randomized algorithms, we encounter an unexpected phenomenon: When the number of processors $p$ is sufficiently large, the backlog after each step drops to \emph{constant} with large probability. Specifically, we show that, if $1/2 \ge \varepsilon \ge p^{-1/3}$ and $\delta < 1$ is sufficiently large in $\Omega(e^{-O(\varepsilon^2p)})$, then there exists an algorithm that guarantees a backlog of three or less after each step with probability at least $1 - O\left(e^{-\Omega(\varepsilon^2p)}\right)$.

We further extend the guarantees of our randomized algorithm to consider arbitrary backlogs. Namely, if we add the restriction that $\delta \ge \frac{1}{\poly(p)}$, then our algorithm achieves backlog $O(k / \epsilon)$ after each step with probability at least $1 - O\left(e^{-e^{k}}\right)$. Additionally, in the case where $\epsilon$ is constant, we prove lower bounds establishing that our probabilistic guarantees give optimal bounds up to constant factors in the backlog $k$ and in $p$.

As in the single-processor case, we show that the \smoothefc is robust to the setting in which cups begin in a bad initial starting state. In particular, suppose $b$ units of water are maliciously placed into cups at the beginning of the game. Then the above guarantees for the \multicoresmoothefc continue to hold after any step $i \ge 2b / \delta$, except with an additional failure probability of $O\left(e^{-\Omega(i^{1/6})}\right)$.

\paragraph*{\large A relaxed game for deamortization: the \flushgame.} A common application of cup games is to the deamortization of data structures \cite{AmirFaId95,DietzRa91,DietzSl87,AmirFr14,Mortensen03,GoodrichPa13,FischerGa15,Kopelowitz12}. In this setting, a relaxation of the \singlecore, also known as the \defn{\flushgame}, is often used. In the \flushgame, the emptier is permitted to remove \emph{all} of the water from a cup at each step, rather than just a single unit. This is sometimes referred to as \defn{flushing} a cup.


As in the \singlecore, a greedy emptier---who always flushes a fullest cup---can keep the backlog to $O(\log n)$ \cite{DietzSl87}. Using a more sophisticated randomized algorithm, the emptier can maintain the backlog at $O(\log\log n)$ with high probability in $n$ after any given round of the game \cite{DietzRa91}. 

\textbf{Relationship with our results.}  The complicatedness of the randomized algorithm prompted the authors to pose finding a simpler algorithm as an open problem \cite{DietzRa91}. Since the \flushgame is a relaxation of the \singlecore, our randomized algorithm for the latter provides a resolution to this open problem. Our algorithm additionally improves on the prior work by offering probability bounds against any backlog $k$, rather than applying only to $k = \log \log n$.


In should be noted that, in general, the \flushgame can offer substantially more power to the emptier than do the scheduling games. Consider, for example, the ``multi-processor'' version of the \flushgame, in which the filler places $p$ units of water among cups and then the emptier flushes the water from $p$ cups. A solution to the ``multi-processor'' version follows trivially from the solution to the ``single-processor'' version, since the emptier can simply simulate the water poured by the filler as appearing in chunks of size one over the course of $p$ steps, and then perform the corresponding $p$ cup flushes (some of which may be to the same cup as each other). In contrast, the standard techniques for analyzing the \singlecore do not seem to generalize to the multi-processor case, and indeed the latter has remained open until now.  

\paragraph*{\large Other related work.} Adler et al. \cite{AdlerBeFr03}, who proved the $O(\log n)$ bound on the backlog in the \singlecore, also considered a relaxation of the \multicore in which the $p$ units of water removed by the emptier need not be from distinct cups, establishing a tight bound of $\Theta(\log n)$ backlog for this problem (without the use of clairvoyance or speed augmentation). As noted by \cite{BaruahCoPl96}, however, this relaxed version of the problem can be reduced to the single-processor version of the same question, and is thus primarily interesting from the perspective of lower bounds rather than upper bounds. Litman and Moran-Schein \cite{LitmanMo09} proved a formal separation between the relaxed version and the non-relaxed versions, establishing that a deadline-based algorithm (which is completely clairvoyant) does provably worse in the multi-processor setting when it is not permitted to run the same thread on multiple processors concurrently.

In the context of packet-switching, Bar-Noy et al. \cite{Bar-NoyFrLa02} considered a variant of the \singlecore in which the filler is unconstrained as to how much water can be placed in the cups at each step, but must always place water in integer amounts. Rather than providing absolute bounds on backlog, which would be impossible in this scenario, they show that the \efc achieves an $O(\log n)$ competitive ratio with the optimal offline emptying algorithm. Moreover, they show that no online algorithm, including randomized algorithms, can do better. Several of the same results were also discovered concurrently by Fleischer and Koga \cite{FleischerKo04}. Subsequent work has extended this to consider competitive ratios against other weaker adversaries \cite{DamaschkeZh05}.

Extensive work has also focused on variants of the cup games in which the rates of arrival are fixed \cite{BaruahCoPl96,GkasieniecKl17,BaruahGe95,LitmanMo11,LitmanMo05,MoirRa99,BarNi02}. That is, each cup $j$ receives the same amount $c_j$ from the emptier at each step $i$. Even in this simplified setting, constructing fast algorithms is highly non-trivial. Notably, by exploiting certain results from network flow theory, Baruah et al. \cite{BaruahCoPl96} present a polynomial-time algorithm (per step of the game) that the emptier can follow in order to guarantee a maximum backlog of $O(1)$ in the \singlecore with fixed rates.

Additionally, a number of papers have explored variants of the \multicore in which the cups form the nodes of a graph, and the constraints on the emptier are a function of the graph structure \cite{BodlaenderHuKu12,BenderFeKr15,BodlaenderHuWo11,ChrobakCsIm01}. This is important, for example, in the study of multiprocessor scheduling with conflicts between tasks \cite{BodlaenderHuWo11,ChrobakCsIm01}, and in the study of sensor radio networks \cite{BenderFeKr15}.

Recently, cup emptying games have also found applications to modeling memory-access heuristics in databases \cite{BenderCrCo18}. Here, whenever the emptier removes the water from a cup, the water is then redistributed according to some fixed probability distribution. The emptier's goal is ensure that, on average, the amount of water in the cup being emptied is large.

\section{Preliminaries}
\label{sec:prelim}

We describe formally the conventions that we will follow in discussing the cup games.

\smallskip

\noindent\textbf{The \singlecore.} A sequence of $n$ cups sit in a
line, initially empty. At each step in the game, a \defn{filler}
distributes (up to) $(1 - \varepsilon)$ units of water across the
cups. Then an \defn{emptier} selects one cup and removes (up to) one
unit of water from the cup. The goal of the emptier is to minimize the amount of water in the fullest cup after each step, also known as the \defn{backlog}.\footnote{Note that one could just as reasonably follow the convention that the filler places $1$ unit at each step, and the emptier removes $1 + \epsilon$ units (as was the case in Section \ref{sec:intro}). For $\epsilon \le 1/2$, the distinction between the conventions is aesthetic only, up to constant-factor changes in $\epsilon$.}

The following notation will be useful for analyzing the
\singlecore. For each $i \in \mathbb{N}$, and for each cup-number $j
\in [n]$, define $c_j(i)$ to be the amount poured into cup $j$ during
the $i$-th step. The quantities $c_j(i)$ are predetermined by the
filler at the start of the game (i.e., the filler is an oblivious
adversary), and are constrained to satisfy $\sum_{j \in [n]} c_j(i)
\le 1 - \varepsilon$. Let $f_j(i)$ denote the amount of water in cup
$j$ after step $i$. The backlog of the system after step $i$ can be
expressed as $\max_j f_j(i)$, the amount of water in the fullest cup.

\smallskip

\noindent\textbf{The \multicore.}  As in the
\singlecore, a sequence of $n$ cups sit in a line, 
initially empty. 
\mab{Why is there a sequence? Is the fact that there's a line important? 
I don't think so, so I believe that we should remove these words.}
At each step in the game, the filler distributes (up
to) $(1 - \varepsilon)p$ units of water among the cups, with no cup
receiving more than $1 - \delta$ units of water at a time. The emptier
then removes (up to) one unit of water from (up to) $p$ distinct
cups. As before, the goal of the emptier is to minimize the amount of
water in the fullest cup at any given step, also known as the
\defn{backlog}. As in the \singlecore, the filler is an oblivious
adversary.

We define the quantities $f_j(i)$ and $c_j(i)$ as in the \singlecore. The restrictions on the emptier translate to the
constraint that, for a given step $i$, we have $c_j(i) \le 1 - \delta$ for
each cup $j$, and $\sum_j c_j(i) \le (1 - \varepsilon) p$.

\smallskip

\noindent\textbf{The \tweakedemphmulticore.}  It is sometimes useful
to use an (aesthetic) variation on the \multicore, which we refer to
as the \defn{\tweakedemphmulticore}, in which the filler distributes
$(1 - \varepsilon)p$ units of water among the cups, placing no more
than one unit in any cup (rather than $1 - \delta$); and the emptier
then selects up to $p + 1$ (rather than $p$) distinct cups and removes up to $1 + 2\delta$ from each (rather than $1$).

For convenience of algorithm analysis, we use the
\tweakedemphmulticore when designing randomized algorithms for the
\multicore, and we use the \untweakedmulticore when analyzing
deterministic algorithms. All of our results are portable between two
versions of the game, with changes in $\epsilon$ and $\delta$ by
constant factors. See Appendix \ref{apptweaked} for more details.

\smallskip

\noindent\textbf{Bounding the number of random bits.} Many of our
algorithms assume the ability to select random real thresholds in the
range $[0, 1]$. In Appendix \ref{apprandombits} we describe how to use
low-precision thresholds in order to achieve the same guarantees,
thereby reducing the number of random bits to $\Theta(\log n)$ per
threshold.
\mab{This is a result, but it's only mentioned in \secref{prelim}. Shouldn't we at least mention it earlier when we discuss our results?}


\section{Technical Overview}
\label{sec:technical}

\paragraph*{A randomized algorithm for the \singlecore.} In \secref{singlecore}, we present a randomized algorithm for the \singlecore with the following guarantee: After any step $i$, and for any $k \ge \Omega(\log \epsilon^{-1})$, the emptier achieves backlog at most $O(k)$ with probability, i.e., probability at least $1 -O(2^{-2^k})$. When $\epsilon \ge O(1/\log n)$, it follows that with high probability in $n$, the backlog will not exceed $O(\log \log n)$.

The algorithm, which we call \defn{the \smoothefc}, works as follows: The emptier begins the game by artificially inserting into each cup $j$ a random quantity $r_j \in [0, 1]$. Then at the end of each step, the emptier follows the deterministic \efc of emptying from the fullest cup, except with a small but critical modification: If the fullest cup contains less than one unit of water, then the emptier does not remove any water at all. This modification ensures that the actions of the emptier do not affect the amount of water modulo one in each cup, a property which will play a critical role in the analysis of the algorithm.

To analyze the \smoothefc, we begin by revisiting the deterministic \efc in a new setting in which the number of cups can change dynamically: arbitrarily many new empty cups may be added by the filler at the beginning of each step, and cups are removed at the end of a step if they are empty. We call this the \defn{dynamic \singlecore}. We prove that if $n_i$ is the number of cups present at the end of a step $i$, then the backlog does not exceed $O(\log n_i)$.

The analysis of the \smoothefc reduces the game on $n$ cups to a dynamic game on far fewer cups. Call a cup \defn{active} if it contains one or more units of water. The \smoothefc can be seen as playing a virtual dynamic game on the active cups in which the virtual fill of each cup is the true fill minus one. In particular, cups enter and and leave the dynamic game based on whether they have positive virtual fill, and the emptier always removes water from the fullest cup. To bound the backlog by $O(k)$, it therefore suffices to bound the number of active cups by $2^k$.

The difficulty in bounding the number of active cups directly is that there may be a series of time steps during which the emptier removes water exclusively from cups containing more than two units of water, but during which the filler manages to increase the number of active cups significantly.

We instead bound a different quantity, which we call the \defn{integer fill}. The integer fill after step $i$ is $\sum_{j = 1}^n \lfloor f_j(i) \rfloor$, and can be thought of as the number of \defn{integer thresholds} crossed within the cups. This gives an upper bound for the number of active cups, since each active cup contributes at least 1 to the integer fill. On the other hand, the integer fill also interacts nicely with the emptier, because whenever the emptier removes water from a cup, the emptier decreases the integer fill by exactly $1$. 

This property ensures that, in order for the integer fill to exceed $2^k$ after a step $i$, the following must be true: There must be some $l > 0$ so that in the $l$ steps leading up to $i$, at least $2^k + l$ integer thresholds were crossed by the filler. We call any $l$ satisfying the latter property a \defn{backlog witness}. 

To bound the probability that  there exists a backlog witness $l > 0$, we begin by considering the simpler problem of bounding the number of integer thresholds crossed in a single step. Recall that in the \smoothefc, the emptier always removes water in integer quantities. Thus, the amount of water modulo 1 in a cup $j$ after step $i$ is 
\begin{equation}
r_j + \sum_{t = 1}^i c_j(t) \pmod 1,
\eqlabel{eqmod1isgood}
\end{equation}
regardless of the actions of the emptier.

Although the random quantity $r_j$ of water is inserted into the cup $j$ only once at the start of the game, it continues to randomize the amount of water modulo 1 at each step $i$. Remarkably, this makes the probability that cup $j$ crosses an integer threshold in step $i$ exactly $c_j(i)$. Moreover when one cup crosses thresholds is independent of when other cups do.

Naturally, the number of integer thresholds crossed in one step is not independent of previous steps; nonetheless, we can analyze $l$ consecutive steps together and express the number of integer thresholds crossed as a sum of independent indicator random variables. 
If $u_j$ water is placed in cup $j$ during the $l$ steps, then the cup simply contributes $\lfloor u_j \rfloor$ indicator random variables that are deterministically set to one, and one indicator random variable whose value is $1$ with probability equal to the fractional part of $u_j$.

The number of integer thresholds crossed by the filler in any sequence of $l$ consecutive steps is therefore a sum of independent indicator random variables with mean $(1 - \epsilon)l$. Applying a Chernoff bound and then summing over all $l > 0$, we deduce that the probability of there existing a backlog witness is $O(2^{-2^{\Omega(k)}})$.

\paragraph*{Analyzing the deterministic \efc in the \multicore.}
Past analyses of the \singlecore---which have relied on showing that a collection of $n$ invariants remain simultaneously true at the end of each step \cite{AdlerBeFr03,DietzSl87}---have so far failed to generalize to the \multicore. This difficulty has led authors to consider weaker variants of the game in which either the emptier is given clairvoyant abilities \cite{LitmanMo09}, or is permitted to remove as much water as it wants from individual cups \cite{AdlerBeFr03}. The sparsity of results in the multi-processor setting is not for lack of trying; even the earliest papers on the subject \cite{Liu69, BaruahCoPl96} explicitly acknowledge the difficulty of the problem.

In \secref{multicoredeterministic}, we prove bounds for the \multicore using the parallel \efc.
In particular, when $\epsilon \ge \Omega(1)$ and $\delta \ge \frac{1}{\poly(n)}$, we achieve a provably optimal bound of $O(\log n)$ on the backlog, matching the performance in the \singlecore. More generally, for any $\epsilon > 0$ and $\delta \ge \frac{1}{\poly(n)}$, our algorithm achieves backlog $O(\epsilon^{-1} \log n)$. In this discussion we will assume that $\epsilon \ge 1/n$, so that $\delta \ge \epsilon^{-1} / \poly(n)$, although this requirement is removed in the full proof.

Rather than building on the invariants-based approach taken by past authors \cite{AdlerBeFr03,DietzSl87}, we show that a surprisingly simple potential function can be used to bound the backlog. Define,
$$
\phi(i) = \sum_{i = 1}^n (1 + \epsilon)^{f_j(i)}.
$$

We show that $\phi(i) \le \poly(n)$ for all steps $i$. It follows that no $f_j(i)$ can ever exceed $O\left(\frac{1}{\epsilon} \log n\right)$.

Fix $n^c$ to be a sufficiently large polynomial. To bound $\phi(i)$, there are three cases:

\noindent \emph{Case 1: The emptier removes a full unit of water from
  each of $p$ cups during step $i$:} In this case, the water placed by
  the filler in step $i$ can be matched with the water removed by the
  emptier in a way so that each small amount of water poured is
  matched with at least $(1 + \epsilon)$ times as much water that is
  removed from a height no more than one smaller. Since
  $$\frac{d}{dx}\left((1 + \epsilon)^{x}\right) = (1 + \epsilon) \cdot
  \frac{d}{dx}\left((1 + \epsilon)^{x - 1}\right),$$ it follows that
  the net change in potential during the step is non-positive.

\noindent \emph{Case 2: For all cups $j$, $f_j(i) \le \log_{1 + \epsilon} n^{c - 1}$.} In this case, we trivially
have that $\phi(i) \le n^c$, since each cup contributes at most $n^{c - 1}$ to $\phi(i)$.

\noindent \emph{Case 3: After step $i$, at least one cup contains
  $\log_{1 + \epsilon} n^{c - 1}$ or more water, but during step $i$
  there are never more than $p - 1$ cups containing one or more units
  of water.} This case is the most subtle. The key insight is that for
  each of the cups $c_1, \ldots, c_t$ that contain more than one unit
  of water at the end of step $i$, all of the water that was placed
  into the cup during the step was subsequently removed. Moreover, at
  least $\delta$ additional water is successfully removed from each
  of the cups $c_1, \ldots, c_t$, bringing their total contribution to
  the potential down by a factor of $(1 + \epsilon)^{\delta} = 1 +
  \Theta(\epsilon \cdot \delta)$. Since at least one of the cups $c_1,
  \ldots, c_t$ contains $\log_{1 + \epsilon} n^{c - 1}$ units of
  water, this drop in potential is at least $n^{c - 1} \cdot
  \Theta(\epsilon \cdot \delta)$. Using that $\delta \ge \epsilon^{-1} /
  \poly(n)$, if $c$ large enough, the drop in potential is at least
  $2n$. On the other hand, the total contribution to $\phi(i)$ by all
  other cups at the end of the step is at most $(1 + \epsilon) \cdot n
  \le 2n$. Thus, we get that $\phi(i) \le \phi(i - 1)$.

\vspace{.3 cm}

The simplicity of the potential-function argument makes \secref{multicoredeterministic} among the shortest in the paper. Nonetheless, due to the history of the problem, we consider it to be among our main results. It is also worth remarking that the analysis above can be used to reprove a number of classical results for cup games, such as the fact that the \flushgame achieves backlog $O(\log n)$. In fact, we can prove a stronger statement, which is that the backlog remains $O(\log n)$ even if the emptier's approach is relaxed to always empty out of a cup that is within an additive constant of being fullest.

\paragraph*{A randomized algorithm for the \multicore.} 
In \secref{multicoreconstantbacklog}, we present a randomized
algorithm for the \tweakedmulticore. Specifically, we show that, if
$1/2 \ge \varepsilon \ge p^{-1/3}$ and $\delta < 1$ is sufficiently
large in $\Omega(e^{-O(\varepsilon^2p)})$, then there exists an
algorithm that guarantees the backlog is three or less after each step
with probability at least $1 -
O\left(e^{-\Omega(\varepsilon^2p)}\right)$. In particular, when
$\varepsilon$ is constant and the number of processors $p$ is
relatively large, this brings the probability of super-constant
backlog down to exponentially small in $p$. In contrast, in the
single-processor case, no bounds on constant backlog can be achieved
with better than constant probability.

Because this is the most technically difficult result in the paper, we give a somewhat higher level overview. As in the \singlecore, we once again apply our technique of using randomized threshold crossings as an accounting scheme to track the emptier's progress. However, the interactions between multiple cups being emptied simultaneously makes both the instantiation of the thresholds and the analysis of the algorithm much more delicate. 

The random thresholds are defined within the cups so that whenever $c_j(i)$ new units of water are placed in a cup $j$, at most one threshold is crossed, and the probability of a threshold being crossed is at most $c_j(i)$. Moreover, whenever a threshold is crossed, the cup $j$ is guaranteed to have enough water (i.e., $1 + \delta$ units) that the emptier can make close to their maximum allowable withdrawal (of $1 + 2\delta$ units) from the cup. Whenever $p$ or fewer thresholds are crossed during a step, the emptier always has the option of removing water from the cups in which thresholds were crossed. When this happens, we regard the emptier as having been more successful during the step than was the filler.

On the other hand, if $k > p$ thresholds are crossed during a step $i$, then we define the number of \defn{surplus thresholds} crossed as $T_i = k - p$. When the number of processors $p$ is large, the probability of there being \emph{any} surplus threshold crossings for a given step $i$ becomes small. In particular, since the total amount of water $\sum_j c_j(i)$ placed by the filler at each step $i$ is at most $(1 - \epsilon) p$, and since threshold crossings are independent between cups, we can apply a Chernoff bound to bound $\Pr[T_i > 0] \le e^{-\Omega(\epsilon^2 p)}$.

The key to analyzing the algorithm after step $i$ is to show that with large probability, all $l > 0$ satisfy the property that $\frac{\delta}{2} \cdot l$ or fewer surplus threshold crossings occurred in the $l$ steps leading up to step $i$. In particular, the emptier's algorithm is designed to make progress roughly $\delta$ at each step towards undoing any damage caused by past surplus threshold crossings; if all $l > 0$ satisfy the aforementioned property, then we are guaranteed after step $i$ that the emptier has successfully handled all surplus thresholds from the past.

The technical challenge becomes to prove a concentration bound on the number $\sum_{m = i - l + 1}^i T_m$ of surplus threshold crossings within a sequence of $l$ steps. This is made difficult by the fact that there exists a complicated chain of dependencies between different $T_m$'s. In particular, the fact that a given $T_m$ is greater than zero cannot be attributed to any single cup. Rather, it requires that at least $p + 1$ cups all have thresholds crossed simultaneously in step $m$. Which $p + 1$ cups these are can then greatly impact for other steps $m'$ the probability that $T_{m'}$ is greater than zero.

To prove a concentration bound we use McDiarmid's inequality (i.e., Azuma's inequality applied to Doob Martingales) \cite{McDiarmid89, AlonSp04}. McDiarmid's inequality says that if a random variable $Y$ can be expressed as a function $F(X_1, \ldots, X_t)$ of independent random variables $X_1, \ldots X_t$, and if each $X_j$'s value can affect $F(X_1, \ldots, X_t)$ by at most some bounded amount $c$, then a Chernoff-style concentration bound holds for $Y$:
$$\Pr[Y \ge \E[Y] + R] \le e^{-2R^2 / (c^2 t)}.$$

In order to make the algorithm more conducive to an application of McDiarmid's inequality, a natural way to design the thresholds in our algorithm would be to have the thresholds be constantly changing within each cup $j$. In particular, one could select a random threshold $r_j^s \in [s, s + 1]$ for each integer $s$, and say that a threshold is crossed within cup $j$ whenever the total amount of water ever poured into the cup passes a threshold $r_j^s$. Unfortunately, such a design would allow for two thresholds in a single cup to potentially be crossed in the same step, making it no longer be the case that the first $p$ threshold-crossings in each step are easily handled by the emptier.

To avoid this problem, we use a more nuanced layout of thresholds, in which gaps are placed between thresholds that differ in their values modulo 1. In order so that these gaps are not too frequent (which would then make it so that the threshold crossings no longer closely tracked the amount of water in the cups), thresholds are then placed in collections of size $1/\delta$, with each collection of thresholds using the same random offset modulo 1, and with consecutive collections being separated by a gap of size one. We define a \defn{threshold collection} to be a set of thresholds that use the same random offset modulo 1, and we define the \defn{random value} of the collection to be that random offset. We define the \defn{hitting capacity} of a threshold collection in cup $j$ during steps $i - l + 1, \ldots, i$ to be the expected number of thresholds (depending on the random value of the collection) that the water poured during those steps will have crossed in the collection.

With thresholds designed in this manner, a first attempt at applying McDiarmid's inequality might go as follows: Notice that $Y = \sum_{m = i - l + 1}^i T_m$ can be written as a function $F$ of the random values of each of the threshold collections with non-zero hitting capacities. As a loose bound, there are at most $n + lp$ such threshold collections, since in order for a cup $j$ to have $h > 1$ such threshold collections, at least $(h - 1)$ units of water must have been poured into the cup during the steps $i - l + 1, \ldots, i$ (that way the water could cross the threshold-less gap between the collections). Moreover, the random value of each threshold collection can affect $Y$ by at most $\delta^{-1}$, since the random value controls in total when at most $\delta^{-1}$ different threshold crossings occur. Thus we can apply McDiarmid's inequality with $c = \delta^{-1}$ and $t = n + lp$.

This na\"\i{}ve approach only gives a useful bound when $lp \ge n$. To handle smaller $l$, a more sophisticated argument is necessary.  We express $Y$ as a function of random variables $X_1, \ldots, X_{pl + 1}$ where each $X_s$ corresponds with the random values of some set of threshold collections rather than with just a single random value. That is, we partition the threshold collections into a partition $Q_1, \ldots, Q_{pl + 1}$, and then define for each component $Q_s$ a random variable $X_s$ that reveals the random values of the threshold collections in the component $Q_s$. The key is to design the partition $Q_1, \ldots, Q_{pl + 1}$ so McDiarmid's inequality can subsequently be applied. This is accomplished by assigning threshold collections that are likely to have substantial affect on $Y$ to their own components, and then grouping together large sets of threshold collections that are not expected to have substantial affect on $Y$ into other components. We show that with a large probability, no individual $X_s$ has substantial affect on the outcome of $Y$, allowing us to obtain the desired concentration bound via McDiarmid's inequality.

\paragraph*{The \multicoresmoothefc.} In the algorithm described
above, whenever there are multiple cups containing super-constant
amounts of water, the emptier may select which of them to empty out of
arbitrarily. We define the \defn{\multicoresmoothefc} to be the
strategy in which, when deciding between cups of equal priority, the
emptier always chooses the fullest cups.

Adding a new condition that $\delta \ge \frac{1}{\poly(p)}$, we extend
the analysis of the randomized algorithm for the \multicore to
consider arbitrary backlogs. Specifically, in Section
\ref{sec:multicoresmoothefc}, we show that after any step $i$ and for
any $k > 1$, the \multicoresmoothefc achieves backlog $O(k /
\epsilon)$ with probability at least $1 - O\left(e^{-e^{k}}\right)$.

The analysis combines ideas from each of the previous sections. We
consider a dynamic version of the \multicore being played on the cups
containing $4 + \delta$ or more units of water. The number of cups
involved in the game at the end of a step $i$ is upper bounded by the
number of surplus threshold crossings that have been crossed in the
past without being subsequently accounted for by the
emptier. Concentration bounds for the number of cups in the dynamic
game can thus be proven using extensions of the ideas from
\secref{multicoreconstantbacklog}. Combining this with an analysis of
the deterministic \efc for the dynamic \multicore, we obtain the
desired bound on backlog.

\paragraph*{Matching lower bounds. }In Section
\ref{sec:lowerbounds}, we show that when $\epsilon$ is constant, the
bounds presented above for the \smoothefc, the multi-processor \efc,
and the \multicoresmoothefc are asymptotically optimal, in the sense
that no algorithms can achieve better bounds, up to constant factors
in the backlog and in $p$.

The strategy that the filler follows is a variant on the
strategy used by past authors in the \singlecore
\cite{BenderFeKr15, DietzRa91}. In particular, in the $p$-processor
setting, the filler uses the following algorithm: First
distribute $(1 -\epsilon)p$ units of water among some number $m$
of cups. Next attempt to guess which $p$ cups the emptier removes water
from, and then distribute $(1 - \epsilon)p$ units of additional water
among the remaining $m - p$ cups. Again attempt to guess which cups
have water removed from them, and continue like this until there are
$p$ or fewer cups remaining. In the event that the filler has made all
their guesses correctly, the fullest cup will have
\begin{align*}
  & p \cdot (1 - \epsilon) \cdot \left(\frac{1}{m} + \frac{1}{m - p} + \frac{1}{m - 2p} + \cdots \right) \\
  & = \Theta\left(\frac{1}{m/p} + \frac{1}{m/p - 1} + \frac{1}{m/p - 2} + \cdots \right)  = \Theta\left(\log \frac{m}{p}\right).
\end{align*}
Each of our lower bounds are proven simply by analyzing the success
probabilities of variants of this strategy for various parameters $m$
and $p$.

\paragraph*{Recovering from bad starting states.} In 
\secref{recovery}, we revisit both the \smoothefc and the \multicoresmoothefc in the situation where the initial state of the cups is no longer
empty. In particular, suppose that $b$ units of water have already
been dispersed among the cups arbitrarily before the game begins. We
show that in both games the system can quickly recover from such a
starting state.

For the \singlecore, for $i > \frac{b}{\varepsilon}$, the \smoothefc maintains the same guarantees as if the initial states of the cups
were empty. The key insight is that there must be some step $i' \le i$
after which the integer fill is zero; using this fact, the analysis of the \smoothefc at step $i$ can then be applied without modification. To
prove the existence of such an $i'$, notice that for each step, either
the integer fill is zero at the end of the step, or the emptier must
have successfully removed a full unit of water during the step. In the
latter case, the total amount of water in the system will have decreased
by at least $\delta$. This implies that the integer fill cannot be
non-zero at the end of each of the steps $1, \ldots, i$ for any $i >
\frac{b}{\varepsilon}$, as desired.

For the \multicore, the guarantees previously proven for the
\multicoresmoothefc continue to hold after a step $i \ge 2b / \delta$
with an additional failure probability of
$O\left(e^{-\Omega(i^{1/6})}\right)$. This corresponds with the
probability that the emptier fails to achieve a state in which every
cup is nearly empty during any of the steps $i' \le i$.


\section{A Smoothed Algorithm for the \SingleCore}

\seclabel{singlecore}

In the \defn{\efc} for the \singlecore, the emptier removes water from the fullest cup at each step. It was proven by Adler et al. \cite{AdlerBeFr03} that the \efc achieves backlog $O(\log n)$ at all times, independently of $\varepsilon \le 1/2$, and the lower bound \cite{BenderFeKr15, DietzRa91} shows that no deterministic algorithm can do better, regardless of the choice of $\varepsilon$. In this section, we consider the randomized version of the same problem, and present a randomized version of the \efc that achieves significantly better bounds.

\noindent\textbf{The Algorithm.}  At the start of the game, the emptier selects random \defn{starting states} $r_1, \ldots, r_n \in (0, 1]$, and places $r_j$ water in each cup $j$. (This can, of course, be simulated rather than actually performed.) Then, after each step of the algorithm, the emptier checks whether any cups have one or more units of water. If every cup has fill less than one, then the emptier does nothing. Otherwise, the emptier finds the fullest cup, and removes one unit of water from that cup.

We call our algorithm the \defn{\smoothefc}, because the algorithm can be seen as performing a variant of the deterministic \efc, except with the initial state of the cups smoothed so that each cup has a random quantity of water between zero and one. Surprisingly, this smoothing, which is only performed once at the beginning of the algorithm, results in significantly stronger bounds on the backlog at step $i$ for all $i \in \mathbb{N}$.

\begin{thm}
Suppose $\varepsilon \le 1/2$. Fix $i \in \mathbb{N}$, and consider some value $k \ge 0$ such that $3\log \frac{1}{\varepsilon} \le k$. Then the probability that the \smoothefc has backlog $k$ or greater after step $i$ is at most
$$O\left(\frac{1}{2^{2^{\Omega(k)}}} \right).$$

\label{thmloglog}
\end{thm}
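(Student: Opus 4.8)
The plan is to reduce the randomized analysis to a deterministic analysis of a \emph{dynamic} variant of the single-processor cup game, and then to control the size of that dynamic game by a concentration argument on integer-threshold crossings. Throughout write $P_j(m) := r_j + \sum_{t \le m} c_j(t)$.

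\textbf{Step 1: reduction to the dynamic game.} Because the smoothed greedy emptier removes water only in integer amounts, the fill of cup $j$ modulo $1$ after step $i$ equals $P_j(i) \bmod 1$, regardless of the emptier. Call a cup \emph{active} after step $i$ if it holds at least one unit, and let $n_i$ be the number of active cups after step $i$. The emptier is then exactly running the deterministic greedy algorithm on the dynamic single-processor cup game whose cups are the active ones, whose fills are the \emph{virtual fills} $f_j(i) - 1$, in which a new empty cup appears whenever a cup becomes active and a cup leaves when its virtual fill returns to $0$. Using the deterministic fact that this dynamic game has backlog at most $c_0 \log_2 n_i$ after step $i$ for an absolute constant $c_0$, and noting that whenever the backlog is at least $1$ the virtual backlog equals the backlog minus $1$, any step after which the backlog is $\ge k$ must satisfy $n_i \ge 2^{(k-1)/c_0} =: M$. (We may assume $k \ge 1$, since otherwise the claimed bound is $\Theta(1)$.) It therefore suffices to bound $\Pr[n_i \ge M]$.

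\textbf{Step 2: integer fill as a reflected walk.} Let $\Phi_m := \sum_j \lfloor f_j(m) \rfloor$ be the integer fill and $t_m := \sum_j \bigl(\lfloor P_j(m)\rfloor - \lfloor P_j(m-1)\rfloor\bigr)$ the number of integer thresholds the filler crosses during step $m$. Each active cup contributes at least $1$ to $\Phi_m$, so $n_m \le \Phi_m$. The filler raises $\Phi$ by exactly $t_m$ during step $m$, while the emptier lowers $\Phi$ by exactly $1$ if some cup is active and by $0$ otherwise; hence $\Phi_m = \max(\Phi_{m-1} + t_m - 1,\,0)$ with $\Phi_0 = 0$. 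Unrolling this Lindley-type recursion yields the identity $\Phi_i = \max_{0 \le l \le i}\bigl(\sum_{m = i-l+1}^{i} t_m - l\bigr)$, so $\Phi_i \ge M$ if and only if some $l > 0$ is a \emph{backlog witness}, meaning $\sum_{m=i-l+1}^{i} t_m \ge M + l$.

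\textbf{Step 3: concentration and union bound over windows.} Fix an $l$-step window and let $u_j$ be the water poured into cup $j$ during it. Because $r_j$ is uniform on $(0,1]$, cup $j$'s contribution to the window's threshold count $\sum_m t_m$ is $\lfloor u_j\rfloor$ deterministically plus a single indicator equal to $1$ with probability exactly $\{u_j\}$, and these contributions are independent across $j$. Thus $\sum_m t_m$ is a sum of independent $[0,1]$-valued random variables with mean $\sum_j u_j \le (1-\epsilon)l$, and a multiplicative Chernoff bound gives $\Pr\bigl[\sum_m t_m \ge M + l\bigr] \le \exp(-\Omega(M))$ for $l = O(M)$ and $\le \exp\bigl(-\Omega((M+\epsilon l)^2/l)\bigr)$ for larger $l$ (the deviation from the mean is at least $M + \epsilon l$). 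Summing over all $l > 0$ --- the dominant terms occur near $l \approx M/\epsilon$ --- we get $\Pr[\exists\ \text{backlog witness}] \le \exp\bigl(O(\log(M/\epsilon)) - \Omega(\epsilon M)\bigr)$. The hypothesis $3\log\tfrac1\epsilon \le k$ forces $\epsilon \ge 2^{-k/3}$, so $\epsilon M \ge 2^{(k-1)/c_0 - k/3} = 2^{\Omega(k)}$ provided $c_0 < 3$, whereas $\log(M/\epsilon) = O(k)$; the exponential term dominates and the whole probability collapses to $2^{-2^{\Omega(k)}}$, as claimed.

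\textbf{Expected main obstacle.} The concentration step is routine once the independence structure is exposed, and the sum over window lengths converges because the exponent grows linearly in $l$ once $l \gg M/\epsilon$. The genuinely delicate ingredients are: (a) the self-contained deterministic analysis of the \emph{dynamic} single-processor game --- showing that allowing the adversary to inject empty cups on the fly does not spoil the $O(\log n_i)$ bound, which requires re-deriving a form of the invariant-based argument of Adler et al.\ \cite{AdlerBeFr03} for a cup set that changes between steps; and (b) aligning constants so that the hypothesis bites, i.e.\ that the ``$3$'' in $3\log\frac1\epsilon \le k$ beats the constant $c_0$ coming out of the dynamic-game analysis (concretely, we need $1/c_0 > 1/3$). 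It is exactly the smoothing rule ``do nothing unless a cup is full'' that makes the mod-$1$ invariance of Step 1 hold, which in turn is what makes the threshold-crossing probability in Step 3 equal to $\{u_j\}$ on the nose.
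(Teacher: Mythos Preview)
Your proposal is correct and follows essentially the same route as the paper: reduce to the dynamic single-processor game on the active cups (the paper's Lemma~\ref{lemdynamiccups}), upper-bound the number of active cups by the integer fill, express the integer fill via the ``last time it was zero'' (your Lindley representation is the same device), and control the threshold crossings in any window by writing them as $\sum_j \lfloor u_j\rfloor$ plus independent Bernoulli$(\{u_j\})$ terms (the paper's Lemma~\ref{lemconcentratedthresholds}) followed by Chernoff and a union bound over window lengths (the paper's Lemma~\ref{lemboundedthresholds}). The only substantive difference is bookkeeping: the paper takes the cruder tail bound $e^{-\Omega(K^{1/3})}$ on the integer fill and proves backlog $\le O(k)$, absorbing the dynamic-game constant into the $\Omega(k)$ in the exponent, whereas you keep the sharper $e^{-\Omega(\epsilon M)}$ and must then check $c_0<3$; since the explicit harmonic-sum bound in the dynamic-game lemma gives backlog $\le H_{n_i}\le 1+\ln n_i$, one has $c_0=\ln 2<1$, so your flagged obstacle (b) is not an issue.
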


By Theorem \ref{thmloglog}, for every step $i$, with high probability in $n$ the backlog does not exceed $O(\log \log n)$. Additionally, we will see that the \smoothefc maintains the bound of the deterministic \efc, guaranteeing deterministically that the backlog never exceeds $O(\log n)$.

\noindent\textbf{Analysis of Algorithm. }In order to prove Theorem \ref{thmloglog}, we begin by analyzing the deterministic \efc in a new setting. Consider a variant of the problem in which cups can be added and removed at each step. In particular, at the beginning of any step, arbitrarily many new cups may be added to the system, each initially containing no water. During the step, the filler must then place a positive amount of water in each of the new cups. A cup is removed from the system whenever it is completely emptied by the emptier. We call this variant of the game the \defn{dynamic \singlecore}. The following lemma extends the standard analysis of the deterministic \efc to this new game:

\begin{lem}
Consider the dynamic \singlecore, and suppose the emptier follows the \efc. If $m$ is the number of cups present after the $i$-th step in the game, then the fullest cup after the $i$-th step contains at most $O(\log m)$ water.
\label{lemdynamiccups}
\end{lem}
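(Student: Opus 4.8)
The plan is to adapt the classical invariant-based analysis of the \efc for the static \singlecore (following Adler et al.~\cite{AdlerBeFr03}) to the dynamic setting. For the state after any step, sort the fills as $f_{(1)} \ge f_{(2)} \ge \cdots \ge f_{(m)}$, where $m$ is the current number of cups, and write $S_k = \sum_{j=1}^{k} f_{(j)}$ for the total fill of the $k$ fullest cups. I maintain as an invariant, after every step, the family of bounds
$$S_k \;\le\; g_m(k) \;:=\; k\left(1 + \ln\tfrac{m}{k}\right) \qquad \text{for all } 1 \le k \le m.$$
Since $g_m(1) = 1 + \ln m$, the case $k=1$ gives $f_{(1)} \le 1 + \ln m = O(\log m)$, which is exactly the lemma. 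So it remains to check that the invariant survives each of the four operations in a step: (i) inserting new empty cups, (ii) the filler's pour, (iii) the greedy empty of the fullest cup, and (iv) deleting the cups that the emptier drove to $0$.

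Operations (i) and (iv) are the new ingredients, but both are easy. Inserting $a$ empty cups raises $m$ while leaving every $f_{(j)}$ unchanged; since $g_m(k)$ is increasing in $m$ and satisfies $g_m(k) \ge k$ for $k \le m$, the bounds for the old ranks still hold, and each new rank $k > m_{\mathrm{old}}$ satisfies $S_k = S_{m_{\mathrm{old}}} \le m_{\mathrm{old}} < k \le g_{m}(k)$. For (iv), observe first that at most one cup is deleted per step: the emptier touches a single cup, every pre-existing cup was nonempty after the previous deletion phase, and the filler is required to place positive water in every newly-inserted cup. Moreover, that one cup reaches $0$ only when it was the fullest cup and had fill at most $1$ before the empty --- and in that case \emph{every} cup has fill at most $1$, so $S_k \le k \le g_{m'}(k)$ holds trivially for the reduced count $m'$. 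Thus a deletion never endangers the invariant.

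The heart of the argument is the classical fact that operations (ii) and (iii) together preserve the invariant. A pour of total mass at most $1 - \epsilon \le 1$ raises each $S_k$ by at most $1$: for any fixed size-$k$ set $S$, the quantity $\sum_{j \in S} f_j$ grows by at most the mass poured into $S$, hence by at most $1$; now take the maximum over $S$. This leaves the ``damaged'' bound $S_k \le g_m(k) + 1$. To see that the greedy empty repairs it, let $X$ be the fullest cup (post-pour fill $h$), fix $k$, and let $T$ be the set of the $k$ fullest cups \emph{after} the empty. If $X \in T$, then $S_k$ after the empty is at most (its value before the pour) $+1 - \min(1,h)$, which is $\le g_m(k)$ when $h \ge 1$, and when $h < 1$ all post-pour fills lie below $1$ so $S_k \le k \le g_m(k)$ anyway. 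If $X \notin T$, then in post-pour order $T = \{(2),\dots,(k{+}1)\}$, so $S_k$ after the empty equals $P - h$, where $P$ is the post-pour total fill of the $k{+}1$ fullest cups; since $h$ is the maximum post-pour fill it is at least the average $P/(k{+}1)$, giving $S_k \le P\bigl(1 - \tfrac{1}{k+1}\bigr) \le \bigl(g_m(k{+}1)+1\bigr)\tfrac{k}{k+1}$, and a short computation using $\ln(1+\tfrac1k) \ge \tfrac{1}{k+1}$ shows the right-hand side is at most $g_m(k)$. (For $k = m$ the case $X \notin T$ cannot occur, so only the first case is needed.)

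I expect the main obstacle to be bookkeeping rather than any new idea: one must fix the precise constant inside $g_m(k)$ so that the repair step closes simultaneously for \emph{all} ranks $k$ in the $X \notin T$ case (this is where $\bigl(g_m(k{+}1)+1\bigr)\tfrac{k}{k+1} \le g_m(k)$ is used), ensure the invariant is genuinely self-sustaining --- restored exactly to $g_m(k)$, never accumulating slack across steps --- and dispatch the boundary cases ($k = m$, an empty system, and ties in the sorted order) carefully. None of these is deep, and once the static maintenance argument is in hand, operations (i) and (iv) slot in essentially for free.
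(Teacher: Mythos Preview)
Your proposal is correct and follows essentially the same approach as the paper's proof: both maintain an invariant bounding the sum (equivalently, average) of the $k$ fullest fills for every $k$, and the case split ``emptied cup stays in the top $k$'' versus ``drops out'' is identical to the paper's Cases~2 and~3. The only cosmetic differences are that the paper uses the harmonic-sum bound $\av_i(j)\le 1+\sum_{r=j+1}^{n_i}1/r$ in place of your continuous $g_m(k)=k(1+\ln(m/k))$, and it folds your four sub-operations into a single step-level induction rather than treating insertion and deletion separately.
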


We defer the proof of Lemma \ref{lemdynamiccups} to Appendix \ref{appdynamic}.

Call a cup \defn{active} if the cup contains more than one unit of water, and \defn{inactive} otherwise. The \smoothefc essentially plays an instance of the dynamic \singlecore on the set of cups which are active. In particular, whenever a cup becomes active, it is because the filler has placed water overflowing the cup to contain more than one unit of water, and the only way a cup can become inactive is for the fill of the cup to be reduced to one or less. Since the \smoothefc always removes one unit of water from the fullest active cup, this corresponds to an instance of the dynamic \singlecore, where the cups in the new game are the active cups in the original game, and the fill of each cup in the new game is one less than the fill of the cup in the original game. Applying Lemma \ref{lemdynamiccups}, it follows that no cup contains more than $O(\log m)$ water, where $m$ is the number of active cups after a particular time step.

For the rest of the section, consider a fixed time step $i$. In order to bound the backlog after step $i$, it suffices to bound the number of active cups.

The difficulty in bounding the number of active cups directly is that there may be a series of time steps during which the emptier removes water exclusively from cups containing more than two units of water, thereby failing to make any active cups inactive, but the filler manages to increase the number of active cups significantly. In order to circumvent this issue, we instead bound a different quantity:

\begin{defn}
The \defn{integer fill} of the cups after a step $k$ is given by $\sum_{j \in [n]} \lfloor f_j(k) \rfloor$. We may also refer to the integer fill as the number of \defn{integer thresholds} passed after step $i$.
\end{defn}

Since each active cup passes at least one integer threshold, the integer fill is always at least as large as the number of active cups. Moreover, the integer fill is easier to bound due to the fact that the emptier always reduces the integer fill by one at the end of each step, unless the integer fill was already zero. The filler, on the other hand, may get lucky on some time steps and manage to cross a large number of integer thresholds using the $1 - \varepsilon$ units of water they pour. On average, though, the filler should not expect to cross more than $1 - \varepsilon$ thresholds per time step.

In order to bound the integer fill after the $i$-th step, we begin by proving a concentration bound for the number of integer thresholds crossed in any given sequence of $t$ consecutive steps. Note that consecutive steps are not independent. For example, if the filler places $1/2$ a unit of water in cup $j$ during a step, and then another $1/2$ a unit of water in cup $j$ during the next step, then exactly one of the two steps will cross an integer threshold in cup $j$. Despite this, it turns out that the number of thresholds crossed during any $t$ consecutive steps can still be expressed as a sum of independent zero-one random variables:

\begin{lem}
Consider a sequence of $t$ consecutive steps in the cup game in which the emptier follows the \smoothefc. Then the number of integer thresholds crossed during the $t$ steps can be expressed as a sum of independent zero-one random variables with total mean at most $(1 - \varepsilon) t$.
\label{lemconcentratedthresholds}
\end{lem}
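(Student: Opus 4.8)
The plan is to fix a sequence of $t$ consecutive steps, say steps $a+1, \ldots, a+t$, and decompose the total count of integer thresholds crossed during these steps cup-by-cup. The starting observation is the one highlighted in the technical overview: because the \smoothefc only ever removes water in integer amounts (exactly one unit, or zero), the emptier's actions never change the amount of water \emph{modulo one} in any cup. Hence the fractional part of the fill of cup $j$ after step $m$ equals $\left(r_j + \sum_{s=1}^{m} c_j(s)\right) \bmod 1$, as recorded in \eqref{eqmod1isgood}. Consequently the number of integer thresholds cup $j$ crosses during steps $a+1,\ldots,a+t$ depends only on (i) the fractional part $\rho_j := \left(r_j + \sum_{s=1}^{a} c_j(s)\right)\bmod 1$, which is the fractional part of $j$'s fill \emph{before} the window starts, and (ii) the deterministic quantity $u_j := \sum_{m=a+1}^{a+t} c_j(m)$ of water poured into $j$ during the window. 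Specifically, cup $j$ crosses exactly $\lfloor u_j \rfloor$ thresholds with certainty, plus one additional threshold if and only if the fractional part $\rho_j$, when advanced by the fractional part of $u_j$, wraps past an integer --- i.e.\ iff $\rho_j + (u_j \bmod 1) \ge 1$ (interpreting the boundary case consistently with how the game defines a crossing).

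First I would argue that each $\rho_j$ is uniformly distributed on $[0,1)$: since $r_j$ is uniform on $(0,1]$ and the shift by $\sum_{s=1}^a c_j(s)$ is a fixed real, $r_j$ plus that shift, taken mod $1$, is again uniform. Moreover the $\rho_j$ are mutually independent, because they are functions of the independent starting states $r_1,\ldots,r_n$ (one per cup), and the $c_j(\cdot)$ are deterministic constants chosen in advance by the oblivious filler. Therefore, for each cup $j$, define the indicator $Y_j := \mathbf{1}[\rho_j + (u_j \bmod 1) \ge 1]$; this is a zero-one random variable with $\Pr[Y_j = 1] = u_j \bmod 1$ (the fractional part of $u_j$), by uniformity of $\rho_j$. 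Also let $W_j := \lfloor u_j \rfloor$, a deterministic zero-one sum --- i.e.\ $W_j$ further decomposes into $\lfloor u_j\rfloor$ indicator variables each deterministically equal to $1$. The total number of integer thresholds crossed during the window is then $\sum_{j\in[n]} \left(W_j + Y_j\right)$, a sum of independent zero-one random variables (the $Y_j$'s independent across cups, the $W_j$-summands constant).

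It remains to bound the mean. The expected count is $\sum_j \left(\lfloor u_j\rfloor + (u_j \bmod 1)\right) = \sum_j u_j = \sum_j \sum_{m=a+1}^{a+t} c_j(m) = \sum_{m=a+1}^{a+t} \sum_j c_j(m)$. By the filler constraint $\sum_j c_j(m) \le 1-\varepsilon$ for every step $m$, this is at most $(1-\varepsilon)t$, as claimed. The main thing to be careful about --- and the only place where a subtlety lurks --- is the boundary behavior when $\rho_j + (u_j \bmod 1)$ equals exactly $1$, and more generally a consistent convention for whether landing exactly on an integer counts as ``crossing'' it; I would pin this down once at the start (matching the convention implicit in the definition of integer fill, namely that a cup at fill exactly $m$ has $\lfloor f_j \rfloor = m$), and check that with that convention the per-cup decomposition above is exact and the crossing probability is still $u_j \bmod 1$ (the event $\rho_j = 0$, on which the convention could matter, has probability zero since $r_j$ is drawn from $(0,1]$ and shifted, so it is harmless). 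Everything else is bookkeeping, so I do not anticipate a real obstacle.
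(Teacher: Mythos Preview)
Your proposal is correct and follows essentially the same argument as the paper: both fix a window of $t$ steps, use that the emptier only removes integer amounts so the fractional fill of cup $j$ is $(r_j+\text{cumulative pour})\bmod 1$, and decompose the per-cup crossing count as $\lfloor u_j\rfloor$ deterministic ones plus a single Bernoulli indicator with success probability equal to the fractional part of $u_j$, yielding independent zero-one summands with total mean $\sum_j u_j\le (1-\varepsilon)t$. Your treatment of the boundary case is a small addition the paper omits, but otherwise the two proofs coincide.
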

\begin{proof}
Let $l, l + 1, \ldots, l + t - 1$ be the indices of the steps being considered. For any cup $j$, let $a_j$ denote the total amount of water placed in the cup by the filler prior to the $l$-th step, and let $b_j$ denote the amount of the water placed in the cup during steps $l, \ldots, l + t - 1$. That is, $a_j = \sum_{s < l} c_j(l)$ and $b_j = \sum_{l \le s < l + t} c_j(l)$. Recalling that $r_j$ units of water are poured into each cup $j$ at the beginning of the game, the total amount of water ever placed in the cup prior to step $l$ is $r_j + a_j$, and the total amount of water ever placed in the cup prior to step $l + t$ is $r_j + a_j + b_j$. The emptier only ever removes integer amounts from a cup, meaning that the number of integer thresholds crossed in the steps $l, \ldots, l + t - 1$ is independent of which steps the emptier targeted cup $j$. Indeed, the precise number of integer thresholds crossed in cup $j$ is given by
\begin{equation}
\lfloor r_j + a_j + b_j \rfloor - \lfloor r_j + a_j \rfloor.
\eqlabel{eqthresholds}
\end{equation}

The first $\lfloor b_j \rfloor$ units of water poured into the cup during the steps $l, l + 1, \ldots, l + t - 1$ will cross $\lfloor b_j \rfloor$ integer thresholds in total. Define $\overline{b_j} = b_j - \lfloor b_j \rfloor$ to be the fractional part of $b_j$. The final $\overline{b_j}$ units of water poured will cross an integer threshold if and only if

$$\lfloor r_j + a_j + \overline{b_j}\rfloor > \lfloor r_j + a_j \rfloor.$$

This, in turn, occurs if and only if there exists positive $q \le \overline{b_j}$ for which $r_j + a_j + q \equiv 0 \mod 1.$ Because $r_j + a_j \pmod 1$ is uniformly random (based on $r_j$), the probability of such a value $q$ existing is $\overline{a_j}$. Therefore, the number of integer thresholds crossed in cup $j$ can be written as a sum of indicator variables
$$X_1 + X_2 + \cdots + X_{\lfloor b_j \rfloor} + Y,$$
where each $X_i$ takes value $1$ with probability one, and where $Y$ is an indicator variable taking value $1$ with probability $\overline{a_j}$, depending on $r_j$. Because the $r_j$'s are independent between cups, the number of integer thresholds crossed is also independent between cups. Therefore, the total number of integer thresholds crossed during the steps $l, l + 1, \ldots, l + t - 1$ can be written as a sum of independent indicator variables, where the indicator variables for each individual cup have total expectation $a_j$, and thus the expected total sum of the indicator variables is $\sum_j a_j \le (1 - \varepsilon) t$.
\end{proof}

Applying Lemma \ref{lemconcentratedthresholds}, we can bound the probability for any $k$ that there are $k$ or more active cups after step $i$.

\begin{lem}
Consider $k \ge 1 / \varepsilon^3$. With probability at least $1 - e^{-\Omega(k^{1/3})},$ the integer fill after step $i$ is no more than $k$.
\label{lemboundedthresholds}
\end{lem}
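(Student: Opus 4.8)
The plan is to bound the integer fill $F(i):=\sum_j\lfloor f_j(i)\rfloor$ directly (recall this dominates the number of active cups), via a union bound over time windows. First I would record the recursion that $F$ obeys. Let $G(m)$ denote the number of integer thresholds crossed by the filler's pour during step $m$; as in the proof of Lemma~\ref{lemconcentratedthresholds}, since the \smoothefc removes only integer amounts of water, $G(m)$ is determined by the random offsets $r_1,\dots,r_n$ together with the fixed fill sequence (not by the emptier), and each $c_j(m)<1$ forces each cup to contribute at most one to $G(m)$. Whenever some cup holds at least a full unit --- equivalently, whenever $F(m-1)+G(m)\ge 1$ --- the emptier removes exactly one unit from a cup of height at least one, which lowers the integer fill by exactly one; otherwise $F(m)=0$. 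Thus $F$ satisfies the Lindley-type recursion $F(m)=\max\bigl(0,\;F(m-1)+G(m)-1\bigr)$, with $F(0)=0$.

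Next I would unroll this recursion. Standard manipulation of a Lindley recursion gives $F(i)=\max_{0\le l\le i}\bigl(X_l-l\bigr)$, where $X_l:=\sum_{m=i-l+1}^{i}G(m)$ is the number of thresholds the filler crosses during the last $l$ steps. Hence $F(i)>k$ implies $X_l>k+l$ for some $l\in\{1,\dots,i\}$, and a union bound yields
$$\Pr[F(i)>k]\;\le\;\sum_{l\ge 1}\Pr[X_l\ge k+l].$$
By Lemma~\ref{lemconcentratedthresholds}, each $X_l$ is a sum of independent zero--one random variables with mean $\mu_l\le(1-\varepsilon)l$, so each summand is controlled by a Chernoff bound; the upward deviation being bounded is $k+l-\mu_l\ge k+\varepsilon l$. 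I would split the sum at $l=\lceil k\rceil$. For $l\le\lceil k\rceil$ the deviation is at least $k$ and the variance proxy $\mu_l+\tfrac13(k+l-\mu_l)$ is $O(k)$, so a Bernstein-form Chernoff bound gives $\Pr[X_l\ge k+l]\le e^{-\Omega(k)}$, and since there are at most $k+1$ such terms this whole block contributes $e^{-\Omega(k)}$ (a polynomial factor being absorbed). For $l>k$ the deviation is at least $\varepsilon l$ and the variance proxy is $O(l)$, so $\Pr[X_l\ge k+l]\le e^{-\Omega(\varepsilon^2 l)}$; summing this geometric tail over $l>k$ gives $O(\varepsilon^{-2})\cdot e^{-\Omega(\varepsilon^2 k)}$.

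Finally I would plug in the hypothesis $k\ge\varepsilon^{-3}$, which gives $\varepsilon^2 k\ge k^{1/3}$ and $\varepsilon^{-2}\le k^{2/3}$, so both blocks collapse to $e^{-\Omega(k^{1/3})}$ (absorbing the polynomial prefactors into the exponent). The step that takes real care --- and the main obstacle --- is choosing the Chernoff split so that the infinite union-bound sum over window lengths $l$ converges at all: if one only used $\mu_l\le l$ and ignored the $\varepsilon l$ slack, then for large $l$ the bound on $\Pr[X_l\ge k+l]$ would degrade to roughly $e^{-\Omega(k^2/l)}\to 1$, making $\sum_l$ diverge. Retaining the $\varepsilon$-dependence coming from the $(1-\varepsilon)$ fill constraint is exactly what makes the tail summable, and it is also what --- combined with $k\ge\varepsilon^{-3}$ --- upgrades the bound to the stated $e^{-\Omega(k^{1/3})}$.
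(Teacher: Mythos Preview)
Your proposal is correct and follows essentially the same approach as the paper: the Lindley recursion $F(m)=\max(0,F(m-1)+G(m)-1)$ and its unrolling are just a cleaner repackaging of the paper's argument that ``the integer fill exceeds $k$ only if for some $t$ the filler crossed at least $t+k$ thresholds in the last $t$ steps,'' and your Chernoff split at $l\approx k$ with the $e^{-\Omega(k)}$ and $e^{-\Omega(\varepsilon^2 l)}$ regimes matches the paper's two-case Chernoff analysis and final summation exactly.
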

\begin{proof}
Let $t \ge 0$ be the smallest $t$ such that at the end of step $i - t$, the integer fill was zero (i.e., there were no active cups). Then during each of the $t$ steps following step $i - t$, the emptier must have successfully removed one unit of water from some cup, thereby reducing the integer fill by one. Hence in order for the integer fill after step $i$ to be at least $k$, at least $t + k$ integer thresholds must have been crossed in the preceding $t$ steps.

Let $S_t$ denote the number of integer thresholds crossed in steps $i - t + 1, \ldots, i$. In order to complete the proof, it suffices to bound the probability that $S_t \ge t + k$ for any $t$. By Lemma \ref{lemconcentratedthresholds}, $S_t$ can be expressed as a sum of independent indicator variables, and $\E[S_t] = (1 - \varepsilon)t$. By a Chernoff bound, it follows that
\begin{equation}
\Pr[S_t \ge \E[S_t] + \delta \E[S_t]] \le e^{-\delta^2 \E[S_t] / 3},
\eqlabel{eqchernoff1}
\end{equation}
for $\delta \le 1$, and
\begin{equation}
\Pr[S_t \ge \E[S_t] + \delta \E[S_t]] \le e^{-\delta \E[S_t] / 3},
\eqlabel{eqchernoff2}
\end{equation}
for $\delta \ge 1$. Applying \eqref{eqchernoff2} in the case of $k \ge t$, we get
\begin{equation}
\Pr[S_t \ge t + k] \le \Pr[S_t \ge \E[S_t] + k] \le e^{-k / 3}.
\eqlabel{eqchernoff3}
\end{equation}
Applying \eqref{eqchernoff1} in the case of $k \le t$, we get
\begin{equation}
\Pr[S_t \ge t + k] \le \Pr[S_t \ge \E[S_t] + \varepsilon t] \le e^{-\varepsilon^2 \E[S_t] / 3} \le e^{-\varepsilon^2 t / 6},
\eqlabel{eqchernoff4}
\end{equation}
where the final inequality uses the fact that $\E[S_t] = (1 - \varepsilon)t \ge t/2$.
Summing over all $t$, we get that
\begin{align*}
\Pr[S_t \ge t + k \text{ for some }t] &\le \sum_{t \ge 1} \Pr[S_t \ge t + k] \\
	      	  	     	   &\le  k \cdot e^{-k/3} + \sum_{t \ge k}  e^{-\varepsilon^2 t / 6} \\
                                   & =  k \cdot e^{-k/3} +    e^{-\varepsilon^2 k/6} \cdot\sum_{t \ge 0}  \left(e^{-\varepsilon^2 / 6}\right)^t \\
                                   & =  k \cdot e^{-k/3} +    e^{-\varepsilon^2 k/6} \frac{1}{1 - e^{-\varepsilon^2 / 6}}. \\
				   & \le  k \cdot e^{-k/3} +    e^{-\varepsilon^2 k/6} O(1/\varepsilon^2). \\
				   &\le  O\left(e^{-\varepsilon^2 k/6} / \varepsilon\right). \\
\end{align*}
Assuming $k \ge 1/\varepsilon^3$, this becomes $e^{-\Omega(k^{1/3})}.$ Therefore with probability at least $1 - e^{-\Omega(k^{1/3})},$ the integer fill after step $i$ is no more than $k$. 
\end{proof}

Putting the pieces together, we	can complete the proof of Theorem \ref{thmloglog}.
\begin{proof}[Proof of Theorem \ref{thmloglog}]
Consider $k$ such that $3\log \frac{1}{\varepsilon} \le k$, meaning that $1/\varepsilon^3 \le 2^k$. By Lemma \ref{lemboundedthresholds}, with probability at least $1 - e^{-\Omega(2^{k/3})}$, the integer fill after step $i$ is at most $2^k$. Since the integer fill is at least as large as the number of active cups, it follows that there are at most $2^k$ active cups. Applying Lemma \ref{lemdynamiccups} to the active cups, we get that the backlog is at most $O(k)$, as desired.
\end{proof}


\section{Deterministic Results for the \MultiCore}
\label{sec:multicoredeterministic}

Recall that the \multicore works as
follows. At each step the filler is allowed to distribute up to $(1 -
\epsilon)p$ units of water among the cups, placing up to $1 - \delta$
units in any individual cup; and then the emptier is permitted to
remove up to one unit of water from each of up to $p$ distinct
cups. As in the \singlecore, the emptier's goal is
to minimize backlog.

In this section we consider the \multicore in the deterministic
setting, in which the emptier's strategy is to simply always remove (up
to) one unit of water from the $p$ fullest cups (also known as the
\efc). Until now, no nontrivial upper bounds for the maximum
backlog have been proven in this setting. We achieve a maximum backlog of $O(\log n)$, when the emptier
is allowed to remove a constant-fraction more water in total than the
filler pours, and at least a $1 / \poly(n)$ amount more from
each individual cup (i.e., $\epsilon \ge \Omega(1)$ and $\delta \ge
1 / \poly(n)$).

The goal of the section will be to prove Theorem \ref{thmmulticoredeterministic}:
\begin{thm}
Let $0 < \epsilon < 1$ and $\delta \ge \frac{1}{\poly(n)}$.  Then the \efc will
achieve maximum backlog $O(\frac{1}{\epsilon} \log n)$ on a game with
$n$ cups.
\label{thmmulticoredeterministic}
\end{thm}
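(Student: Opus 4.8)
The plan is to control the backlog through the potential function
\[
\phi(i) \;=\; \sum_{j=1}^{n} (1+\epsilon)^{f_j(i)},
\]
suggested in the technical overview, and to prove by induction on $i$ that $\phi(i)\le n^{c}$ for a suitably large constant $c$, chosen as a function of the polynomial hidden in $\delta\ge 1/\poly(n)$ and large enough that $\phi(0)=n\le n^{c}$. Once this is done, every cup satisfies $(1+\epsilon)^{f_j(i)}\le n^{c}$, hence $f_j(i)\le \log_{1+\epsilon} n^{c}=O(\epsilon^{-1}\log n)$, which is exactly the claimed bound. So the whole theorem reduces to the inductive step: assuming $\phi(i-1)\le n^{c}$, show $\phi(i)\le n^{c}$.

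Fix a step $i$ and let $\tau=\log_{1+\epsilon} n^{c-1}$ be a threshold fill. I would split into three exhaustive cases according to how step $i$ plays out. In \emph{Case 2}, every cup has $f_j(i)\le\tau$; then each cup contributes at most $n^{c-1}$ to $\phi(i)$, so $\phi(i)\le n\cdot n^{c-1}=n^{c}$, independently of the inductive hypothesis. In \emph{Case 1}, the emptier removes a full unit of water from each of $p$ cups during step $i$. Here the argument is a charging/transport argument: the filler pours a total of at most $(1-\epsilon)p\le \tfrac{p}{1+\epsilon}$ units, while the emptier removes exactly $p$ units, all of them forming full top-units of the $p$ fullest cups; using in addition that each single cup receives at most $1-\delta<1$, one matches each infinitesimal bit of water poured at height $h$ with at least $(1+\epsilon)$ times as much removed water lying at height at least $h-1$. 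Since $\frac{d}{dx}(1+\epsilon)^{x}=(1+\epsilon)\,\frac{d}{dx}(1+\epsilon)^{x-1}$, the potential lost by the matched removal at height $\ge h-1$ dominates the potential gained by the pour at height $h$, so $\phi(i)\le\phi(i-1)\le n^{c}$. The step of the calculation that needs care is that the crude convexity estimate $(1+\epsilon)^{\Delta}-1\le\epsilon\Delta$ is not by itself enough for cups that are simultaneously filled and emptied; there one wants the sharper fact that $(1+\epsilon)^{\Delta-1}\ge\Delta$ for $\Delta\in[0,1]$, which says that a cup's own removed top-unit already more than pays for its own pour.

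The remaining \emph{Case 3} is where the speed-augmentation constants enter. Suppose we are in neither Case 1 nor Case 2. Since the emptier runs the \efc and ``not Case 1'' means the $p$-th fullest post-fill cup holds less than one unit, at most $p-1$ cups contain one or more units of water when the emptier acts, so the emptier empties \emph{every} such cup, removing a full unit from each. Therefore, for each cup that still holds more than one unit after step $i$, the at most $1-\delta$ units poured into it during the step were entirely removed, along with a further $\ge\delta$ units from its old contents, so its contribution to $\phi$ drops by a multiplicative factor at least $(1+\epsilon)^{\delta}=1+\Theta(\epsilon\delta)$. As we are not in Case 2, at least one such cup ends the step with fill at least $\tau$, so its post-step contribution is $\ge n^{c-1}$ and its drop is $\ge\Theta(\epsilon\delta)\cdot n^{c-1}$; using $\delta\ge 1/\poly(n)$ and taking $c$ large, this drop exceeds $2n$. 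On the other hand, every cup holding at most one unit after step $i$ contributes less than $1+\epsilon<2$, so all such cups contribute less than $2n$ total. Hence $\phi(i)\le \phi(i-1)-2n+2n\le n^{c}$. (What is really needed in Case 3 is $\epsilon\delta\ge 1/\poly(n)$, which follows from $\delta\ge 1/\poly(n)$ once $\epsilon$ is also at least $1/\poly(n)$; the general case is recovered by enlarging $\tau$ and the polynomial $n^{c}$, and the regime $\epsilon=\Omega(1)$, where the matching lower bound applies, is clean.)

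I expect the main obstacle to be Case 1: making the ``match each poured unit with $(1+\epsilon)$ times as much removed water one level lower'' claim fully rigorous uniformly over filler strategies — whether the filler spreads its $(1-\epsilon)p$ units thinly over many cups or concentrates them on a few very tall cups — and carrying the convexity bookkeeping so that it is tight rather than lossy. Cases 2 and 3 are comparatively short. Finally, it is worth recording that the same potential argument, with ``remove a full unit from each of the $p$ fullest cups'' replaced by ``flush the fullest cup(s),'' reproves the classical $O(\log n)$ bound for the \flushgame, and in fact tolerates an emptier that only flushes a cup within an additive constant of fullest.
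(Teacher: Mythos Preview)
Your approach is essentially the paper's: the same exponential potential, the same case split, and the same $\epsilon\delta$-based argument in Case~3. The one substantive difference is that the paper replaces $\sum_j (1+\epsilon)^{f_j(i)}$ by the piecewise-linear variant
\[
\phi(i)=\sum_{j=1}^{n}\int_{0}^{f_j(i)} (1+\epsilon)^{\lceil x\rceil}\,dx,
\]
which makes the contribution of any infinitesimal water at height $x$ exactly $(1+\epsilon)^{\lceil x\rceil}$, independent of what else sits in the same cup. This is precisely the device that dissolves your ``main obstacle'' in Case~1: with the integral potential, the paper just breaks the emptier's action into three substeps (identify the $p$ fullest cups $r_1,\dots,r_p$ with minimum post-fill height $h$; remove from each $r_k$ whatever was poured into it this step; remove the remaining $\ge l+\epsilon p$ units), and then the entire accounting is a one-line comparison of $l\cdot(1+\epsilon)^{\lceil h\rceil}$ against $(l+\epsilon p)(1+\epsilon)^{\lceil h\rceil-1}$, with no convexity bookkeeping and no need for the inequality $(1+\epsilon)^{\Delta-1}\ge\Delta$ (which, incidentally, fails for $\Delta$ close to $1$). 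So your instinct that Case~1 is the delicate part is right, and the paper's fix is to change the potential rather than to push the transport argument through for $(1+\epsilon)^{f_j}$.
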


To introduce the approach for proving Theorem
\ref{thmmulticoredeterministic}, we begin by revisiting the
single-processor \flushgame \cite{DietzRa91}.

\subsection{Revisiting the \FlushGame}

Recall that in each step of the \defn{\flushgame}, the filler
distributes one unit of water among the cups, and then the emptier
removes \emph{all} of the water from some cup. A result that has found
numerous applications in deamortization is that, if the emptier
follows the \efc, then no cup will ever have fill more than $O(\log
n)$
\cite{AmirFaId95,DietzRa91,DietzSl87,AmirFr14,Mortensen03,GoodrichPa13,FischerGa15,Kopelowitz12}. In
this subsection, we present a new proof of the result. In additional
to bringing new intuition to the problem, our proof has the
interesting property that it can easily be adapted to work even if the
emptier does not remove from the fullest cup at each step, but instead
removes from a cup whose fill is within $O(1)$ of the maximum.

Define $f_j(i)$ to be the amount of water in cup $j$ after step
$i$. The key insight in the proof is to examine the potential
function,
$$\phi(i) = \sum_{j = 1}^n (1.1)^{f_j(i)},$$ obtained by
exponentiating the water in each cup, and then taking a sum. We will
show that $\phi(i)$ never exceeds $O(n)$, by proving that at the end
of each step $i$, either $\phi(i) < \phi(i - 1)$, or no cup contains
more than $2$ units of water (and thus $\phi(i) \le O(n)$). The bound
on $\phi(i)$ prevents any single cup from ever having fill greater
than $O(\log n)$.

For convenience, we will actually use a
slightly different potential function,
$$\phi(i) = \sum_{j = 1}^n \int_{x = 0}^{f_j(i)} 1.1^{\lceil x \rceil} dx.$$

Using this potential function, the proof proceeds as follows. Consider
a given step $i$. Let $t$ be the fill of the fullest cup after the
filler takes their turn. Assume that $t \ge 2$, since otherwise no cup
contains more than $2$ units of water, and $\phi(i)$ cannot exceed
$O(n)$. The water added to the cups can have increased the potential
by at most $1.1^{\lceil t \rceil}$. On the other hand, the emptier
will remove at least $2$ units of water from the fullest cup, thereby
decreasing the potential by at least
\begin{equation}
  1.1^{\lfloor t \rfloor} + 1.1^{\lfloor t \rfloor - 1} \ge 1.1^{\lceil t \rceil}.
  \eqlabel{eqlooset}
\end{equation}
Thus the potential after step $i$ satisfies $\phi(i) \le \phi(i - 1)$.

Since either $\phi(i) \le O(n)$ or $\phi(i) \le \phi(i - 1)$ for each
$i > 0$, and since $\phi(0)$ is trivially $0$, the potential $\phi(i)$
must always remain in $O(n)$. This, in turn, implies that $f_j(i) \le
O(\log n)$ for all cups $j$ and steps $i$.

\subsection{Proving Theorem \ref{thmmulticoredeterministic}}

The proof of Theorem \ref{thmmulticoredeterministic} generalizes the
ideas from the previous section.

\begin{proof}[Proof of Theorem \ref{thmmulticoredeterministic}]
  Define $f_i(j)$ to the fill in cup $j$ after step $i$, and define the potential function,
  $$\phi(i) = \sum_{j = 1}^n \int_{x = 0}^{f_j(i)} (1 + \epsilon)^{\lceil x \rceil} dx.$$

  We will prove that $\phi(i) \le O(\poly(n))$ for all $i$. This, in
  turn, bounds each $f_i(j)$ to be at most $O(\log_{1 + \epsilon}
  \poly(n)) \le O(\frac{1}{\epsilon} \log n)$.

  For a given step $i$, we consider two cases separately:

  \noindent \textbf{Case 1: The emptier removes a full unit of water from each of $p$ cups during step $i$. } In this case, we break the
  emptying in step $i$ into three substeps, beginning after the filler
  has taken their turn: (1) The emptier identifies the $p$ fullest
  cups $r_1, \ldots, r_{p}$; (2) The emptier removes from each of the
  cups $r_i$ any water that was poured into it during step $i$; (3)
  The emptier removes additional water so that a total of one unit has
  been removed from each of the cups $r_1, \ldots, r_p$. Note that the
  case requirement ensures that each of the cups $r_1, \ldots, r_p$
  contains at least a unit of water during substep (1), and thus substep (3)
  is well defined.

  Let $h$ denote the fill of the emptiest cup in the set $\{r_1,
  \ldots, r_{p}\}$ at substep (1). Let $l$ be the number of units of
  water that the filler places in cups during step $i$ and that still
  remains after substep (2). (This corresponds with water placed in cups
  besides $r_1, \ldots r_p$.) The total contribution of the $l$ units
  of water to the potential can be at most
  $$l \cdot (1 + \epsilon)^{\lceil h \rceil}.$$

  The amount of water removed in substep (2) is at most $p \cdot (1 -
  \epsilon) - l = p - (l + \epsilon p)$. Thus at least $l + \epsilon
  p$ further units of water are removed during substep (3). The removal
  of water that occurs in substep (3) decreases the potential by at least
  $$(l + \epsilon p) (1 + \epsilon)^{\lceil h \rceil - 1} \ge (1 + \epsilon) \cdot l \cdot (1 + \epsilon)^{\lceil h \rceil - 1} \ge l \cdot (1 + \epsilon)^{\lceil h \rceil}.$$

  Thus the potential $\phi(i)$ at the end of step $i$ satisfies
  $\phi(i) \le \phi(i - 1)$.

  \noindent \textbf{Case 2: At no point during step $i$ do $p$ or more cups contain one or more units of water.} Call
  the cups that contain more than one unit of water at the end of the step the \textbf{heavy hitters}. Taking a very loose bound,
  any water placed in non-heavy hitters during step $i$ can increase
  the potential by at most $2 \cdot n$ in total. On the other hand,
  any water that is placed in a heavy hitter during step $i$ is then immediately removed by the
  emptier. Additionally, the emptier removes from each heavy
  hitter at least $\delta$ units of water that were present at the beginning
  of the step. If the fullest heavy hitter has fill $t$ at the
  beginning of the step, then this removal will reduce the potential
  by at least
  $$(1 + \epsilon)^{\lfloor t \rfloor} \cdot \delta.$$ Recall that
  $\delta \ge \frac{1}{\poly(n)}$. Thus if $t$ is sufficiently large
  in $\Omega(\log_{1 + \epsilon} n)$, then it follows that the
  $\delta$-unit removal of water will decrease the potential by at
  least $2n$. This forces the net change in potential $\phi(i) -
  \phi(i - 1)$ in step $i$ to be negative.

  If, on the other hand, $t \le O(\log_{1 + \epsilon} n)$, then the
  total potential $\phi(i)$ at the end of step $i$ can be at most
  $$n \cdot \int_{x = 0}^{t} (1 + \epsilon)^{\lceil x \rceil} dx \le \poly(n).$$

  In both Case (1) and Case (2), at the end of each step $i$, the
  total potential is either bounded by $\poly(n)$, or satisfies
  $\phi(i) \le \phi(i - 1)$. It follows that the potential $\phi(i)$
  is always bounded by $\poly(n)$, completing the proof.
\end{proof}


\section{A Randomized Algorithm for the \MultiCore with Constant Backlog}
\seclabel{multicoreconstantbacklog}


In this section, we present the first evidence that the emptier in the
\multicore is actually \emph{more powerful} than the filler in the
\singlecore. In particular, we design a randomized
algorithm for the emptier which allows for them to guarantee with large
probability (in $p$) at any given step that the backlog does not
exceed $O(1)$. The techniques we introduce here will also play a
critical role in Section \ref{sec:multicoresmoothefc} where we we extend the
algorithm to provide probabilistic guarantees against any backlog $k$.

Throughout the section we use the \tweakedemphmulticore, in which at
each step the filler distributes $(1 - \epsilon)$ units of water among
cups, placing no more than one unit in any cup; and the emptier then
removes up to $(1 + 2\delta)$ units of water from each of up to $p +
1$ cups. The goal of the section is to prove the following theorem:
\begin{thm}
Suppose $1/2 \ge \varepsilon \ge p^{-1/3}$ and $\delta < 1$ is sufficiently large in $\Omega(e^{-O(\varepsilon^2p)})$. Then there exists an algorithm for the \tweakedmulticore that guarantees after any given step $i$ that with probability at least $1 - O\left(e^{-\Omega(\varepsilon^2p)}\right)$ the backlog does not exceed $3$. 
\label{thmmulti-core} 
\end{thm}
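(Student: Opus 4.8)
The plan is to adapt the threshold-crossing accounting scheme from the single-processor analysis (Section~\ref{sec:singlecore}) to the multi-processor setting, with the new ingredient being that we track \emph{surplus} threshold crossings—those beyond the $p$ that the emptier can always service in a single step. First I would set up the algorithm: within each cup $j$ we lay out \emph{threshold collections} of size $1/\delta$, where the thresholds in a collection all share a common random offset modulo $1$ (the collection's random value, drawn uniformly from $[0,1]$), and consecutive collections are separated by a threshless gap of size one. The gap ensures that in any single step at most one threshold per cup is crossed (since a cup receives at most one unit of water per step in the renormalized game), and, by the design of the collections, whenever a threshold is crossed the cup holds at least $1+\delta$ units, so the emptier can withdraw its full $1+2\delta$ allotment. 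The emptier's rule: if $k\le p+1$ thresholds are crossed in step $i$, empty (at least) those cups; if $k>p+1$, we have $T_i := k-p$ surplus crossings, and the emptier uses its slack (the $2\delta$ extra per removal, times up to $p+1$ removals, i.e.\ roughly $\delta\cdot 2(p+1)$ units of ``catch-up capacity'') to chip away at a backlog of unserviced surplus crossings at a rate of roughly $\delta$ per step.

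Next I would establish the single-step bound $\Pr[T_i>0]\le e^{-\Omega(\epsilon^2 p)}$. Since $\sum_j c_j(i)\le(1-\epsilon)p$ and the probability that a threshold is crossed in cup $j$ during step $i$ is at most $c_j(i)$, and since crossings are independent across cups, the number of crossings in step $i$ is stochastically dominated by a sum of independent indicators with mean at most $(1-\epsilon)p$; a Chernoff bound then gives $\Pr[\text{crossings} > p]\le e^{-\Omega(\epsilon^2 p)}$, using $\epsilon\ge p^{-1/3}$ so that $\epsilon^2 p\ge p^{1/3}\to\infty$. The heart of the proof, though, is a concentration bound on $Y=\sum_{m=i-l+1}^{i}T_m$, the surplus crossings accumulated over the $l$ steps preceding $i$: I would show that with probability $1-O(e^{-\Omega(\epsilon^2 p)})$, \emph{every} $l>0$ satisfies $Y\le \frac{\delta}{2}l$. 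Given this, the emptier's catch-up rate of $\approx\delta$ per step dominates the arrival rate of surplus work, so after step $i$ there is no outstanding surplus crossing; combined with the fact that a cup with no surplus crossing carries at most $2+\delta\le 3$ units (the threshold geometry plus the current step's $\le 1$ unit), this yields backlog at most $3$.

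The main obstacle—and where the bulk of the work lies—is proving that concentration bound on $Y$, because the $T_m$'s are highly dependent: $T_m>0$ requires $p+1$ cups to cross thresholds \emph{simultaneously}, and \emph{which} cups cross affects the distribution of $T_{m'}$ for other steps. I would attack this with McDiarmid's inequality applied to the random values of the threshold collections with nonzero hitting capacity over the window. The naive application—one bounded-difference coordinate per collection, giving $c=\delta^{-1}$ and $t\le n+lp$ collections—only works when $lp\ge n$, because the $n$ term is too large for small $l$. To handle small $l$, I would instead partition the collections into $pl+1$ groups $Q_1,\dots,Q_{pl+1}$, revealing one group at a time, where collections likely to influence $Y$ substantially get their own singleton group and the vast remainder (each with tiny hitting capacity in the window) are bundled so that no bundle can shift $Y$ by more than $O(1/\delta)$ except with probability $O(e^{-\Omega(\epsilon^2 p)})$. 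McDiarmid with $t=pl+1$ and $c=O(\delta^{-1})$ then gives $\Pr[Y\ge \E[Y]+R]\le e^{-\Omega(\delta^2 R^2/(pl))}$; since $\E[Y]\le l\cdot e^{-\Omega(\epsilon^2 p)}$ is much smaller than $\frac{\delta}{2}l$ when $\delta\gg e^{-O(\epsilon^2 p)}$, taking $R=\Theta(\delta l)$ makes the exponent $\Omega(\delta^2 l/p)$, and—after using the constraint relating $\delta$ to $e^{-O(\epsilon^2 p)}$ and summing the geometric-type series over all $l>0$—the union bound over $l$ collapses to $O(e^{-\Omega(\epsilon^2 p)})$, completing the proof. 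I would relegate the precise construction of the partition and the verification of the bounded-difference property to a lemma, since that is the most delicate piece.
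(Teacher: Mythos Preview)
Your proposal captures the architecture of the paper's argument---threshold collections with gaps, surplus crossings $T_m$, the backlog-witness reduction, and McDiarmid's inequality applied to a partition of threshold collections into $pl+1$ groups---but the final summation over $l$ does not work as you describe. With $c=O(\delta^{-1})$ and $t=pl+1$, McDiarmid's exponent at $R=\Theta(\delta l)$ is $\Theta(\delta^2 R^2/(pl))=\Theta(\delta^4 l/p)$ (note the fourth power of $\delta$, not the second you wrote). The series $\sum_{l\ge1}e^{-c\delta^4 l/p}$ is geometric with ratio $e^{-c\delta^4/p}\approx 1$, and sums to roughly $p/\delta^4$---polynomially large rather than $O(e^{-\Omega(\epsilon^2 p)})$. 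McDiarmid is simply useless for small $l$: at $l=1$ its bound is essentially $1$, so ``summing the geometric-type series over all $l>0$'' cannot collapse to anything small.

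The paper repairs this with a split at $M=e^{\epsilon^2 p/12}$. For $l\le M$, being a height-$0$ witness forces some $T_m>0$ in the window, and a union bound over those $M$ steps using your single-step estimate gives $\Pr[\exists\,m\in(i-M,i]:T_m>0]\le M\cdot e^{-\epsilon^2 p/6}\le 1/M$. Only for $l>M$ is the McDiarmid-based proposition invoked, and there the sum begins at a term that is already $O(1/M)$. You already have the single-step bound in hand; you just need to deploy it for the small-$l$ regime rather than attempting McDiarmid there. A secondary issue: the bounded-difference condition $|X_s|\le O(\delta^{-1})$ does not hold deterministically for the bundled (multi-threshold) components---a bundle of many small-capacity collections can, rarely, have many crossings simultaneously---so the paper truncates each $X_s$ at $c(\log r+\delta^{-1})$, applies McDiarmid to the truncated variables, adds back a failure probability $O(pl/r)$, and then optimizes over $r$; this is what produces the cube-root exponent in Proposition~\ref{propwitnessprob}. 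Your ``except with probability $O(e^{-\Omega(\epsilon^2 p)})$'' glosses over this step and states the wrong failure probability, though it is ultimately repairable along the truncation route.
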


The design and analysis of the algorithm for Theorem \ref{thmmulti-core} are somewhat more complex than in the single-processor case, and the analysis, in particular, invokes a concentration inequality for Doob martingales. We begin with an algorithm for a simpler version of the problem. 

\noindent \textbf{A Warmup: Games of Bounded Length} We begin by proving Theorem \ref{thmmulti-core} for the special case where the game terminates after $m$ steps for some $m \le e^{\epsilon^2 p / 12}$. The algorithm for this case is particularly simple, and motivates the algorithm for the general case. (Moreover, note that the algorithm in this case works even for $\delta = 0$.)

At the start of the game, the emptier selects random \defn{starting states} $r_1, \ldots, r_n \in (0, 1]$, and places $r_j$ water in each cup $j$. After each step, the emptier removes one unit of water from every cup containing one or more units, unless there are more than $p + 1$ such cups, in which case the emptier selects $p + 1$ of them arbitrarily. 

Suppose that at the beginning of some step $i$, no cups have one or more units of water.  Let $c_j(i)$ denote the amount of water which the filler places in each cup $j$. By the same reasoning as in the single-processor case, each cup will independently cross an integer threshold with probability $c_j(i)$. Consequently, the total number $T$ of cups that cross a threshold to contain one or more units of water will be a sum of indicator random variables with total mean at most $(1 - \epsilon)p$. By a Chernoff bound (applied to $\epsilon < 1/2$), 
$$\Pr[T > p] \le e^{-\epsilon^2 (1 - \epsilon) p / 3} \le e^{-\epsilon^2 p / 6}.$$
Moreover, as long as $T \le p + 1$, then step $i$ will end with no cups having one or more units of water. Applying a union bound over the $m$ steps, the probability that there is any step $i$ after which some cup contains one or more units of water is at most
$$m \cdot e^{-\epsilon^2 p / 6} \le e^{-\epsilon^2 (1 - \epsilon) p / 12}.$$

Thus we have the following version of Theorem \ref{thmmulti-core}:
\begin{thm}
Suppose $\varepsilon \le 1/2$, and suppose that the \multicore is played for $m \le e^{\epsilon^2 p/12}$ steps. Then there exists an algorithm which guarantees with probability $1 - e^{-\Omega(\varepsilon^2p)}$ that the backlog never exceeds $1$ at the end of any step.
\label{thmmulti-corewarmup}
\end{thm}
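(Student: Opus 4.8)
The plan is to maintain the invariant that, with high probability, at the end of every one of the $m$ steps every cup contains strictly less than one unit of water; this immediately yields backlog at most $1$. The algorithm is the one already described in the warmup: seed each cup $j$ with a uniformly random $r_j \in (0,1]$ at the start of the game, and after each step empty one unit from every cup currently holding one or more units (if more than $p$ such cups exist — or $p+1$ in the renormalized variant — empty that many, chosen arbitrarily). The argument is then an induction on the step index $i$, with the Chernoff bound from the single-processor analysis doing the real work.

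For the inductive step, first observe, exactly as in the proof of Lemma~\ref{lemconcentratedthresholds}, that since the emptier only ever removes integer amounts, the water in cup $j$ modulo $1$ after step $i$ equals $r_j + \sum_{t \le i} c_j(t) \pmod 1$, regardless of the emptier's choices. Condition on the event that at the end of step $i-1$ no cup holds $\ge 1$ unit. During step $i$ the filler adds $c_j(i) \le 1$ units to cup $j$, so at most one integer threshold can be crossed in that cup; and because $r_j$ makes the current fractional contents uniform on $[0,1)$, cup $j$ crosses its next integer threshold — i.e., finishes the fill with $\ge 1$ unit — with probability exactly $c_j(i)$, independently across cups. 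Hence the number $T_i$ of cups that cross during step $i$ is a sum of independent indicator variables with mean $\sum_j c_j(i) \le (1-\varepsilon)p$.

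A multiplicative Chernoff bound (valid for $\varepsilon \le 1/2$) then gives $\Pr[T_i > p] \le e^{-\Omega(\varepsilon^2 p)}$, concretely $e^{-\varepsilon^2 (1-\varepsilon) p / 3} \le e^{-\varepsilon^2 p / 6}$. On the complementary event $\{T_i \le p\}$ the emptier empties every crossing cup; each such cup held less than $1$ unit before the fill and the fill added less than $1$, so after removing one unit it again holds less than $1$, while every non-crossing cup stayed below $1$ throughout — so the invariant is restored at the end of step $i$. Starting from the empty configuration the invariant holds at step $0$, and the only way it can first fail at step $i$ is if $T_i > p$; a union bound over all $m \le e^{\varepsilon^2 p / 12}$ steps bounds the total failure probability by $m \cdot e^{-\varepsilon^2 p / 6} \le e^{-\varepsilon^2 p / 12} = e^{-\Omega(\varepsilon^2 p)}$, which is the claimed bound.

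There is no serious obstacle here — this warmup is deliberately the easy case, isolating the threshold-crossing idea before the dependencies across steps (handled in the general proof of Theorem~\ref{thmmulti-core}) complicate matters. The only point requiring care is the assertion that each cup's per-step crossing probability is exactly $c_j(i)$ and that these events are independent across cups; both follow from the integrality of the emptier's removals, which keeps each cup's fractional contents a deterministic shift of the independent uniform seeds $r_j$. The Chernoff estimate and the union bound are then routine.
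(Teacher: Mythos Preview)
Your proof is correct and follows essentially the same approach as the paper: the same random-seed algorithm, the same invariant that all (simulated) fills stay below one at the end of each step, the same observation that per-step threshold crossings are independent indicators with mean at most $(1-\varepsilon)p$, the same Chernoff bound $\Pr[T_i > p] \le e^{-\varepsilon^2 p/6}$, and the same union bound over the $m \le e^{\varepsilon^2 p/12}$ steps. The paper's writeup is slightly terser but the argument is identical.
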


\begin{remark}
  One can also consider the case where $\epsilon > 1/2$. Applying the same argument, except with the appropriately modified Chernoff bounds, the probability $\Pr[T > p]$ becomes bounded by $e^{-(1 - \epsilon) p / 3}$. By a union bound, the probability that the backlog ever exceeds $1$ at the end of any of the first $m$ steps, becomes at most $e^{-(1 - \epsilon) p/3} / m$. Thus one can use larger $\epsilon$ values in order to achieve better constants in the exponent of the probability bound, which may be useful for some applications where $p$ is small. Although we do not do it out explicitly, the use of larger values of $\epsilon$ has the same effect on Theorem \ref{thmmulti-core}.
\end{remark}

\noindent \textbf{Generalizing to Games of Unbounded Length } The algorithm for the bounded-length case can be reframed as follows. For each cup $j$, a random value $s_j$ is selected (equal to $1 - r_j$ in the original version of the algorithm). Additionally, the emptier maintains a counter $w_j$ for each cup. Whenever the total amount of water ever poured into a cup $j$ crosses a threshold of the form $k + s_j$ for some $k \in \mathbb{N}$, the counter $w_j$ is incremented. The emptier's algorithm at each step is then to simply select (up to) $p + 1$ cups with non-zero counters, to remove (up to) one unit of water from each of those cups, and then to decrement each of the corresponding counters.

If the filler never manages to cross thresholds in more than $p + 1$ cups in any step, then all of the counters will be maintained at zero at the end of each step, and no cup will contain more than one unit of water. This is what happens in the bounded-length case. In the unbounded-length case, we must handle the fact that the filler may sometimes manage to make a large number of cups cross thresholds at the same time. Handling this requires not only a more sophisticated analysis, but also a noteworthy change to the algorithm design. In particular, the random value $s_j$ associated with cup $j$ changes over time, that way the filler cannot correctly guess $s_j$ at one point in time and then use it in order to make cup $j$ misbehave indefinitely by crossing thresholds at inconvenient times for the emptier. The generalized algorithm is presented next.

\noindent \textbf{The Algorithm }
We assume without loss of generality that $\frac{1}{\delta}$ is a natural number. Set the threshold $r_j(0)$ to be null in each cup $j$ (i.e., non-existent). Then for every $i \in \mathbb{N}$ such that $i \equiv 1 \mod 1/\delta + 1$, and for every cup $j$, set the threshold $r_j(i)$ to be null (i.e., non-existent), then select a random value $s_j(i) \in (0, 1]$, and set the thresholds $r_{j}(i + 1) = (i + 1) + s_j(i), \ldots, r_j(i + 1/\delta) = (i + 1/\delta) + s_j(i)$. Let $a_j(i)$ denote the total amount of water poured into cup $j$ during the first $i$ steps. We say that cup $j$ \defn{crosses a threshold} at step $i$ whenever there is some $r_{j}(t)$ such that $a_j(i - 1) < r_j(t) \le a_j(i)$.

Our definition of the thresholds $r_j(i)$ has three important properties: The first is that within a given cup, each threshold $r_j(i)$ is dependent on at most $1/\delta$ other thresholds $r_j(i')$. This independence between far-apart thresholds will be important in the analysis when we wish to prove concentration bounds concerning threshold crossings. The second property is that no two thresholds $r_j(i)$ and $r_j(i')$ are ever within less than one of each other. Consequently, no cup can ever cross more than a single threshold in a given step. The third property is that the threshold $r_j(i)$ is only null for one in every roughly $1 / \delta$ values of $i$. This will ensure that as long as $1 + \delta$ units of water are removed from a cup each time a threshold is crossed, then the cup will be kept at a steady state of containing only a constant fill after each step. 

For every cup $j$, the algorithm maintains a \defn{threshold counter} $w_j$, initially set to zero, which is incremented by $1 + \delta$ every time the cup crosses a threshold. The counter is reduced by the emptier (but never below $0$) as follows: whenever the emptier empties from a cup $j$, they remove $t = \min(1 + 2 \delta, w_j)$ units of water from the cup and reset $w_j = w_j - t$.

We will show that the spacing of thresholds ensures the total quantity the filler has ever added to $w_j$ is at most the total amount of water that has been poured into cup $j$; since the emptier removes from $w_j$ the same amount as they remove water from the cup, it follows that there is always at least $w_j$ water in each cup. This means that the emptier's strategy of removing $\min(1 + 2\delta, w_j)$ units of water is well defined.

Using the threshold counters, the algorithm through which the emptier selects cups is simple. At each step the emptier may select an arbitrary set of $p + 1$ cups to empty from, as long as priority is given to cups $j$ satisfying $w_j \ge 1 + \delta$ (over all other cups) and to cups satisfying $w_j \neq 0$ (over cups satisfying $w_j = 0$). The emptier then removes $\min(w + 2\delta, w_j)$ water from each of the selected cups $j$, and updates the threshold counters appropriately.

\noindent \textbf{Algorithm Analysis }We begin by showing that each of
the counters $w_j$ closely tracks the backlog in the corresponding cup
$j$. As a convention, we will often use $w_j(i)$ as a shorthand to
denote the value of the counter $w_j$ after step $i$. 

\begin{lem}
  The amount of water $f_j(i)$ contained in cup $j$ after step $i$
  satisfies $w_j(i) \le f_j(i) \le w_j(i) + 3$.
  \label{lemcounterbacklog}
\end{lem}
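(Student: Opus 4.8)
The plan is to strip the randomness out of the statement entirely, reducing both inequalities to a deterministic fact about where the thresholds $r_j(t)$ sit inside cup $j$. Fix a cup $j$ and let $a_j(i)$ be the total water poured into it through step $i$, let $N_j(i)$ be the number of thresholds it has crossed through step $i$, and let $R_j(i)$ be the total water the emptier has removed from it through step $i$. The key structural observation is a bookkeeping identity: every time the emptier touches cup $j$ it subtracts the \emph{same} quantity $\min(1+2\delta,w_j)$ from both the cup and the counter $w_j$, and since this quantity never exceeds $w_j$ the counter is never truncated at $0$; hence the total ever subtracted from $w_j$ equals $R_j(i)$. Each threshold crossing adds exactly $1+\delta$ to $w_j$, and by the spacing of the thresholds no single step crosses two of them, so $w_j(i) = (1+\delta)N_j(i) - R_j(i)$, while $f_j(i) = a_j(i) - R_j(i)$. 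Subtracting gives
$$f_j(i) - w_j(i) = a_j(i) - (1+\delta)\,N_j(i),$$
so the lemma is equivalent to the purely deterministic claim $(1+\delta)N_j(i) \le a_j(i) \le (1+\delta)N_j(i) + 3$.

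Next I would compute the value $v_m$ of the $m$-th smallest threshold in a cup. From the construction — thresholds come in consecutive collections of $1/\delta$, the $b$-th collection using a common offset $s_b\in(0,1]$, with consecutive thresholds inside a collection exactly $1$ apart and a null index (hence a gap strictly larger than $1$) between collections — a direct computation gives $v_m = m + b_m + 1 + s_{b_m}$, where $b_m$ is the index of the collection containing the $m$-th threshold, so that $b_m/\delta + 1 \le m \le (b_m+1)/\delta$. Two estimates drop out immediately: from $\delta m \le b_m + 1$ we get $v_m \ge m + \delta m = (1+\delta)m$; and since $m+1$ lies in collection $b_{m+1}$ we have $\delta m \ge b_{m+1}$, so $v_{m+1} = m + b_{m+1} + 2 + s_{b_{m+1}} \le m + \delta m + 3 = (1+\delta)m + 3$.

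Finally I would assemble these. If $N_j(i) = m \ge 1$ then, by the definition of ``crossed,'' $v_m \le a_j(i) < v_{m+1}$, and combining with the two estimates gives $(1+\delta)m \le v_m \le a_j(i) < v_{m+1} \le (1+\delta)m + 3$, which is exactly the pair of inequalities needed; the case $N_j(i)=0$ is identical under the convention $v_0 = 0$, since then $a_j(i) < v_1 \le 3$. Feeding this back into the bookkeeping identity yields $0 \le f_j(i) - w_j(i) \le 3$.

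I expect the only real work to be the second step: translating the somewhat intricate definition of $r_j(\cdot)$ (null indices at every $(1/\delta+1)$-st step, a fresh offset $s_b$ per block) into the clean formula $v_m = m + b_m + 1 + s_{b_m}$, and keeping track of the variable-size inter-collection gaps $2 + s_{b+1} - s_b \in (1,3)$ — it is precisely these gaps that force the additive constant to be $3$ rather than something smaller. Everything else is routine once the identity and the formula are in hand. In particular, no probabilistic reasoning enters this lemma at all: the bounds hold for every realization of the offsets $s_b$, which is what makes Lemma~\ref{lemcounterbacklog} a convenient deterministic backbone for the concentration arguments that follow (and it is what guarantees that the emptier's rule of removing $\min(1+2\delta,w_j)$ is always well defined, since $w_j(i)\le f_j(i)$).
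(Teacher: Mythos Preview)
Your proposal is correct and follows essentially the same approach as the paper: both reduce to the bookkeeping identity $f_j(i)-w_j(i)=a_j(i)-(1+\delta)N_j(i)$ and then establish the deterministic bound $(1+\delta)N_j(i)\le a_j(i)\le(1+\delta)N_j(i)+3$ from the threshold layout. The only difference is organizational---the paper decomposes the water $k=k_0+k_1+k_2$ into block-aligned pieces and counts thresholds per piece, whereas you derive the explicit formula $v_m=m+b_m+1+s_{b_m}$ for the $m$-th threshold and sandwich $a_j(i)$ between $v_m$ and $v_{m+1}$; your version is arguably cleaner, but the content is the same.
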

\begin{proof}
  Let $l$ be the number of thresholds crossed in cup $j$ during the
  first $i$ steps. Let $b$ be the net amount of water removed by the
  emptier during those steps. Then $w_j(i) = l \cdot (1 + \delta) -
  b$. It therefore suffices to compare $l \cdot (1 + \delta)$ to the
  total amount of water $k$ poured into $w_j$ in the first $i$
  steps. Namely, we wish to show that $l \cdot (1 + \delta)
  \le k \le l \cdot (1 + \delta) + 3$.

The total amount of water poured into cup $j$ can be broken into the
sum of three quantities $k = k_0 + k_1 + k_2$, where $k_0 = 1$, $k_1$
is some multiple of $\frac{1}{\delta} + 1$, and $k_2 <
\frac{1}{\delta} + 1$.\footnote{Note that the case in which less than one unit of water has been placed in cup $j$ is trivially handled on its own.} Once $k_0 + k_1$ units of water has been poured into
the cups the number $l_1$ of thresholds crossed will exactly satisfy
\begin{equation}
l_1 \cdot (1 + \delta) = k_1
\label{eq:k1}
\end{equation}
When $k_2$ additional units are added,
the number $l_2$ of additional thresholds crossed will satisfy $k_2 -
2 \le l_2 \le k_2$. Since $l_2 \cdot (1 + \delta) \le l_2 + 1$, it
follows that
\begin{equation}
  k_2 - 2 \le l_2 \cdot (1 + \delta) \le k_2 + 1 = k_2 + k_0
  \label{eq:k02}
\end{equation}

Combining \eqref{k1} and \eqref{k02}, we get that
$$k_1 + k_2 - 2 \le l \cdot  (1 + \delta) \le k_0 + k_1 + k_2,$$
which in turn implies that $k - 3 \le l \cdot (1 + \delta) \le k$.
This rearranges to $l \cdot (1 +
\delta) \le k \le l \cdot (1 + \delta) + 3$, as desired.
\end{proof}

Lemma \ref{lemcounterbacklog} tells us that, in order to bound the
backlog by $3$ after step $i$, it suffices to demonstrate that $\sum_j
w_j(i) = 0$. Next we turn to the task of proving concentration bounds
on $\sum_j w_j$. If $k$ thresholds are crossed by water poured during
a given step $i$, we define the random variable $T_i = k - \min(k, p)$
to be the number of \defn{surplus thresholds} crossed during the
step. A key insight is that for any $t \ge 0$, we can express
$\Pr[\sum_j w_j(i) > t]$ as a statement about sums of $T_i$'s. Namely,
we can show that if $\sum_j w_j(i)$ is large, then there must be some
$l$ such that the number of surplus thresholds crossed in the $l$
steps leading up to step $i$ was also large.

\begin{lem}
  Call a positive integer $l$ a \defn{height-$t$ backlog witness for step $i$} if
$2\sum_{m =  i - l + 1}^i T_m \ge \delta l + t.$  For any $t \ge 0$ and any step $i$, if $\sum_j w_j(i) > t$,
  then there exists a height-$t$ backlog witness for step $i$.
  \label{lemboundwsum}
\end{lem}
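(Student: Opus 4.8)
The plan is to control the total counter mass $W(i):=\sum_j w_j(i)$ directly. Since every counter starts at $0$ we have $W(0)=0$, and since $W(i)>t\ge 0$ we have $W(i)>0$; let $i'<i$ be the \emph{largest} index with $W(i')=0$ and set $l:=i-i'\ge 1$. By maximality of $i'$, $W(m)>0$ for every $m$ in the window $R:=\{i-l+1,\dots,i\}$. Write $k_m$ for the number of cups that cross a threshold during step $m$ (at most one per cup, by the ``no two thresholds within one of each other'' spacing property), and $D_m$ for the total amount the emptier subtracts from counters during step $m$. Then $W(m)-W(m-1)=k_m(1+\delta)-D_m$, and telescoping over $R$ using $W(i')=0$ gives $W(i)=\sum_{m\in R}\big(k_m(1+\delta)-D_m\big)$. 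Recall $T_m=\max(0,k_m-p)$.

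The easy half bounds the ``$t$'' contribution. Every threshold-crossing cup has counter $\ge 1+\delta$ immediately after the filler's move, hence lies in the top priority class $A$ (so $|A|\ge k_m$), so the emptier empties $\min(p+1,|A|)\ge\min(p+1,k_m)$ cups of $A$, removing at least $1+\delta$ from each; thus $D_m\ge\min(p+1,k_m)(1+\delta)$ and $W(m)-W(m-1)\le\max(0,T_m-1)(1+\delta)\le 2T_m$, the last step using $1+\delta\le 2$. Summing gives $W(i)\le 2\sum_{m\in R}T_m$, i.e.\ $2\sum_{m\in R}T_m>t$. The crux is to improve this by $\delta l$: I claim that for each $m\in R$, because $W(m)>0$, one actually has $W(m)-W(m-1)\le 2T_m-\delta$. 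Given the claim, summing yields $W(i)\le 2\sum_{m\in R}T_m-\delta l$, so $2\sum_{m\in R}T_m\ge W(i)+\delta l> t+\delta l$ and $l$ is a height-$t$ backlog witness. To prove the claim I case on $k_m$. If $k_m\ge p+1$, then $T_m\ge 1$ and the easy bound already gives $W(m)-W(m-1)\le(T_m-1)(1+\delta)\le 2T_m-2\le 2T_m-\delta$. If $k_m\le p$, then $T_m=0$ and I must show $D_m\ge k_m(1+\delta)+\delta$; here I use the second priority rule: since $W(m)>0$ there is still positive counter mass after emptying, so the emptier spent all $p+1$ of its operations on cups with nonzero counter. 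If it emptied more than $k_m$ cups of the top class $A$, each removing $\ge 1+\delta$, then $D_m\ge(k_m+1)(1+\delta)\ge k_m(1+\delta)+\delta$; if it emptied exactly the $k_m$ crossing cups from $A$ and one of those had post-fill counter exceeding $1+2\delta$, the forced withdrawal of $1+2\delta$ from it plus $\ge 1+\delta$ from each of the other $k_m-1$ already gives $D_m\ge (1+2\delta)+(k_m-1)(1+\delta)=k_m(1+\delta)+\delta$. The remaining possibility is that all surviving counter mass lies in lightly loaded cups (counters below $1+\delta$), and one must show the lower-class withdrawals the emptier is forced to make this step still total at least $\delta$ — arguing, e.g., that a counter can only have become this small after the cup absorbed several threshold crossings, on each of which the emptier already subtracted a $\delta$ surplus.

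I expect that final case — $k_m\le p$ with the leftover mass dispersed among many lightly loaded counters — to be the main obstacle; everything else is bookkeeping from three trivial facts: a threshold increments a counter by exactly $1+\delta\le 1+2\delta$, a withdrawal from a counter of value $\ge 1+\delta$ removes at least $1+\delta$, and $1+\delta\le 2$. Two sanity checks close the argument. First, the boundary step $m=i-l+1$ has $W(m-1)=W(i')=0$, so if $k_m\le p$ there the emptier would zero all $k_m\le p<p+1$ crossing cups and leave $W(m)=0$, contradicting $m\in R$; hence that step falls into the already-settled $k_m\ge p+1$ case. Second, the choice of $l$ as the time since $W$ was last $0$ is exactly what makes ``$W(m)>0$'' available throughout $R$, which is the only place the hypothesis of the crux claim is used.
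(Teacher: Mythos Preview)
Your overall structure matches the paper exactly: define $l$ via the last time the total counter $W=\sum_j w_j$ was zero, and show that on every step $m$ in the window the net change satisfies $W(m)-W(m-1)\le T_m(1+\delta)-\delta\le 2T_m-\delta$. You handle the case $k_m\ge p+1$ correctly.

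The gap is precisely the case you flag as the ``main obstacle'': $k_m\le p$ with the surviving mass in cups whose counters lie strictly between $0$ and $1+\delta$. Your suggested fix (an amortized argument about earlier $\delta$-surpluses) does not yield the required \emph{per-step} inequality $D_m\ge k_m(1+\delta)+\delta$, and in fact no such amortization is needed. What you are missing is a one-line invariant the paper exploits: because the algorithm assumes $1/\delta\in\mathbb{N}$, every counter $w_j$ is at all times an integer multiple of $\delta$ (increments are by $1+\delta$, decrements by $\min(1+2\delta,w_j)$, both multiples of $\delta$). With this, your remaining case dissolves: a counter with $0<w_j<1+\delta$ must satisfy $w_j\ge\delta$; by the second priority rule the emptier selects at least one such cup among its $p+1-k_m\ge 1$ remaining slots and removes $\min(1+2\delta,w_j)=w_j\ge\delta$ from it, giving $D_m\ge k_m(1+\delta)+\delta$. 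More succinctly (and this is how the paper phrases it): your own case analysis already shows $D_m>k_m(1+\delta)$ whenever $W(m)>0$; since $D_m$ and $k_m(1+\delta)$ are both multiples of $\delta$, this forces $D_m\ge k_m(1+\delta)+\delta$.

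Two minor remarks. The inference ``$W(m)>0$ after emptying $\Rightarrow$ the emptier spent all $p+1$ operations on nonzero-counter cups'' is false (a single selected cup with $w_j>1+2\delta$ survives with positive counter even though fewer than $p+1$ nonzero-counter cups may have existed), but you never actually use this claim, so it is harmless. And once the multiples-of-$\delta$ invariant is in hand, your boundary sanity check becomes unnecessary: the per-step bound holds uniformly over the window.
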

\begin{proof}
  Suppose $\sum_j w_j(i) > t$, and define $l > 0$ to be the smallest
  $l \ge 0$ such that $\sum_j w_j(i - l) = 0$. (Note that such an $l$ exists
  since $l = i$ will always work.)

  We claim that during each step $m \in \{i - l + 1, \ldots, i\}$, if
  $k$ is the number of thresholds crossed, then the emptier reduces
  the counters by at least
  $$\min(k, p) \cdot (1 + \delta) + \delta.$$

  In particular, during the step $m$, $k$ counters will be incremented
  by $1 + \delta$. It follows that the emptier will be able to
  successfully be able to reduce at least $\min(k, p + 1)$ counters by
  at least $1 + \delta$ each.

  If $k \ge p + 1$, then the total reduction in the counters will therefore be
  at least
  $$(p + 1) \cdot (1 + \delta) \ge \min(k, p) \cdot (1 + \delta) + \delta.$$

  If, on the other hand, $k \le p$, then the emptier will have been
  able to successfully reduce $k$ counters by at least $1 + \delta$
  each. If this is the full extent to which the emptier is able to
  reduce the counters (i.e., the emptier reduces $\sum_j w_j$ by
  exactly $(1 + \delta)k$ in this step), then all $k$ of the reduced
  counters must have had values $w_j$ exactly equal $1 + \delta$ prior
  to the reduction, and all other counters $w_j$ must have already
  been zero. This cannot occur, however, since we know that $\sum_j
  w_j(m)$ is positive at the end of step $m$ (due to the fact that $m$
  comes after step $i - l$). Thus we may conclude that during step $m$
  the emptier reduces the $\sum w_j$ by more than $k \cdot (1 +
  \delta)$. As an invariant, every counter always remains a multiple
  of $\delta$, meaning that the emptier must actually reduce $\sum_j
  w_j$ by at least $k \cdot (1 + \delta) + \delta$.

  We have seen that during step $m$, regardless of whether $k > p$ or
  $k \le p$, the emptier reduces the counters by at least
  $$\min(k, p) \cdot (1 + \delta) + \delta.$$
  
  On the other hand, the filler increased the counters by
  $$(1 + \delta) \cdot k = (1 + \delta) \cdot (T_m + \min(k, p)) =
  \min(k, p) \cdot (1 + \delta) + T_m \cdot (1 + \delta).$$

  Thus the net increase in $\sum_j w_j$ during step $m$ is at most
  $$T_m \cdot (1 + \delta) - \delta.$$

  Over the course of the $l$ steps $i - l + 1, \ldots, i$, the sum
  $\sum_j w_j$ must increase in total by more than $t$. It follows that 
  $$\sum_{m =  i - l + 1}^i (T_m \cdot (1 + \delta) - \delta) > t.$$
  Hence
  \begin{equation}
    2\sum_{m =  i - l + 1}^i T_m  > \delta l + t,
    \eqlabel{eqlwitness}
  \end{equation}
  which establishes that $l$ is a height-$t$ backlog witness.
\end{proof}

By Lemma \ref{lemboundwsum}, if we wish to bound $\Pr[\sum_j w_j > 0]$
at a step $i$, then it suffices to instead bound the probability that
there exists a height-0 backlog witness (which, for brevity, we will
often refer to simply as a backlog witness). In order to do this, we
begin by bounding the probability that $T_k > 0$ for any given $k$.

\begin{lem}
For a given $k$, $\Pr[T_k > 0] \le e^{-\varepsilon^2 p / 6}$.
\label{lemmulti-coresinglestep}
\end{lem}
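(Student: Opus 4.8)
The plan is to bound $\Pr[T_k > 0]$ by showing that $T_k > 0$ requires more than $p$ cups to cross a threshold during step $k$, and then observing that the number of such cups is a sum of independent indicator random variables whose mean is controlled by the filler's budget $(1-\epsilon)p$. First I would fix the step $k$ and condition on the random values $s_j(\cdot)$ that are already in effect \emph{before} step $k$ begins — that is, the layout of thresholds is fixed prior to the filler acting in step $k$, since the thresholds $r_j(i)$ for $i$ in the current block of size $1/\delta$ were chosen at the start of that block. Because the filler is oblivious, $c_j(k)$ is independent of these random values. Then, as in the single-processor analysis (cf.\ Lemma \ref{lemconcentratedthresholds} and the warmup argument), for each cup $j$ the event that cup $j$ crosses a threshold during step $k$ has probability at most $c_j(k)$: the water poured in step $k$ is $c_j(k) \le 1$, and by the spacing property (no two thresholds within $1$ of each other) it can cross at most one threshold, with crossing probability equal to the measure of the ``danger zone'' for the relevant random offset $s_j$, which is at most $c_j(k)$. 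Crucially, these events are independent across cups because the offsets $s_j(\cdot)$ are chosen independently per cup.

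Next, let $X_j$ be the indicator that cup $j$ crosses a threshold in step $k$, and let $X = \sum_j X_j$, so that the number of thresholds crossed during step $k$ equals $X$ (at most one per cup) and $T_k = X - \min(X, p) = \max(X - p, 0)$. Thus $T_k > 0$ if and only if $X > p$, i.e., $X \ge p + 1$. Since $\E[X] = \sum_j c_j(k) \le (1-\epsilon)p$ and the $X_j$ are independent, a multiplicative Chernoff bound applies: with $\mu = \E[X] \le (1-\epsilon)p$, we have
$$\Pr[X \ge p] = \Pr\big[X \ge (1 + \tfrac{\epsilon}{1-\epsilon})\mu\big] \le \exp\!\left(-\frac{(\epsilon/(1-\epsilon))^2 \mu}{3}\right) \le \exp\!\left(-\frac{\epsilon^2 (1-\epsilon) p}{3}\right),$$
using $\mu \le (1-\epsilon)p$ in the last step (and that the deviation ratio is at most $1$ since $\epsilon \le 1/2$, so the ``$\delta \le 1$'' form of the Chernoff bound is valid). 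Since $\epsilon \le 1/2$ gives $1 - \epsilon \ge 1/2$, this is at most $e^{-\epsilon^2 p / 6}$. Finally, $\Pr[T_k > 0] = \Pr[X > p] \le \Pr[X \ge p] \le e^{-\epsilon^2 p/6}$, which is exactly the claimed bound. (One can of course also upper bound $\E[X]$ as if the mean were exactly $(1-\epsilon)p$, since the Chernoff tail is monotone in the mean for a fixed threshold; this is the cleanest way to state it.)

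The only subtle point — and the one I would be most careful about — is justifying that the per-cup crossing events are both (i) of probability at most $c_j(k)$ and (ii) mutually independent, \emph{given} that the threshold offsets are refreshed only every $1/\delta + 1$ steps rather than every step. Point (ii) is immediate because the offsets $s_j$ are drawn independently across cups and the filler is oblivious. For point (i), the key is that within the current block the offset $s_j$ is a single fresh uniform value in $(0,1]$, and conditioned on the amount of water poured into cup $j$ before step $k$ within this block, the position of the ``next threshold'' relative to the current water level is uniform modulo $1$ over an interval of length $1$ containing exactly one threshold (or the threshold may be null that step, in which case crossing probability is $0 \le c_j(k)$); hence the probability the incremental $c_j(k) \le 1$ units of water crosses it is at most $c_j(k)$. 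This is exactly analogous to the argument in Lemma \ref{lemconcentratedthresholds} but localized to a single step, so I would phrase it by reduction to that reasoning. Everything else is a routine Chernoff computation.
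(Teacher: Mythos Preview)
Your proposal is correct and follows essentially the same approach as the paper: express the number of thresholds crossed in step $k$ as a sum of independent indicator variables with total mean at most $(1-\varepsilon)p$, then apply a Chernoff bound and use $\varepsilon \le 1/2$ to simplify the exponent. You are considerably more careful than the paper in justifying that the per-cup crossing probability is at most $c_j(k)$ (including the boundary behavior near null thresholds) and that the events are independent across cups, whereas the paper asserts these facts in a single sentence; but the argument is the same.
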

\begin{proof}
For each cup $j$, if $c_j(k)$ units of water are poured into the cup during step $k$, then the probability of crossing a threshold in that cup is at most $c_j(k)$, depending on the value of the appropriate threshold. Since thresholds are independent between cups, the number of thresholds crossed in step $k$ is a sum of independent indicator variables with total mean at most $(1 - \varepsilon)p$. By a Chernoff bound, the probability that $m$ or more thresholds are crossed is therefore at most
$$e^{-\varepsilon^2 (1 - \varepsilon)p / 3} \le e^{-\varepsilon^2 p / 6},$$
where the final inequality uses that $\varepsilon \le 1/2$.
\end{proof}

It will also be useful to have a bound for $\E[T_k]$.
\begin{lem}
$\E[T_k] \le O(p \cdot e^{-\varepsilon^2 p / 6})$.
\label{lemexpTi}
\end{lem}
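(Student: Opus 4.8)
The plan is to express $\E[T_k]$ as a sum of tail probabilities of the number of threshold crossings during step $k$ and to estimate each term with the Chernoff bounds \eqreff{eqchernoff1} and \eqreff{eqchernoff2}. Let $N_k$ denote the number of thresholds crossed by water poured during step $k$. By the threshold-spacing property each cup crosses at most one threshold per step, so $N_k$ also equals the number of cups that cross a threshold, $T_k = (N_k - p)^+$, and (exactly as in the proof of Lemma~\ref{lemmulti-coresinglestep}) $N_k$ is a sum of independent indicator random variables whose mean $\mu$ satisfies $\mu \le (1 - \varepsilon)p \le p$. If $\mu = 0$ then $T_k = 0$ deterministically and the bound is trivial, so assume $\mu > 0$.

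First I would use the identity $\E[T_k] = \sum_{t \ge 1} \Pr[N_k \ge p + t]$ (the sum is finite since $N_k \le n$ always). Fix $t \ge 1$ and write $p + t = (1 + \gamma)\mu$, so that the excess above the mean is $\gamma\mu = p + t - \mu \ge \varepsilon p + t$. In the regime $\gamma \le 1$, \eqreff{eqchernoff1} gives $\Pr[N_k \ge p+t] \le e^{-\gamma^2 \mu/3}$; using $\gamma^2\mu = (\gamma\mu)^2/\mu \ge (\varepsilon p + t)^2/p \ge \varepsilon^2 p + t^2/p$, this is at most $e^{-\varepsilon^2 p/3}\, e^{-t^2/(3p)}$. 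In the regime $\gamma \ge 1$, \eqreff{eqchernoff2} gives $\Pr[N_k \ge p+t] \le e^{-\gamma\mu/3} \le e^{-(\varepsilon p + t)/3} \le e^{-\varepsilon^2 p/3}\, e^{-t/3}$, using $\varepsilon \le 1$. Hence in all cases
\[
  \Pr[N_k \ge p + t] \;\le\; e^{-\varepsilon^2 p/3}\cdot \max\!\bigl(e^{-t^2/(3p)},\, e^{-t/3}\bigr).
\]
Summing over $t \ge 1$ and using $\sum_{t \ge 1} e^{-t^2/(3p)} = O(\sqrt p)$ together with $\sum_{t \ge 1} e^{-t/3} = O(1)$ yields $\E[T_k] = O\!\bigl(\sqrt p\, e^{-\varepsilon^2 p/3}\bigr)$, and since $e^{-\varepsilon^2 p/6} \le \sqrt p$ this is $O\!\bigl(p\, e^{-\varepsilon^2 p/6}\bigr)$, as claimed.

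I do not expect a genuine obstacle here: the lemma is a routine consequence of Lemma~\ref{lemmulti-coresinglestep} plus a slightly finer accounting of the upper tail of $N_k$. The only point requiring a bit of care is tracking the $t$-dependence of the tail — one must split the Chernoff estimate at the threshold $\gamma = 1$ between the two bounds, and observe that the sub-Gaussian part of the sum contributes $O(\sqrt p)$ rather than $O(1)$. This loss is harmless, since the target bound already carries a spare factor of $p$ relative to the single-step bound $\Pr[T_k > 0] \le e^{-\varepsilon^2 p/6}$.
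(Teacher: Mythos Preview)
Your argument is correct and actually yields the slightly sharper bound $\E[T_k] = O(\sqrt{p}\,e^{-\varepsilon^2 p/3})$, which you then weaken to the stated estimate. The paper takes a shorter route with a coarser split: it writes $\E[T_k] \le p \cdot \Pr[T_k > 0] + \E[T_k \cdot \mathbb{I}(T_k > p)]$, plugs in Lemma~\ref{lemmulti-coresinglestep} directly for the first term to get $p\,e^{-\varepsilon^2 p/6}$, and bounds the second term by a tail sum over $m > p$ using the large-deviation Chernoff bound $\Pr[T_k > m] \le e^{-m/3}$, which is $O(p\,e^{-p/3})$ and hence negligible. So the paper avoids your $\gamma \lessgtr 1$ case split by simply paying the full factor of $p$ up front on the event $\{T_k > 0\}$; your approach does a finer accounting of the sub-Gaussian part of the tail, at the cost of a couple more lines. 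Both are routine once Lemma~\ref{lemmulti-coresinglestep} is in hand.
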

\begin{proof}
Recalling that the number of thresholds crossed in step $k$ is a sum of independent indicator variables with total mean at most $(1 - \varepsilon)p$,
\begin{align*}
\E[T_k]& \le \Pr[T_k > 0] \cdot p + \E[T_k \cdot \mathbb{I}(T_k > p)]\\
       & \le p e^{-\varepsilon^2p / 6} + \sum_{m > p} m \cdot \Pr[T_k > m]\\
       & \le p e^{-\varepsilon^2p / 6} + \sum_{m > p} m \cdot e^{-m / 3}\\
       & \le p e^{-\varepsilon^2p / 6} + O(p \cdot e^{-p/3})\\
       & \le O(p e^{-\varepsilon^2p / 6}).
\end{align*}
\end{proof}

Now we return to our task of bounding the probability of there existing a backlog witness $l > 0$. Define $M = e^{\varepsilon^2p / 12}$. By Lemma \ref{lemmulti-coresinglestep} and the union bound, with probability at least $1 - 1/M$ we have $T_i, T_{i - 1}, \ldots, T_{i - M + 1} = 0$, and thus that $1, 2, \ldots, M$ are not backlog witnesses.

Next we consider the probability that $l$ is a backlog witness for $l > M$. Unfortunately, $T_i + T_{i - 1} + \cdots + T_{i - l + 1}$ cannot easily be expressed as a sum of independent random variables. In particular, because each $T_i$ can only be made to be greater than zero by a combination of events in multiple cups, one cannot perform a per-cup analysis in the same way as for the single-processor case.

Instead, we will employ a more sophisticated analysis in order to
prove the following proposition:
\begin{prop}
  Suppose $1/2 \ge \varepsilon \ge p^{-1/3}$ and $\delta < 1$ is
  sufficiently large in $\Omega(e^{-O(\varepsilon^2p)})$. For any $l >
  0$, and any step $i$, the probability that $l$ is a height-$t$
  backlog witness for step $i$ is at most
  $$O(pl) \cdot \exp\left(-\Omega\left(\left(\frac{\delta^2 \left(\delta l +
      t\right)^2}{lp}\right)^{1/3}\right)\right).$$
  \label{propwitnessprob}
\end{prop}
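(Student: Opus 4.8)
The plan is to prove Proposition~\ref{propwitnessprob} via McDiarmid's inequality applied to the random variable $Y = \sum_{m=i-l+1}^i T_m$, which by Lemma~\ref{lemboundwsum} controls whether $l$ is a height-$t$ backlog witness (we need $\Pr[2Y \ge \delta l + t]$). By Lemma~\ref{lemexpTi}, $\E[Y] \le l \cdot O(pe^{-\varepsilon^2 p/6})$; since $\delta$ is chosen large in $\Omega(e^{-O(\varepsilon^2 p)})$, the ``target'' deviation $R = \tfrac{1}{2}(\delta l + t) - \E[Y]$ is at least $\Omega(\delta l + t)$, so it suffices to bound $\Pr[Y \ge \E[Y] + R]$ with $R = \Omega(\delta l + t)$.

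First I would set up the functional representation: $Y$ is a deterministic function of the random values of the threshold collections (each collection in cup $j$ has its random offset $s_j(\cdot) \in (0,1]$, shared across $1/\delta$ consecutive thresholds). As noted in the technical overview, only collections with nonzero hitting capacity during steps $i-l+1,\ldots,i$ matter, and there are at most $n + lp$ of these; a single collection's random value affects $Y$ by at most $\delta^{-1}$ (it governs when at most $\delta^{-1}$ threshold crossings happen, each changing $Y$ by at most $1$). The naive McDiarmid application with $c = \delta^{-1}$, $t = n + lp$ only helps when $lp \gtrsim n$, so the crux is to do better for small $l$. Following the overview, I would partition the relevant collections into $Q_1,\ldots,Q_{pl+1}$ components and define $X_s$ to reveal the offsets in $Q_s$; the design assigns collections with large hitting capacity to singleton components and lumps the many low-capacity collections together. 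The key claim is that, with probability $\ge 1 - O(pl)\exp(-\Omega(\cdot))$, no single $X_s$ can change $Y$ by more than some bound $c$ — collections in a bulk component $Q_s$ together have total hitting capacity $O(1)$, so revealing $X_s$ changes $Y$ by $O(1/\delta)$ in expectation, and a Chernoff bound controls the tail of the actual change.

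Once the bounded-differences event holds with $c = O(1/\delta)$ and $t' = pl+1$ components, McDiarmid's inequality gives
\[
\Pr[Y \ge \E[Y] + R] \le \exp\!\left(-\frac{2R^2}{c^2 t'}\right) \le \exp\!\left(-\Omega\!\left(\frac{\delta^2 R^2}{pl}\right)\right),
\]
and substituting $R = \Omega(\delta l + t)$ yields $\exp(-\Omega(\delta^2(\delta l + t)^2/(lp)))$. The extra $O(pl)$ factor and the cube root in the exponent come from the conditioning argument: to make the ``no $X_s$ has large influence'' event hold with the stated probability one needs to tune the capacity threshold defining singleton-vs-bulk components, which trades off the number of components against the per-component difference bound; optimizing this trade-off (balancing a term like $\exp(-\Omega(\delta^2(\delta l+t)^2/(lp)))$ against $pl \cdot \exp(-\Omega(\text{threshold}))$) is what produces the $1/3$ power. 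I would carry this out by: (1) reducing to a deviation bound via Lemmas~\ref{lemboundwsum} and~\ref{lemexpTi}; (2) defining hitting capacities and the partition, with a parameter for the capacity cutoff; (3) proving the bounded-differences event via Chernoff bounds on per-component crossing counts and a union bound over the $O(pl)$ components; (4) invoking McDiarmid conditioned on that event; (5) optimizing the cutoff parameter.

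The main obstacle is step~(3): correctly defining the partition and showing that the bulk components genuinely have small influence. The subtlety, flagged in the overview, is the chain of dependencies — whether $T_m > 0$ depends on \emph{which} $p+1$ cups crossed thresholds, so a collection's effect on $Y$ is not simply additive across steps. The argument must bound the influence of a bulk component $Q_s$ not by summing worst cases over its collections (that would be too weak) but by observing that changing $X_s$ only changes the crossing pattern within those collections, whose total expected crossings over all $l$ steps is $O(1)$, and that each surplus count $T_m$ changes by at most the number of crossings in $Q_s$ during step $m$; a Chernoff bound on $\sum_m (\text{crossings in } Q_s \text{ at step } m)$ then caps the total influence. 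Making this precise — including the bookkeeping that at most $n+lp$ collections are relevant and that the cutoff yields exactly $\le pl+1$ components — is the delicate part; the rest is routine Chernoff/McDiarmid manipulation and the final parameter optimization.
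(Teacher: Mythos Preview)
Your plan matches the paper's approach in its skeleton---partition the threshold collections into $Q_1,\ldots,Q_{pl+1}$, apply McDiarmid, condition on a bounded-differences event, and union-bound the failure---but you have misidentified where the cube root comes from, and this matters for executing step~(5).

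The capacity cutoff separating singleton from bulk components is \emph{not} the parameter being optimized; in the paper it is fixed at $1$. Collections with hitting capacity $\ge 1$ become singletons (and satisfy $|X_t|\le 1/\delta$ deterministically); the remaining collections are grouped so each bulk component has total capacity in $[1,2]$, which already forces the total number of components to be at most $pl+1$. The Lipschitz constant is therefore not $O(1/\delta)$ uniformly: for a bulk component $Q_t$, the quantity $|X_t|$ is a sum of independent indicators with mean $\le 2$, so one only has $|X_t|\le c(\log r + 1/\delta)$ with probability $\ge 1-1/r$ for a free parameter $r$ (this is Lemma~\ref{lemboundXt}). McDiarmid then gives
\[
\exp\!\Bigl(-\Omega\!\Bigl(\tfrac{\delta^2 R^2}{\log^2 r\cdot lp}\Bigr)\Bigr),
\]
while the conditioning costs an additive $O(pl/r)=O(pl)\cdot e^{-\ln r}$. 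Balancing these by taking $\ln r = (\delta^2 R^2/(lp))^{1/3}$ is what produces the $1/3$ power. If you tried instead to tune the capacity cutoff as you propose, you would not get a free parameter with the right tradeoff: lowering the cutoff does not shrink the worst-case $|X_t|$ for singleton components (still $1/\delta$), and raising it does not help the bulk components either. With the corrected optimization over $r$ in place of your step~(5), the rest of your outline---including the reduction via Lemmas~\ref{lemboundwsum} and~\ref{lemexpTi} and the bound $\E[Y]\le \delta l/8$---goes through as in the paper.
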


Before presenting a proof of Proposition \ref{propwitnessprob}, we
first use the proposition in order to complete the proof of Theorem
\ref{thmmulti-core}.

\begin{proof}[Proof of Theorem \ref{thmmulti-core}]
  By Lemma \ref{lemcounterbacklog}, it suffices to examine the probability that $\sum_j w_j(i) = 0$. This is guaranteed to occur whenever there are no backlog witnesses $l \ge 1$ for step $i$.

  We have already shown that the probability of any $l \le M$ being a
  backlog witness is at most $1/M$. Now consider some $l \ge M$. By
  Proposition \ref{propwitnessprob} the probability of $l$ being a
  backlog witness is at most
$$O(pl) \cdot \exp\left(-\Omega\left(\left(\frac{\delta^4
    l}{p}\right)^{1/3}\right)\right).$$ Using the fact that $\delta$
  is sufficiently large in $\Omega\left(e^{-O(\varepsilon^2
    p)}\right)$, we may assume that $\delta \ge e^{p^{1/3} /
    120}$. Using the fact that $l \ge M \ge e^{p^{1/3} / 12} \ge
  \Omega(p^{10})$, and that $l \ge M \ge 1/\delta^{10}$, it follows
  that the probability of $l$ being a backlog witness is at most
\begin{align*}
O(pl) \cdot e^{-\Omega\left(l^{1/6}\right)} \le O\left(e^{-\Omega\left(l^{1/6}\right)}\right) \le O\left(\frac{1}{l^2}\right).
\end{align*}
By the union bound, the probability of there being any any $l \ge M$ that is a backlog witness is at most
$$O\left(\sum_{l = M - 1}^\infty \frac{1}{l^2}\right) = O\left(\int_{M - 1}^\infty \frac{1}{l^2} dl \right) = O(1/M).$$

Recall that with probability at least $1 - 1/M$, no value of $l \le M$
is a backlog witness. It follows that with probability $1 - O(1/M)$
there are no backlog witness values of $l$. Since $M =
e^{\varepsilon^2p / 12}$, this completes the proof.
\end{proof}

The remainder of the section is devoted to proving Proposition
\ref{propwitnessprob}.

\subsection{Proof of Proposition \ref{propwitnessprob}}

An essential ingredient to the proof will be the use of McDiarmid's Inequality.

\begin{thm}[McDiarmid's Inequality \cite{McDiarmid89}]
Let $X_1, \ldots, X_m$ be independent random variables over an arbitrary probability space. Let $F$ be a function mapping $X_1, \ldots, X_m$ to $\mathbb{R}$, and suppose $F$ satisfies,
$$\sup_{x_1, x_2, \ldots, x_n, \overline{x_i}} |F(x_1, x_2, \ldots, x_{i - 1}, x_i, x_{i + 1}, \ldots , x_n) - F(x_1, x_2, \ldots, x_{i - 1}, \overline{x_i}, x_{i + 1}, \ldots , x_n)| \le c,$$
for all $1 \le i \le n$. That is, if $X_1, X_2, \ldots, X_{i - 1}, X_{i + 1}, \ldots, X_n$ are fixed, then the value of $X_i$ can affect the value of $F(X_1, \ldots, X_n)$ by at most $c$. Then for all $R > 0$,
$$\Pr[F(X_1, \ldots, X_n) - \E[F(X_1, \ldots, X_n)] \ge R] \le e^{-2R^2 / (c^2n)},$$
and 
$$\Pr[F(X_1, \ldots, X_n) - \E[F(X_1, \ldots, X_n)] \le -R] \le e^{-2R^2 / (c^2n)}.$$
\end{thm}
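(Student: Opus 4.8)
The plan is to prove McDiarmid's inequality by the classical route: build the Doob martingale associated with $F$, show that its increments are bounded, and then run the Azuma--Hoeffding exponential-moment argument. Concretely, set $Z_0 = \E[F(X_1,\ldots,X_n)]$ and, for $1 \le i \le n$, let $Z_i = \E[F(X_1,\ldots,X_n) \mid X_1,\ldots,X_i]$. Then $(Z_i)_{i=0}^n$ is a martingale, $Z_n = F(X_1,\ldots,X_n)$, and $Z_n - Z_0 = \sum_{i=1}^n D_i$ where $D_i = Z_i - Z_{i-1}$ has conditional mean zero given $X_1,\ldots,X_{i-1}$.

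The first key step is the increment bound: I would show that, conditioned on $X_1,\ldots,X_{i-1}$, the difference $D_i$ lies in a (random, $(X_1,\ldots,X_{i-1})$-measurable) interval of width at most $c$. Writing $g(x_1,\ldots,x_i) = \E\bigl[F(x_1,\ldots,x_i,X_{i+1},\ldots,X_n)\bigr]$ --- which is legitimate precisely because $X_{i+1},\ldots,X_n$ are independent of $X_1,\ldots,X_i$ --- we have $Z_i = g(X_1,\ldots,X_i)$ and $Z_{i-1} = \E_{X_i}\bigl[g(X_1,\ldots,X_{i-1},X_i)\bigr]$, so $D_i = g(X_1,\ldots,X_i) - \E_{X_i}[g(X_1,\ldots,X_i)]$. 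The bounded-differences hypothesis gives $\sup_{x,x'} \bigl|g(X_1,\ldots,X_{i-1},x) - g(X_1,\ldots,X_{i-1},x')\bigr| \le \sup_{x,x'} \E\bigl[|F(\ldots,x,\ldots) - F(\ldots,x',\ldots)|\bigr] \le c$, so $D_i$ is trapped in an interval of length at most $c$.

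Next I invoke Hoeffding's lemma: if a random variable $V$ has $\E[V]=0$ and is supported in an interval of length $c$, then $\E[e^{\lambda V}] \le e^{\lambda^2 c^2/8}$ for every $\lambda \in \mathbb{R}$. Applying this conditionally gives $\E[e^{\lambda D_i} \mid X_1,\ldots,X_{i-1}] \le e^{\lambda^2 c^2/8}$, and peeling off the increments one at a time via the tower property yields $\E[e^{\lambda(Z_n - Z_0)}] \le e^{\lambda^2 c^2 n/8}$. Markov's inequality then gives, for $\lambda > 0$, $\Pr[Z_n - Z_0 \ge R] \le e^{-\lambda R + \lambda^2 c^2 n/8}$; optimizing at $\lambda = 4R/(c^2 n)$ produces $\Pr[F - \E F \ge R] \le e^{-2R^2/(c^2 n)}$. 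The lower-tail bound follows by applying the upper-tail bound to $-F$, which has the same bounded-differences constant $c$.

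The main obstacle is the increment bound in the second step: one must be careful that freezing $X_i$ and averaging out $X_{i+1},\ldots,X_n$ commute, which genuinely uses independence of the coordinates, and then that the pointwise bounded-differences property survives this averaging. Everything after that --- Hoeffding's lemma, the tower-property peeling, and the Chernoff optimization --- is routine. If one prefers not to cite Hoeffding's lemma, it can be proved in a few lines: bound $e^{\lambda t}$ on $[a,b]$ by its chord, take expectations to get $\E[e^{\lambda V}] \le e^{\lambda a}\frac{b}{b-a} - e^{\lambda b}\frac{a}{b-a}$ when $\E[V]=0$, and then bound the logarithm of the right-hand side by $\lambda^2(b-a)^2/8$ via a second-order Taylor estimate.
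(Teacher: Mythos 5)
Your proof is correct and follows exactly the route the paper itself gestures at: the paper does not prove McDiarmid's inequality but cites it and remarks that it follows by applying Azuma's inequality to the Doob martingale $B_i = \E[F(X_1,\ldots,X_n)\mid X_1,\ldots,X_i]$, which is precisely the martingale you construct and analyze (with the bounded-increment step and Hoeffding's lemma spelled out, and the constant $e^{-2R^2/(c^2 n)}$ coming out correctly from the optimization $\lambda = 4R/(c^2 n)$). So your writeup is a correct, fully detailed version of the same standard argument the paper invokes by reference.
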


McDiarmid's inequality can be viewed as a special case of Azuma's inequality applied a Doob martingale \cite{AlonSp04}. In particular, one can obtain McDiarmid's inequality by applying Azuma's inequality to the martingale $(B_0 , \ldots, B_n)$ with $B_i = \E[F(X_1, \ldots, X_n) \mid X_1, \ldots, X_i]$. We will be using McDiarmid's inequality directly, without explicitly constructing the corresponding martingale.



In our application of McDiarmid's inequality, we will partition the values $s_j(k)$ (i.e., the random values used in our algorithm) into a partition $Q_1, \ldots, Q_{pl + 1}$. Roughly speaking, each random variable $X_t$ will represent the values of all the $s_j(k)$'s in the component $Q_t$ of the partition. The key technical difficulty is to design the partition $Q_1, \ldots, Q_{pl + 1}$ so that the condition for McDiarmid's inequality is met, thereby allowing us to obtain a concentration inequality on the value of $T_i + T_{i - 1} + \cdots + T_{i - l + 1}$.

For a cup $j$, and a non-negative $k$ satisfying $k \equiv 1 \mod 1/\delta + 1$, define the \defn{threshold collection $(j, k)$} to be the set of thresholds $r_j(k + 1), \ldots, r_j(k + 1/\delta)$. In particular, these are the thresholds whose values are determined by $s_j(k)$. We say that step $t$ is \defn{hit by threshold collection $(j, k)$} if during step $t$, the total amount of water ever poured into cup $j$ crosses one of the thresholds in the collection $(j, k)$ (i.e., one of $r_j(k + 1), \ldots, r_j(k + 1/\varepsilon)$). For each step $t$, the number of thresholds crossed during that step is equal to the number of threshold collections $(j, k)$ which hit step $t$. (In particular, the step cannot cross more than one threshold in a single cup, and thus cannot cross more than two thresholds from the same collection.)

Let $a_j(i)$ denote the total amount of water poured into cup $j$ in the first $i$ steps. We say that a threshold collection $(j, k)$ has \defn{hitting capacity} $s$ if $\min(a_i, k + 1/\delta + 1) - \max(a_{i - l + 1}, k + 1) = s$. In other words, during steps $i - l + 1, \ldots, i$, a total of $s$ units of the poured water have the possibility of crossing one of the thresholds in the collection $(j, k)$. Notably, the sum of the hitting capacities of all threshold collections is at most $(1 - \varepsilon)pl$, upper bounded by the total amount of water poured during the steps. Moreover, if a threshold collection $(j, k)$ has hitting capacity $s < 1$, then at most one of the steps can be hit by the collection $(j, k)$, and the probability of any step being hit by the collection $(j, k)$ is $s$ (with the outcome depending on $s_j(k)$).

Using this notion of hitting capacity, we now define a partition $Q_1, \ldots, Q_{pl + 1}$ of all threshold collections $(j, k)$. If a threshold collection $(j, k)$ has hitting capacity $s \ge 1$, then we assign it to its own component of the partition. These components of the partition are called \defn{single-threshold components}. The threshold collections $(j, k)$ with hitting capacity less than one are placed in components such that the sum of the hitting capacities within any component is between 1 and 2 (although one of the components may have hitting capacity less than one as an edge case). These components of the partition are called the \defn{multi-threshold components}. Note that the resulting partition has at most $pl + 1$ components since the sum of the hitting capacities in each component is at least $1$, with the exception of at most one component. Moreover, we may assume that the partition has exactly $pl + 1$ components by adding empty components as necessary.

Define the random variables $X_1, \ldots, X_{pl + 1}$ so that $X_t$ is the set of pairs $((j, k), r)$ of threshold collections $(j, k) \in Q_t$ and steps $r \in \{i - l + 1, \ldots, i\}$ such that step $r$ is hit by threshold collection $(j, k)$. Notice that the $X_t$'s are independent of one-another, since they depend on disjoint sets of threshold collections. Moreover, if $T$ is the sum of $T_i + T_{i - 1} + \cdots + T_{i - l + 1}$, then $T$ can be expressed as a deterministic function of the $X_t$'s, with
$$T = F(X_1, \ldots, X_{pl + 1}).$$

Additionally, for a given value $A$ of $X_t$,
\begin{equation}
|F(X_1, \ldots, X_{t - 1}, A, X_{t + 1}, \ldots , X_{pl + 1}) - F(X_1, \ldots, X_{t - 1}, \emptyset, X_{t + 1}, \ldots , X_{pl + 1})| \le |A|,
\eqlabel{eqalmostmcdiarmid}
\end{equation}
since adding any element $((j, k), m)$ to $X_t$ can affect at most one of $T_i, T_{i - 1}, \ldots, T_{i - l + 1}$. Namely, it can only affect $T_m$, and will either increase $T_m$ by one, or not affect $T_m$ at all.

If each $|X_t|$ were guaranteed to be small, then \eqref{eqalmostmcdiarmid} would allow us to apply McDiarmid's inequality to $F(X_1, \ldots, X_{pl + 1})$. As a step in this direction, the following lemma gives a probabilistic bound on $|X_t|$.

\begin{lem}
For any component $Q_t$ and any $r \ge 0$, with high probability in $r$,
$$|X_t| \le O(\log r + 1/\delta).$$
\label{lemboundXt}
\end{lem}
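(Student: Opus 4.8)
The plan is to separate into two cases according to whether $Q_t$ is a single-threshold component or a multi-threshold component. First I would handle the single-threshold case: here $Q_t$ consists of exactly one threshold collection $(j,k)$, which contains at most $1/\delta$ thresholds, and each threshold can be crossed at most once during the $l$ steps $i-l+1,\ldots,i$; hence $|X_t| \le 1/\delta$ deterministically, which is already within the claimed bound $O(\log r + 1/\delta)$. So the single-threshold case is trivial and requires no probability at all.

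The substance is the multi-threshold case, where $Q_t$ is a union of many threshold collections $(j,k)$ each having hitting capacity less than one, with the total hitting capacity of $Q_t$ between $1$ and $2$. For each such collection, the probability that it hits some step during $i-l+1,\ldots,i$ is exactly its hitting capacity $s_{(j,k)} < 1$, and these events are independent across collections (they depend on distinct random values $s_j(k)$). Moreover each collection that does hit contributes at most $1/\delta$ to $|X_t|$ (at most one hit per cup per step, but a collection spans $1/\delta$ consecutive potential thresholds... actually at most one hit, since a collection with capacity $<1$ can be hit by at most one step — I would use this sharper bound). Wait: a collection with hitting capacity $s<1$ is hit by at most one step, contributing at most $1$ to $|X_t|$. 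So $|X_t|$ is a sum of independent indicator variables with total mean equal to the total hitting capacity of $Q_t$, which is at most $2$. Then a standard Chernoff bound gives $\Pr[|X_t| \ge C(\log r)] \le r^{-\Omega(C)}$ for an appropriate constant, i.e., $|X_t| = O(\log r)$ with high probability in $r$. Combining with the deterministic $1/\delta$ term to cover the single-threshold case uniformly yields $|X_t| \le O(\log r + 1/\delta)$ with high probability in $r$, as claimed.

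The only mild subtlety — and the step I would be most careful about — is correctly bounding, in the multi-threshold case, the contribution of a single collection to $|X_t|$: one must invoke the earlier observation that a collection of hitting capacity $s<1$ can be hit by at most one step, so it contributes at most $1$, rather than worrying about the $1/\delta$ thresholds inside it. With that in hand, $|X_t|$ in the multi-threshold case is genuinely a sum of at most $(1-\varepsilon)pl$ independent $0$/$1$ variables of total mean $\le 2$, and the Chernoff tail bound for the upper deviation of such a sum is immediate. The $1/\delta$ term in the statement is there purely to absorb the deterministic single-threshold case, so no probabilistic argument is needed for it.
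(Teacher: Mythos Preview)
Your proposal is correct and follows essentially the same approach as the paper: the case split into single-threshold components (deterministic bound $|X_t|\le 1/\delta$) versus multi-threshold components (write $|X_t|$ as a sum of independent indicators with mean at most $2$ and apply a Chernoff bound) is exactly what the paper does. Your identification of the key subtlety---that a collection of hitting capacity $s<1$ can be hit by at most one step, so it contributes an indicator rather than up to $1/\delta$---is precisely the point the paper uses as well.
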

\begin{proof}
For single-threshold components $Q_t$, we are guaranteed that $|X_t| \le 1/\delta$, since a single threshold collection can hit at most $1/\delta$ steps.

Now suppose $Q_t$ is a multi-threshold component. For each threshold collection $(j, k) \in Q_t$, let $Y_{j, k}$ be the random variable counting the number of steps in $\{i - r + 1, \ldots, i\}$ that $(j, k)$ hits. Recall that because $Q_t$ is a multi-threshold component, the threshold collection $(j, k)$ must have some hitting capacity $s < 1$, and will thus either hit exactly one of the steps with probability $s$, or will hit none of the steps with probability $(1 - s)$ (with the outcome depending on the value of the random variable $s_j(k)$). Thus each $Y_{j, k}$ is an indicator variable which takes value $1$ with probability equal to the hitting capacity of $(j, k)$. Moreover, the random variables $Y_{j, k}$ are independent, since the random variables $s_j(k)$ are independent. Thus the size of $X_t$ can be expressed as the sum
\begin{equation}
|X_t| = \sum_{(j, k) \in Q_t} Y_{j, k}
\eqlabel{eqsizeXt}
\end{equation}
of independent indicator variables. Moreover, $\E[X_t]$ will be the sum of the hitting capacities of the threshold collections $(j, k) \in Q_t$, which is at most $2$. In order to apply a Chernoff bound to the summation \eqref{eqsizeXt}, we must also note that there are only finitely many $(j, k)$ with hitting capacity $s > 0$, so \eqref{eqsizeXt} can be regarded as a finite sum. Applying a Chernoff bound, we see that for any $c > 1$, 
$$\Pr[|X_t| > 2 + c \log r] \le e^{-c\log r / 3},$$
completing the proof.
\end{proof}

Let $r$ be a quantity that we will determine a value for later. Define
random variables $X_1', \ldots, X_{p + 1}'$ such that each $X_t'$ is
selected independently from the same distribution as $X_t$, except
restricted to the case where $|X_t| \le c (\log r + 1/\delta)$, for
some sufficiently large constant $c$. In particular, we select $c$ to
be sufficiently large so that Lemma \ref{lemboundXt} guarantees that
each $X_t$ satisfies $|X_t| \le c(\log r + 1/\delta)$ with probability
at least $1 - 1/r$.

Consider the random variable
  $$A = F(X_1', \ldots, X_{pl + 1}').$$
  By \eqref{eqalmostmcdiarmid}, we may apply McDiarmid's inequality to get
  \begin{equation}
  \begin{split}
    \Pr[A \ge \E[A] + R] & \le \exp\left(-\Omega\left(\frac{R^2}{(\log^2 r + 1/\delta^2) \cdot lp}\right)\right) \\
    & \le \exp\left(-\Omega\left(\frac{\delta^2 R^2}{\log^2 r \cdot lp}\right)\right).
  \end{split}
  \eqlabel{eqboundA2}
  \end{equation}
  Recall that each $X_t$ satisfies $|X_t| \le c(\log r + 1/\delta)$
  with probability at least $1 - 1/r$. Since there are $pl + 1$
  such $X_t$, it follows by the union bound that with probability at
  least $1 - O(pl / r)$, $|X_t| \le c (\log r + \delta^{-1})$ for all
  $t$.  If we define the random variable $B = T_{i - l + 1} + \cdots +
  T_i = F(X_1, \ldots, X_{pl + 1})$, then it follows from
  \eqref{eqboundA2} that
  \begin{equation*}
    \begin{split}
      \Pr[B \ge \E[A] + R] & \le \exp\left(-\Omega\left(\frac{\delta^2 R^2}{\log^2 r \cdot lp}\right)\right) + O\left(\frac{pl}{r}\right) \\
       & \le pl \cdot \left(\exp\left(-\Omega\left(\frac{\delta^2 R^2}{\log^2 r \cdot lp}\right)\right) + O(e^{-\ln r})\right). \\
    \end{split}
  \end{equation*}
  Plugging in $\ln r = \left(\frac{\delta^2 R^2}{lp}\right)^{1/3}$, it follows that
  \begin{equation}
    \Pr[B \ge \E[A] + R] \le O(pl) \cdot \exp\left(-\Omega\left(\left(\frac{\delta^2 R^2}{lp}\right)^{1/3}\right)\right).
    \eqlabel{eqprelemea}
  \end{equation}

  Note that \eqref{eqprelemea} is only useful when $pl / r$ is
  small. When this is the case, we can bound $\E[A]$ to be no more
  than $\delta r / 8$, thereby turning \eqref{eqprelemea} into a
  statement about $B$ only.
  \begin{lem}
    Suppose that $pl / r < 1/2$. Then,
    $$\E[A] \le \delta l / 8.$$
    \label{lemea2}
  \end{lem}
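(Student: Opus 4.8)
The plan is to identify $\E[A]$ with a \emph{conditional} expectation of $B:=T_{i-l+1}+\cdots+T_i$ and then apply the expectation bound of Lemma~\ref{lemexpTi}. For each $t$ let $\mathcal E_t$ denote the event $|X_t|\le c(\log r+1/\delta)$, and set $\mathcal E:=\bigcap_{t}\mathcal E_t$. Because the $X_t$ are mutually independent and each $\mathcal E_t$ is measurable with respect to $X_t$ alone, conditioning the joint law of $(X_1,\dots,X_{pl+1})$ on $\mathcal E$ is the same as conditioning each coordinate separately on its own event. Hence $(X_1',\dots,X_{pl+1}')$ has exactly the law of $(X_1,\dots,X_{pl+1})$ conditioned on $\mathcal E$, and therefore
$$\E[A]=\E\bigl[F(X_1,\dots,X_{pl+1})\mid\mathcal E\bigr]=\E[B\mid\mathcal E].$$

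Since $B\ge 0$, this gives $\E[A]\le \E[B]/\Pr[\mathcal E]$. Next I would lower bound $\Pr[\mathcal E]$ by a positive constant (say $1/2$): a union bound over the $pl+1$ events $\mathcal E_t^{c}$, combined with the tail estimate from the proof of Lemma~\ref{lemboundXt} (taking the constant $c$ large enough that $\Pr[\mathcal E_t^{c}]\le 1/r^2$), yields $\Pr[\mathcal E^{c}]\le (pl+1)/r^2$, which is below $1/2$ under the hypothesis $pl/r<1/2$ (note that $p,l\ge 1$ already forces $r>2$, so the $r^2$ in the denominator is harmless). Thus $\E[A]\le 2\,\E[B]$.

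It remains to bound $\E[B]=\sum_{m=i-l+1}^{i}\E[T_m]$. By Lemma~\ref{lemexpTi} each summand is $O(p\,e^{-\varepsilon^2 p/6})$, so $\E[B]\le l\cdot O(p\,e^{-\varepsilon^2 p/6})$. Using the ambient hypothesis $\varepsilon\ge p^{-1/3}$ of Proposition~\ref{propwitnessprob} — which makes $\varepsilon^2 p\ge p^{1/3}$, so that $\log p$ is negligible compared to $\varepsilon^2 p$ — the polynomial factor $p$ can be absorbed into the exponent, giving $p\,e^{-\varepsilon^2 p/6}=e^{-\Omega(\varepsilon^2 p)}$ and hence $\E[A]\le l\cdot e^{-\Omega(\varepsilon^2 p)}$. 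Since $\delta$ is assumed sufficiently large in $\Omega(e^{-O(\varepsilon^2 p)})$, the constants can be chosen so that this last quantity is at most $\delta l/8$, which is the claim.

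The only delicate point is the first step: the identity $\E[A]=\E[B\mid\mathcal E]$ rests on the observation that replacing each $X_t$ by its truncated counterpart $X_t'$ is \emph{precisely} conditioning the product measure on $\mathcal E$, so that $\E[A]$ equals $\E[B\mid\mathcal E]$ exactly rather than merely up to constants or up to an additive error. Once this is in hand, the remainder is a one-line union bound together with a direct appeal to the expectation bound of Lemma~\ref{lemexpTi}; no martingale argument or further concentration inequality is needed for this lemma.
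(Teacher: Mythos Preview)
Your argument is essentially the same as the paper's. Both proofs recognize that $\E[A]=\E[B\mid\mathcal E]$, use nonnegativity of $B$ to get $\E[A]\le \E[B]/\Pr[\mathcal E]$, bound $\Pr[\mathcal E]\ge 1/2$ via a union bound under the hypothesis $pl/r<1/2$, and then invoke Lemma~\ref{lemexpTi} together with the assumptions $\varepsilon\ge p^{-1/3}$ and $\delta\in\Omega(e^{-O(\varepsilon^2 p)})$ to conclude $\E[B]\le\delta l/16$. One small wrinkle: the constant $c$ in the truncation threshold was already fixed (prior to the lemma) so that $\Pr[\mathcal E_t^c]\le 1/r$, not $1/r^2$, so you cannot re-choose it inside this proof; but the paper simply uses $\Pr[\mathcal E]\ge 1-pl/r>1/2$ directly, and your argument goes through just as well with that bound.
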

  \begin{proof}
    Defining $B$ as above, we begin by considering $\E[B] = \E[T_i + T_{i - 1} + \cdots + T_{i - l + 1}]$. By Lemma \ref{lemexpTi}, and by linearity of expectation,
    $$\E[T_i + T_{i - 1} + \cdots + T_{i - l + 1}] \le O(l \cdot p
 \cdot   e^{-\varepsilon^2 p / 6}).$$ Since $\varepsilon \le p^{-1/3}$, this is
$l \cdot O(e^{-\Omega(\varepsilon^2 p)})$.  Using the fact that
    $\delta$ is sufficiently large in $\Omega(e^{-O(\varepsilon^2 p)})$, it
    follows that
    \begin{equation}
\E[B] \le \delta l / 16.
\eqlabel{eqEB}
    \end{equation}
    On the other hand, $\E[B] = \E[F(X_1, \ldots, X_{pl + 1})]$ can also
    be expressed as,
    \begin{align*}
      &  \E[F(X_1', \ldots, X_{pl + 1}')] \cdot \Pr[X_1, \ldots, X_{lp + 1} \le c(\log r + \delta^{-1})] \\
      & + 
      \E[F(X_1, \ldots, X_{pl + 1}) \cdot \mathbb{I}(X_t > c (\log r + \delta^{-1}) \text{ for some }t)] \\
         & = \E[A] \cdot \Pr[X_1, \ldots, X_{lp + 1} \le c(\log l + \delta^{-1})] + 
      \E[F(X_1, \ldots, X_{pl + 1}) \cdot \mathbb{I}(X_t > c (\log l + \delta^{-1}) \text{ for some }t)] \\
      & \ge  (1 - pl / r) \cdot \E[A] \\
      & \ge \frac{1}{2} \E[A],
    \end{align*}
    where the final step uses that $pl / r \le 1/2$. Applying \eqref{eqEB}, it
    follows that
    $\E[A] \ge \delta l/8,$ as desired.
    
  \end{proof}
  
  Applying Lemma \ref{lemea2} to \eqref{eqprelemea}, we get that when
  $pl/r < 1/2$,
  \begin{equation}
    \Pr[B \ge \delta l / 8 + R] \le O(pl) \cdot \exp\left(-\Omega\left(\left(\frac{\delta^2 R^2}{lp}\right)^{1/3}\right)\right).
    \eqlabel{eqlemeaconsequence}
  \end{equation}
  On the other hand, when $pl/r = pl \cdot \exp\left(-\left(\frac{\delta^2
    R^2}{lp}\right)^{1/3}\right) \ge 1/2$, \eqref{eqlemeaconsequence}
  is trivially true since the left side is upper-bounded by $1$. Thus the equation
  holds for any value of $pl / r$.

  Recall that $l$ is a height-$t$ backlog witness for step $i$ if $2B
  = 2\sum_{m = i - l + 1}^i T_m \ge \delta l + t$. Thus, the
  probability that $l$ is a height-$t$ backlog witness for step $i$ is
  at most
  $$\Pr[B \ge \delta l / 8 + \left(\delta l / 8 + t / 2\right)] \le O(pl) \cdot \exp\left(-\Omega\left(\left(\frac{\delta^2 \left(\delta l + t\right)^2}{lp}\right)^{1/3}\right)\right).$$

  This completes the proof of Proposition \ref{propwitnessprob}.

  \begin{rem}
    In later sections it will be useful note that we have
    proven a statement slightly stronger than Proposition
    \ref{propwitnessprob}. Namely, that
    $$\Pr\left[\sum_{m = i - l + 1}^i T_m > \delta l / 4 + t / 2\right]
    \le O(pl) \cdot \exp\left(-\Omega\left(\left(\frac{\delta^2
      \left(\delta l +
      t\right)^2}{lp}\right)^{1/3}\right)\right).$$
    \label{rempropwitnessdetail}
  \end{rem}

  

  \section{Bounds for Non-Constant Backlogs}\seclabel{multicoresmoothefc}

  Recall that the algorithm given in
  \secref{multicoreconstantbacklog} allows for the emptier to select
  cups in an arbitrary fashion, as long as preference is given to cups
  $j$ satisfying $w_j \ge 1 + \delta$ (over all other cups) and to
  cups satisfying $w_j \neq 0$ (over cups satisfying $w_j = 0$). If
  within each priority level, the emptier always selects the fullest
  cups possible to empty out of, then we call the resulting algorithm
  the \textbf{\multicoresmoothefc}.

  In this section, we prove a probabilistic upper bound for the
  backlog using \multicoresmoothefc. In 
  \secref{lowerbounds}, we will see that when $\epsilon$ is constant,
  our bound is tight.

  As in \secref{multicoreconstantbacklog}, we continue to use
  the \tweakedemphmulticore. The goal of the section is to prove the following theorem:

  \begin{thm}
    Suppose that $1/2 \ge \varepsilon \ge p^{-1/3}$, $\delta \ge
    \frac{1}{\poly(p)}$, and that $\delta < 1$ is sufficiently large in
    $\Omega(e^{-O(\varepsilon^2p)})$. Consider the \multicoresmoothefc on $n$ cups for the \tweakedmulticore. Then for any value $k \ge \log p$ and for
    any step $i$, the backlog is $O(k / \epsilon)$ after step $i$ with probability at least
    $$1 - O\left(e^{-e^{k}}\right).$$
    \label{thmobliviousmulticorefull}
  \end{thm}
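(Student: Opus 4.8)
The plan is to combine the dynamic-cup analysis of the deterministic \efc (as used in \secref{singlecore}) with the surplus-threshold machinery of \secref{multicoreconstantbacklog}. As in the single-processor case, I would set up a \emph{virtual dynamic \multicore} played on the ``active'' cups—here, the cups whose threshold counter satisfies $w_j \ge 4 + \delta$ (the constant $4$ rather than $1$ accounts for the slack of $3$ in \lemref{counterbacklog} plus the $1+\delta$ increment). The \multicoresmoothefc, because it empties the fullest cups within each priority class, plays exactly the deterministic \efc on this virtual game, where a cup's virtual fill is $w_j$ minus the activation threshold. A multi-processor analogue of \lemref{dynamiccups} (provable by the same potential-function argument as in \secref{multicoredeterministic}, i.e. $\phi(i) = \sum_j (1+\epsilon)^{\text{virtual fill}_j(i)}$, using $\delta \ge 1/\poly(p)$ to make Case 3 go through) then shows: if $m$ is the number of active cups after step $i$, the backlog after step $i$ is $O(\epsilon^{-1}\log m)$. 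Hence it suffices to show that with probability $1 - O(e^{-e^k})$ the number of active cups after step $i$ is at most $e^{\Omega(k)}$, i.e. to bound $\sum_j w_j(i)$—or more precisely the number of cups with $w_j(i) \ge 4+\delta$—by a polynomial in $e^k$.

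Next I would run the backlog-witness argument at a general height $t$. By an extension of \lemref{boundwsum} (tracking the sum of counters \emph{restricted to active cups}, where each step the emptier makes progress $\approx \delta$ against accumulated surplus), if the number of active cups after step $i$ exceeds some threshold $\tau = e^{\Omega(k)}$, then there is a height-$t$ backlog witness $l$ for step $i$ with $t = \Omega(\tau)$: that is, $2\sum_{m=i-l+1}^i T_m \ge \delta l + t$. Now apply \propref{witnessprob} (and \remref{propwitnessdetail}): the probability that $l$ is such a witness is at most
$$O(pl)\cdot\exp\!\left(-\Omega\!\left(\Bigl(\tfrac{\delta^2(\delta l + t)^2}{lp}\Bigr)^{1/3}\right)\right).$$
For $l \le t/\delta$ the $t^2$ term dominates, giving a bound like $O(pl)\exp(-\Omega((\delta^2 t^2/(lp))^{1/3}))$; summing over such $l$ contributes $O(\mathrm{poly}) \cdot \exp(-\Omega((\delta^5 t/p)^{1/3}))$ after balancing the worst $l \approx t/\delta$. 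For $l \ge t/\delta$ the $\delta l$ term dominates, giving $O(pl)\exp(-\Omega((\delta^4 l/p)^{1/3}))$, whose sum over $l$ is geometric and bounded by the $l = t/\delta$ term, namely $\exp(-\Omega((\delta^3 t/p)^{1/3}))$. Using $\delta \ge 1/\poly(p)$ and $p \le e^{O(k)}$ (from $k \ge \log p$), and choosing $\tau = e^{Ck}$ for a large enough constant $C$ so that $t = \Omega(\tau)$ makes $(\delta^c t/p)^{1/3} \ge \Omega(e^{k})$, the total failure probability collapses to $O(e^{-\Omega(e^k)})$, as desired. Finally, combining with the $O(\epsilon^{-1}\log \tau) = O(k/\epsilon)$ bound from the dynamic analysis, and with the $1-1/M$ bound from \secref{multicoreconstantbacklog} for small $l$ (which is subsumed since $M = e^{\epsilon^2 p/12}$ and $e^{-\Omega(e^k)}$ is far smaller when $k \ge \log p$), completes the proof.

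I expect the main obstacle to be the dependency structure in applying \propref{witnessprob} at a non-zero height $t$ while simultaneously restricting attention to the \emph{active} counters. The subtlety is that ``cup $j$ is active after step $i$'' is not a per-step event and interacts with which surplus thresholds were crossed when; one must be careful that the deterministic \efc reduction argument (the analogue of \lemref{boundwsum}) still yields a clean inequality $2\sum_m T_m \ge \delta l + t$ when $t$ now measures a count of active cups rather than total counter mass. A secondary technical point is verifying the multi-processor dynamic-cup lemma: the potential argument of \secref{multicoredeterministic} must be checked to survive cups entering and leaving the game each step, but since new active cups enter with virtual fill essentially $0$ and thus contribute $O(1)$ each to $\phi$, the same three-case analysis applies with only cosmetic changes. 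Once these two structural pieces are in place, the probability bookkeeping is a routine—if somewhat lengthy—summation of the form carried out in the proof of \thmref{multi-core}.
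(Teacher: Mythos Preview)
Your approach is essentially the paper's: reduce to a dynamic \multicore on the high-fill cups, invoke the potential-function analysis (stated in the paper as Theorem~\ref{thmdynamicmulticoredeterministic}), and bound the number of such cups via a height-$t$ backlog witness together with Proposition~\ref{propwitnessprob}, summing over $l$ in two regimes.

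Two minor corrections. First, the activation criterion should be on \emph{fill} ($f_j \ge 4+\delta$), not on the counter ($w_j \ge 4+\delta$). With the fill criterion, every active cup has $w_j \ge 1+\delta$ by Lemma~\ref{lemcounterbacklog} (hence top priority), and every non-active top-priority cup has fill strictly below $4+\delta$, so the emptier---choosing fullest within the priority class---is guaranteed to hit the fullest active cups. With your counter criterion, a non-active top-priority cup (say $w_j=3$, $f_j=6$) can be fuller than an active cup ($w_j=4+\delta$, $f_j=4+\delta$), breaking the dynamic-game reduction.

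Second, the obstacle you flag is illusory: there is no need to restrict attention to ``active'' counters. Lemma~\ref{lemboundwsum} already bounds $\Pr[\sum_j w_j(i) > t]$ by the probability that some $l$ is a height-$t$ witness, and the number of active cups is trivially at most $\sum_j w_j(i)$. The paper simply takes $t = e^{ck}$ for a large constant $c$, sums Proposition~\ref{propwitnessprob} over all $l>0$ (splitting into two ranges of $l$, exactly as you outline), and obtains $O(e^{-e^k})$. No modification of Lemma~\ref{lemboundwsum} is needed.
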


  \begin{rem}
    Although Theorem \ref{thmobliviousmulticorefull} does not explicitly
    consider values of $k < \log p$, these values are already implicitly
    covered by Theorem \ref{thmmulti-core}. Indeed, assuming $1/2 \ge
    \epsilon \ge p^{-1/3}$ and $\delta < 1$ is sufficiently large in $e^{-O(\epsilon^2 p)}$, Theorem \ref{thmmulti-core} bounds the probability of
    superconstant backlog by
    $$O\left(e^{-\Omega(\epsilon^2 p)}\right) = O\left(e^{-\Omega(p^{1/3})}\right) \le
    O\left(e^{-e^{\Omega(\log p)}}\right).$$ 
    \label{remthmoblivsmallk}
  \end{rem}
  
The proof of Theorem \ref{thmobliviousmulticorefull} considers a
dynamic version of the \multicore being played on
the cups containing $4 + \delta$ or more units of water. Since each
such cup $j$ must have a nonzero counter $w_j(i)$, the number of cups
involved in the game at the end of a step $i$ is upper bounded by
$\sum_j w_j(i)$. Building on the results from
\secref{multicoreconstantbacklog}, we obtain a concentration
inequality on the sum $\sum_j w_j(i)$, which then allows us to
complete the proof of Theorem \ref{thmobliviousmulticorefull}.

In Subsection \ref{subsecdynamicmulticore} we present and analyze the
dynamic \multicore. Then in Subsection
\ref{subsecobliviousfull} we combine this with techniques from
\secref{multicoreconstantbacklog} in order to prove Theorem
\ref{thmobliviousmulticorefull}.

\subsection{The Dynamic \MultiCore}\label{subsecdynamicmulticore}

We define the \textbf{dynamic \multicore} as
follows. As in the \untweakedmulticore, at
each step the filler is allowed to distribute up to $(1 - \epsilon)p$
units of water among the cups, placing up to $1 - \delta$ units in any
individual cup; the emptier is then allowed to remove up to one unit
of water from up to $p$ distinct cups. In the \textbf{dynamic} version
of the game, we have the additional caveat that the number of cups
changes dynamically: at the beginning of any turn the filler may
introduce arbitrarily many new cups (into which they must pour a
non-zero amount of water); and at the end of each turn, any empty cups
are removed. Theorem \ref{thmdynamicmulticoredeterministic} provides a
bound for the maximum backlog in terms of the number of active cups
during any given step.

Note that for convenience, we are considering in this subsection the
\untweakedemphmulticore. When applying Theorem \ref{thmdynamicmulticoredeterministic}, we
will therefore have to perform the (simple) task of transferring
between the variants on parameters.

\begin{thm}
Consider an instance of the dynamic \multicore. Let
$r$ be a sufficiently large constant. Let $n_i$ be the number of cups
at the end of step $i$, or $p^r$ if the number of cups is less than
$p^r$. Suppose that $0 < \epsilon < 1$ and that $\delta$ is
sufficiently large in $\Omega(\epsilon / p^{r - 1})$.

Then at the end of each step $i$, the \efc will
achieve maximum backlog $O(\frac{1}{\epsilon} \log n_i)$.
\label{thmdynamicmulticoredeterministic}
\end{thm}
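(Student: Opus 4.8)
The plan is to adapt the potential-function argument behind Theorem~\ref{thmmulticoredeterministic} to a setting with a changing cup set, in the same spirit that Lemma~\ref{lemdynamiccups} adapts the classical single-processor analysis to the dynamic \singlecore. As in Section~\ref{sec:multicoredeterministic}, set
$$\phi(i) = \sum_{j} \int_{x=0}^{f_j(i)} (1+\epsilon)^{\lceil x\rceil}\,dx,$$
where the sum ranges over the cups present at the end of step $i$. The goal is to show $\phi(i) \le \poly(n_i)$ at the end of every step; since the fullest cup contributes at least $(1+\epsilon)^{\lfloor f_j(i)\rfloor-1}$ to $\phi(i)$, such a bound immediately yields backlog $O(\tfrac1\epsilon \log n_i)$. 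A cup is empty at the instant it is introduced, so it contributes $0$ to $\phi$; thus, relative to Theorem~\ref{thmmulticoredeterministic}, the only genuinely new event is a cup \emph{leaving} the game.

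I would fix an exponent $c$ (large in terms of $r$ and of the constants produced by the case analysis) and prove $\phi(i)\le n_i^c$ by induction on $i$, reusing the two-case split of Theorem~\ref{thmmulticoredeterministic} with every occurrence of ``$n$'' reread as ``$n_i=\max(\#\text{cups},p^r)$''. In Case~1 (the emptier removes a full unit from each of $p$ cups) the matching argument shows the net change in $\phi$ over the step is non-positive exactly as before (and, since the filler pours at most $(1-\epsilon)p$, one can even extract a strict drop of $\Omega(\epsilon^2 p)$). In Case~2 (fewer than $p$ cups ever hold $\ge1$ unit during the step) every heavy hitter has its freshly-poured water removed and then loses an extra $\delta$ units; if some heavy hitter has fill $\Omega(\log_{1+\epsilon} n_i)$, this $\delta$-removal drops $\phi$ by at least $(1+\epsilon)^{\Omega(\log_{1+\epsilon} n_i)}\cdot\delta$, which, using $n_i\ge p^r$ and $\delta\ge\Omega(\epsilon/p^{r-1})\ge\Omega(\epsilon\,n_i^{-(r-1)/r})$, exceeds the $O(n_i)$ by which all other cups together can raise $\phi$; otherwise every cup has fill $O(\log_{1+\epsilon} n_i)$ and $\phi(i)\le\poly(n_i)$ outright. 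This is where the hypotheses ``$r$ sufficiently large'' and ``$\delta$ large in $\Omega(\epsilon/p^{r-1})$'' are consumed: $p^r$ plays precisely the role of the $\poly(n)$ floor from Theorem~\ref{thmmulticoredeterministic}.

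The main obstacle, and where I expect the real work to be, is closing the induction in the steps where cups are removed, i.e.\ where $n_i<n_{i-1}$; there the naive estimate only gives $\phi(i)\le\phi(i-1)\le n_{i-1}^c$, which is too weak. Three facts should make this tractable. First, at most $p$ cups can be removed per step (a cup is removed only in the step in which the emptier empties it), and since $n_i\ge p^r$ we have $n_{i-1}\le n_i+p\le n_i+n_i^{1/r}$, so the target bound degrades by only a sub-polynomial factor in any one step. Second, a removed cup holds at most one unit of water just before being emptied, so deletions cost essentially no potential. Third — and this is the crucial point — because the emptier runs \efc, a near-empty cup is emptied only when fewer than $p$ cups hold more than one unit; equivalently, whenever $\phi(i-1)$ is large (which forces the presence of tall cups), \efc spends all $p$ of its slots draining those tall cups rather than deleting near-empty ones, so the cup count cannot drop. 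Thus $\phi$ is forced down before $n$ can fall, and a step that removes cups while $\phi(i-1)$ is close to $n_{i-1}^c$ simply cannot occur; formalizing this ``$\phi$ falls before $n$ does'' statement (by showing that any removal step has at most $p-1$ cups above fill~$1$ and then bounding the near-empty cups by $O(n_{i-1})=O(n_i)$, invoking the Case~2 drain to handle the tall ones) is the heart of the argument. Finally, transferring between the \untweakedmulticore used here and the \tweakedmulticore used in Section~\ref{sec:multicoresmoothefc} costs only constant factors in $\epsilon$ and $\delta$, as in Appendix~\ref{apptweaked}.
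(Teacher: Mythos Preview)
Your setup matches the paper's: same potential function, same two-case split, induction toward $\phi(i)\le n_i^c$. Where you diverge is in handling Case~2. You follow the static argument of Theorem~\ref{thmmulticoredeterministic} and conclude only that either $\phi(i)\le\phi(i-1)$ (some heavy hitter is tall) or $\phi(i)\le\poly(n_i)$ (none is); this forces you into a separate ``$\phi$ falls before $n$ does'' lemma to bridge $n_{i-1}^c$ and $n_i^c$. That lemma, as you sketch it, has a genuine flaw: a large $\phi(i-1)$ forces the existence of only \emph{one} tall cup, not $p$ of them, so the greedy emptier may well drain that one tall cup together with $p-1$ near-empty cups in the same step, removing the latter. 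Thus the cup count can drop even when $\phi$ is large, and ``a step that removes cups while $\phi(i-1)$ is close to $n_{i-1}^c$ simply cannot occur'' is false.

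The paper sidesteps the auxiliary lemma by extracting a \emph{multiplicative} drop in Case~2 rather than an additive one. Instead of case-splitting on whether the tallest heavy hitter is tall enough, observe that \emph{every} heavy hitter loses at least $\delta$ units beyond what the filler just poured, so the heavy hitters' \emph{total} contribution to $\phi$ shrinks by a factor $1-M$ with $M=\Omega(\epsilon\delta)$. This yields directly
\[
\phi(i)\;\le\;\phi(i-1)\cdot(1-M)\;+\;M\cdot O(n_i)\;+\;O(p).
\]
Now plug in the inductive hypothesis $\phi(i-1)\le n_{i-1}^2\le(n_i+p)^2\le(1+O(p^{-(r-1)}))\,n_i^2$; the hypothesis on $\delta$ is calibrated precisely so that $(1+O(p^{-(r-1)}))(1-M)\le 1-M/2$, and the induction closes with $c=2$. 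The multiplicative shrinkage is exactly what absorbs the possible decrease in $n$, so no separate $\phi$-versus-$n$ tracking is needed.
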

\begin{proof}
  Define $f_i(j)$ to the fill in cup $j$ after step $i$, and define the potential function,
  $$\phi(i) = \sum_{j = 1}^{n_i} \int_{x = 0}^{f_j(i)} (1 + \epsilon)^{\lceil x \rceil} dx.$$

  We will prove that $\phi(i) \le n_i^2$ for all $i$. This, in turn,
  bounds each $f_i(j)$ to be at most $O(\log_{1 + \epsilon}
  \poly(n_i)) \le O(\frac{1}{\epsilon} \log n_i)$.

  Assume by induction that $\phi(i - 1) \le n_{i - 1}^2$ for the $i -
  1$-th step (note that the base case of $i = 1$ is immediate). We
  will use this to prove that $\phi(i) \le n_i^2$.
  
  For a given step $i$, we consider two cases separately:

  \noindent \textbf{Case 1: The emptier removes a full unit of water from each of $p$ cups during step $i$. } In this case, the number of
  cups $n_i$ at the end of step $i$ satisfies $n_i \ge n_{i -
    1}$. Thus it suffices to show that $\phi(i) \le \phi(i - 1)$. This
  follows from Case 1 of the proof of Theorem
  \ref{thmmulticoredeterministic}.
  
  \noindent \textbf{Case 2: At no point during step $i$ do $p$ or more
    cups contain one or more units of water.}  Call the cups that
  contain more than one unit of water at the end of the step the
  \textbf{heavy hitters}. Any water placed in non-heavy hitters during
  step $i$ can increase their contribution to the potential, which is
  initially at most $O(n_i)$, by at most $O(p)$. On the other hand,
  any water placed in heavy hitters is then immediately removed by the
  emptier. Additionally, the emptier removes from each heavy
  hitter at least $\delta$ units of water that were present at the
  beginning of the step.

  Since
  $$\int_{x = 0}^f (1 + \epsilon)^{\lceil x \rceil} dx =
  O\left((1 + \epsilon)^f / \log (1 + \epsilon)\right) =
  O\left(\epsilon^{-1} \cdot (1 + \epsilon)^f\right),$$ this
  removal of $\delta$ units from each heavy hitter will reduce the
  contribution to $\phi(i)$ of each of the heavy hitters by a
  multiplicative factor of $1 - \Omega(\epsilon \delta)$.
  
  Thus there is some $M \in \Omega(\epsilon \delta)$ such that
  $$\phi(i) \le \phi(i - 1) \cdot (1 - M) + M \cdot O(n_i) + O(p),$$
  where the second term is to account for the fact that the
  contributions to $\phi(i)$ by non-heavy hitters are not necessarily
  decreased by the multiplicative factor of $1 - M$, and the third
  term accounts for water added to non-heavy hitters during step $i$.

  Note that $n_i \ge n_{i - 1} - p$. If we recall the inductive
  hypothesis that $\phi(i - 1) \le n_{i - 1}^2$, then
  $$\phi(i) \le (n_i + p)^{2} \cdot (1 - M) + M \cdot O(n_i) + O(p).$$
  Since $n_i \ge p^r$, it follows that 
  \begin{align*}
  \phi(i) & \le (1 + 1/p^{r - 1})^{2} n_i^{2} \cdot (1 - M) + M \cdot O(n_i) + O(p) \\
  & \le (1 + 4/p^{r - 1}) \cdot n_i^{2} \cdot (1 - M) + M \cdot O(n_i) + O(p).
  \end{align*}
  Now using the assumption that $\delta$ is sufficiently large in
  $\Omega(\epsilon^{-1} / p^{r - 1})$ and that $M = \Omega(\epsilon
  \delta)$, we get that $(1 + 4/p^{r - 1}) \cdot (1 - M) \le (1 - M/2)$. Thus

  $$\phi(i) \le n_i^{2} \cdot (1 - M / 2) + M \cdot O(n_i) + O(p).$$
  Using the fact that $n_i \ge p^r$ is sufficiently large in $\Omega(1)$, this gives
  $$\phi(i) \le n_i^{2} \cdot (1 - M / 2) + O(p) \le n_i^{2} \cdot (1 - \Omega(\epsilon \delta)) + O(p).$$
  Since $n_i \ge p^r$, the fact that $\delta$ is sufficiently large in
  $\Omega(\epsilon^{-1} / p^{r - 1})$ then implies that
  $$\phi(i) \le n_i^2,$$
  completing the proof.
\end{proof}

\subsection{Proof of Theorem \ref{thmobliviousmulticorefull}}\label{subsecobliviousfull}

  Define $K = e^k$. We claim that, after a given step $i$, if $\sum_j
  w_j \le \poly(K)$, then the backlog is at most $O(k)$. In
  particular, consider at each step only the cups containing $4 +
  \delta$ or more units of water. By Lemma \ref{lemcounterbacklog},
  each such cup $j$ has counter $w_j \ge 1 + \delta$, meaning that
  the \multicoresmoothefc will always empty out of the fullest of the cups
  containing $4 + \delta$ or more units of water. If we consider just
  these cups, and subtract out $4 + \delta$ from each of their fills,
  then we are playing a dynamic \multicore, and
  using the \multicoresmoothefc. By Theorem
  \ref{thmdynamicmulticoredeterministic}, if there are $r$ cups
  containing $4 + \delta$ or more units of water, then the backlog is at most
  $$O\left(\frac{1}{\epsilon} \max(\log p, \log r)\right).$$ (Note
  that we are actually applying the theorem to $\epsilon' \ge \epsilon$,
  $\delta' = \frac{2\delta}{1 + 2\delta} = \Theta(\delta)$ and to $p'
  = p + 1$; moreover, the theorem requires that $\delta' \ge
  \epsilon^{-1} / \poly(p')$, which is true here because $\epsilon \ge
  p^{-1/3}$ and $\delta \ge 1/\poly(p)$ by assumption.) Since $\ln K =
  k \ge \log p$, it follows that if there are $\poly(K)$ cups in the
  dynamic game, then the backlog is at most $O(\frac{1}{\epsilon}\log
  K) = O\left(\frac{1}{\epsilon}k\right)$. This, in turn, implies that
  if $\sum_j w_j(i) \le \poly(K)$, then the backlog is at most
  $O(\frac{1}{\epsilon}k)$.

  To complete the proof, it suffices to show that $\Pr[\sum_j w_j(i) >
    K^c] \le O\left(e^{-e^k}\right)$ for some sufficiently large
  constant $c$. By Lemma \ref{lemboundwsum}, the probability that
  $\sum_j w_j(i) \ge K^c$ is at most the probability that there exists
  a height-$K^c$ backlog witness for the step $i$. The following
  lemma, which we will prove shortly, can be used to bound the probability of such a witness
  existing.

  \begin{lem}
  Suppose $1/2 \ge \varepsilon \ge p^{-1/3}$ and $\delta < 1$ is
  sufficiently large in $\Omega(e^{-O(\varepsilon^2p)})$. Consider $t
  \ge \frac{\sqrt{p}}{\delta}$. Then the probability that any $l > 0$ is a
  height-$t^4$ backlog witness for step $i$ is at most
  $$O\left(\exp \left(-\Omega\left(-t^{1/3}\right)\right)\right).$$
  \label{lemboundwitness}
\end{lem}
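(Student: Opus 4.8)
The plan is to derive Lemma~\ref{lemboundwitness} directly from Proposition~\ref{propwitnessprob} via a union bound over $l$, after substituting the backlog height $t^4$ for the parameter $t$ in that proposition. Concretely, for every $l > 0$ the probability that $l$ is a height-$t^4$ backlog witness for step $i$ is at most
$$O(pl)\cdot\exp\left(-\Omega\left(\left(\frac{\delta^2(\delta l + t^4)^2}{lp}\right)^{1/3}\right)\right),$$
and I would bound $\sum_{l\ge 1}$ of this quantity by splitting the range of $l$ at the threshold $l_0 = t^4/\delta$, treating the ``$\delta l$ dominates'' and ``$t^4$ dominates'' regimes separately inside the square.

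For $1 \le l \le l_0$ I would discard the $\delta l$ term inside the square, leaving a summand exponent of at least $(\delta^2 t^8/(lp))^{1/3}$, which is decreasing in $l$ and hence at least its value at $l = l_0$, namely $(\delta^3 t^4/p)^{1/3} = \delta t^{4/3}/p^{1/3}$. Here the hypothesis $t \ge \sqrt p/\delta$ is exactly what is needed: it says $\delta t \ge \sqrt p$, so $\delta t^{4/3}/p^{1/3} = t^{1/3}\cdot(\delta t/p^{1/3}) \ge t^{1/3} p^{1/6}$. Since this range contains at most $l_0 = t^4/\delta$ values of $l$, each with prefactor $O(pl)\le O(pt^4/\delta)$, the total contribution of this range is at most $O(p t^8/\delta^2)\cdot\exp(-\Omega(t^{1/3}p^{1/6}))$.

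For $l > l_0$ I would instead discard the $t^4$ term inside the square, leaving a summand exponent of at least $(\delta^4 l/p)^{1/3}$, which is increasing in $l$. Thus the sequence $O(pl)\exp(-\Omega((\delta^4 l/p)^{1/3}))$ decays like a stretched exponential times a polynomial in $l$, so its tail sum over $l > l_0$ is dominated, up to polynomial-in-$(p,1/\delta,t)$ factors, by its first term $\exp(-\Omega((\delta^4 l_0/p)^{1/3})) = \exp(-\Omega((\delta^3 t^4/p)^{1/3})) = \exp(-\Omega(t^{1/3}p^{1/6}))$, the same rate as in the first range.

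The last step is to absorb all the polynomial prefactors into the exponent and downgrade the rate to the claimed $O(\exp(-\Omega(t^{1/3})))$; I expect this bookkeeping to be the main obstacle, in the sense of requiring the most care. One must check that $t^{1/3}p^{1/6}$ dominates $\ln(pt^8/\delta^2)$ together with the target exponent $\Omega(t^{1/3})$. This holds because the extra factor of $p^{1/6}$ — produced precisely by the hypothesis $t\ge\sqrt p/\delta$ — supplies constant-factor slack once $p$ exceeds a universal constant, while $\ln(1/\delta) = O(\varepsilon^2 p) = O(p)$ (from $\delta \ge \Omega(e^{-O(\varepsilon^2 p)})$ and $\varepsilon \le 1/2$) is in turn dominated by $t^{1/3}p^{1/6} \ge p^{1/3}\delta^{-1/3}$, since $p^{1/3}\delta^{-1/3}$ grows faster in $1/\delta$ than $\ln(1/\delta)$ does. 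Chasing these constants through the two range bounds then yields the stated estimate.
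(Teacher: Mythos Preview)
Your proposal is correct and follows essentially the same route as the paper: invoke Proposition~\ref{propwitnessprob} with height $t^4$, union-bound over $l$, and split the sum into a small-$l$ range (where the $t^4$ term dominates inside the square) and a large-$l$ range (where the $\delta l$ term dominates). The only substantive difference is the split point: you cut at $l_0 = t^4/\delta$ (the natural balance point), whereas the paper cuts at $l = t^3$. The paper's earlier cut, combined with the substitution $\delta^2 t^2 \ge p$, yields a uniform exponent of $\Omega(t)$ on the small-$l$ range rather than your $\Omega(t^{1/3}p^{1/6})$; this stronger rate lets the paper absorb the polynomial prefactors (using $p \le t^2$, hence $pl \le t^5$) with less bookkeeping than your final paragraph requires. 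Both arguments land at the same $O(\exp(-\Omega(t^{1/3})))$ conclusion.
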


  By Lemma \ref{lemboundwitness}, for $c$ a large enough
  constant, the probability that a height-$K^c$ backlog witness exists
  is at most $O(e^{-K}) = O(e^{-e^{k}})$, as desired. (Note that
  Lemma \ref{lemboundwitness} requires that $t = K^{c / 4}$
  satisfies $t \ge \sqrt{p} / \delta$; since $K = e^k \ge p$ and
  $\delta \ge \frac{1}{\poly(p)}$, this will be true for $c$ a
  sufficiently large constant.)

  In order to complete the proof of Theorem
  \ref{thmobliviousmulticorefull}, it therefore suffices to prove
  Lemma \ref{lemboundwitness}.
\begin{proof}[Proof of Lemma \ref{lemboundwitness}]
  By the union bound and Proposition \ref{propwitnessprob}, the probability that any $0 < l \le t^3$ is a height-$t^4$ backlog witness for step $i$ is at most
  \begin{align*}
    & \sum_{l = 1}^{t^3} O(pl) \cdot \exp\left(-\Omega\left(\left(\frac{\delta^2 t^8}{lp}\right)^{1/3}\right)\right) \\
    & \le \sum_{l = 1}^{t^3} O(pl) \cdot \exp\left(-\Omega\left(\left(\frac{t^6}{l}\right)^{1/3}\right)\right). \\
    & = \sum_{l = 1}^{t^3} O(pl) \cdot \exp\left(-\Omega\left(\frac{t^2}{l^{1/3}}\right)\right). \\
    & \le \sum_{l = 1}^{t^3} O(pl) \cdot \exp\left(-\Omega\left(t\right)\right). \\
    & \le O\left(e^{-\Omega(t)}\right),
  \end{align*}
  where the last step uses the fact hat $pl \le pt^3 \le t^5$.

  The probability that any $l \ge t^t$ is a height-$t^4$ backlog witness for step $i$ is at most
  \begin{align*}
    & \sum_{l = t^3}^\infty O(pl) \cdot \exp\left(-\Omega\left(\left(\frac{\delta^2 (\delta l)^2}{lp}\right)^{1/3}\right)\right) \\
    & \le \sum_{l = t^3}^\infty O(l^2) \cdot \exp\left(-\Omega\left(\left(\frac{\delta^4 l}{p}\right)^{1/3}\right)\right) \\
    & \le \sum_{l = t^3}^\infty O(l^2) \cdot \exp\left(-\Omega\left(\left(\frac{\delta^4 t^2 l^{1/3}}{p}\right)^{1/3}\right)\right) \\
    & \le \sum_{l = t^3}^\infty O(l^2) \cdot \exp\left(-\Omega\left(\left(l^{1/3}\right)^{1/3}\right)\right) \\
    & \le \sum_{l = t^3}^\infty O(l^2) \cdot \exp\left(-\Omega\left(l^{1/9}\right)\right) \\
    & \le \sum_{l = t^3}^\infty O\left(\exp\left(-\Omega\left(l^{1/9}\right)\right)\right) \\
    & \le O\left(\exp\left(-\Omega\left((t^3)^{1/9}\right)\right)\right) \le O\left(\exp\left(-\Omega\left(t^{1/3}\right)\right)\right). \\  
  \end{align*}
  Combining the two probabilities, the probability of any height-$t^4$ backlog witness existing is at most $O\left(e^{-\Omega(t^{1/3})}\right)$.
\end{proof}


\section{Lower Bounds}
\seclabel{lowerbounds}

In this section we discuss strategies that the filler can follow in
order to maximize backlog, regardless of the algorithm followed by the
emptier. When $\epsilon$ is a positive constant, the resulting bounds
establish that the algorithms for the emptier given in the preceding
sections are essentially optimal, at least up to constant-factor
changes in the backlog and in $p$.

Consider a cup-emptying game on $n$ cups in which at each step, the
filler is allowed to distribute $p / 2$ units of water among the
cups, and then the emptier is allowed to select $p$ cups and remove
all of the water from each of them. We call this the \textbf{universal
  emptying game} on $p$ processors and $n$ cups. Lower bounds for
universal emptying game easily port to lower bounds for \singlecore and \multicore (played with arbitrary $\epsilon$ for the
single-processor variant, and with arbitrary $\epsilon \le 1/2$ and
$\delta$ for the multi-processor variant).

We begin with a lower bound for the universal emptying game in the case where
the filler is an adaptive adversary.

\begin{thm}
  Consider the universal emptying game on $p$ processors and $n$ cups,
  where $n$ is a multiple of $p$. There is an adaptive
  strategy for the filler that accomplishes a backlog of $\Theta\left(\log
  \frac{n}{p}\right)$ after step $\frac{n}{p} - 1$.
  \label{thmlowerbound1}
\end{thm}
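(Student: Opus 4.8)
The plan is to hand the filler an explicit one-pass adaptive strategy: maintain a set of ``alive'' cups that shrinks by exactly $p$ cups per step, and at each step spread the $p/2$ units of water uniformly over the alive cups. Write $q = n/p$. The filler maintains sets $A_1 \supseteq A_2 \supseteq \cdots$ with the invariant $|A_t| = n - (t-1)p$, starting from $A_1 = [n]$. On step $t$ (for $t = 1, \ldots, q-1$) the filler pours $\frac{p/2}{|A_t|}$ units into each cup of $A_t$; this is legal since the total poured is $p/2$. Letting $E_t$ be the (at most $p$) cups the emptier flushes during step $t$, the filler then picks $A_{t+1}$ to be an arbitrary subset of $A_t \setminus E_t$ of size exactly $n - tp$.

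The one point that needs an argument is that this choice is always possible, \emph{regardless of the emptier's behaviour}. Since $|E_t \cap A_t| \le |E_t| \le p$, we have $|A_t \setminus E_t| \ge (n-(t-1)p) - p = n - tp$, so a subset of the required size exists; if the emptier flushes fewer than $p$ cups of $A_t$ (or wastes flushes on cups outside $A_t$), the filler simply abandons the extra alive cups voluntarily. A cup leaves the alive set only after it has been flushed or voluntarily abandoned, so every cup in $A_{t+1}$ has never been flushed and has received exactly $\frac{p/2}{n-(s-1)p}$ units on each step $s \le t$. Running the strategy through step $q-1$ leaves $|A_q| = n-(q-1)p = p \ge 1$, so there is a surviving cup, and its fill after step $q-1$ is
\[
\sum_{t=1}^{q-1} \frac{p/2}{n-(t-1)p} \;=\; \frac12\sum_{t=1}^{q-1}\frac{1}{q-t+1} \;=\; \frac12\sum_{s=2}^{q}\frac1s \;=\; \frac12\bigl(H_q - 1\bigr) \;=\; \Theta(\log q) \;=\; \Theta\!\left(\log\tfrac np\right),
\]
where $H_q = \sum_{s=1}^q 1/s$. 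This already gives the lower bound on backlog. For the matching upper bound (so that the backlog is genuinely $\Theta(\log(n/p))$, not merely $\Omega$), note that the emptier never adds water, every surviving cup has exactly the fill above, and every other cup received water only on some prefix $s = 1,\ldots,t_0$ of the steps with the same per-step amounts (and was possibly flushed to $0$ afterwards), so no cup exceeds $\frac12(H_q-1)$.

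I expect the harmonic computation to be routine; the only genuine subtlety is the adversary step above — checking that the counting invariant $|A_t| = n - (t-1)p$ can be restored after \emph{any} emptier response — and that reduces to the inequality $|E_t \cap A_t| \le p$ together with the filler's freedom to abandon extra cups. Everything else is bookkeeping and the evaluation of the harmonic sum.
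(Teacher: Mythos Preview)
Your proposal is correct and follows essentially the same approach as the paper: maintain a shrinking set of ``alive'' cups that loses $p$ members per step, spread $p/2$ units uniformly over the alive cups, and read off the harmonic sum $\tfrac12(H_{n/p}-1)=\Theta(\log(n/p))$ for any surviving cup. Your version is in fact slightly more careful than the paper's: the paper tacitly assumes the emptier flushes exactly $p$ alive cups each round, whereas you explicitly handle the case $|E_t\cap A_t|<p$ by voluntarily abandoning extra cups to keep the invariant $|A_t|=n-(t-1)p$, and you also justify the upper half of the $\Theta$ by noting that non-surviving cups only see a prefix of the per-step contributions.
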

\begin{proof}
  The strategy that the filler follows is a simple variant on the
  strategy used by past authors in the \singlecore
  \cite{BenderFeKr15, DietzRa91}. At the beginning of each step $i$, the
emptier will have so far removed water from $p \cdot (i - 1)$
cups. The filler ignores these cups, and distributes $p / 2$ units of
water among the $n - p \cdot (i - 1)$ cups that have never had water
removed from them. This continues until the end of the
$\frac{n}{p}$-th step, at which point every cup has been emptied out
of.

  After the $i$-th step, there will be $n - p \cdot i$ cups each
  containing
  $$\frac{p / 2}{n} + \frac{p / 2}{n - p} + \frac{p / 2}{n - 2p} + \cdots +
  \frac{p / 2}{n - p \cdot (i - 1)}$$ water. It follows that at the end of
  step $\frac{n}{p} - 1$ some cup has fill at least
  \begin{align*}
    & \frac{p}{2} \cdot \left( \frac{1}{n} + \frac{1}{n - p} + \cdots + \frac{1}{2p}\right) \\
    & = \sum_{j = 2}^{n / p} \frac{1}{2j} = \Theta\left(\log \frac{n}{p} \right).
  \end{align*}
\end{proof}

Theorem \ref{thmlowerbound1} implies that our backlog upper bound of
$O\left(\frac{1}{\epsilon} \log n\right)$ (given by Theorem
\ref{thmmulticoredeterministic}) for the deterministic \multicore is
optimal (up to constant factors in the backlog) when $\epsilon \le
1/2$ is a positive constant and $n \ge p^2$.  This is captured
formally in the following corollary.

\begin{corollary}
  For $n \ge p^2$, any deterministic emptying strategy for the
  \multicore played on $n$ cups must have
  worst-case backlog at least $\Omega(\log n)$.
\end{corollary}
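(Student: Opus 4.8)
The plan is to deduce the corollary directly from Theorem~\ref{thmlowerbound1}, via the reduction from the universal emptying game to the \multicore noted at the start of Section~\ref{sec:lowerbounds}, combined with the elementary observation that $n \ge p^2$ forces $\log(n/p) = \Omega(\log n)$. Concretely, I would fix an arbitrary deterministic emptier for the \multicore and exhibit a filler that drives the backlog to $\Omega(\log n)$ by replaying, move for move, the adaptive filler strategy from the proof of Theorem~\ref{thmlowerbound1} (namely: at each step ignore every cup the emptier has ever touched and spread $p/2$ units evenly over the cups never touched so far).

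Two checks are needed. First, legality in the \multicore: this strategy pours a total of $p/2 \le (1-\epsilon)p$ units per step (for $\epsilon \le 1/2$), and in each of the first $n/p - 1$ steps it spreads them over at least $n - p(n/p-2) = 2p$ cups, so no cup receives more than $\frac{p/2}{2p} = \frac14$ in a single step, which is within the per-cup cap $1-\delta$ for $\delta$ bounded away from $1$ (the interesting regime is small $\delta$; the extreme case $\delta \to 1$ would require rescaling the pours and is not the case of interest). Second, the lower-bound analysis of Theorem~\ref{thmlowerbound1} must survive the fact that a \multicore emptier removes only one unit from each of its $p$ cups rather than emptying them; but that analysis uses \emph{only} that at most $p$ distinct cups have water removed from them per step. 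Hence after any $i \le n/p - 1$ steps at least $n - pi \ge p \ge 1$ cups have never been touched, and every such cup still holds the full sum $\frac p2\left(\frac1n+\frac1{n-p}+\cdots+\frac1{2p}\right)=\Theta\!\left(\log\frac np\right)$ of the amounts poured into it. (If $p\nmid n$, restrict to the first $\lfloor n/p\rfloor p\ge n/2$ cups; this is asymptotically immaterial.)

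Finally, $n \ge p^2$ gives $p \le \sqrt n$, so $n/p \ge \sqrt n$ and $\log(n/p) \ge \tfrac12\log n$; thus the backlog forced after step $n/p-1$ is $\Omega(\log n)$, as claimed. I do not expect any real obstacle here — the statement is a corollary — and the only point warranting (minor) care is the per-cup feasibility check above, which is automatic whenever $\delta$ is small.
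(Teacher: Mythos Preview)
Your proposal is correct and takes essentially the same approach as the paper. The paper does not give an explicit proof of the corollary at all: it simply states that lower bounds for the universal emptying game ``easily port'' to the \multicore (for $\epsilon \le 1/2$ and arbitrary $\delta$), and then notes in the sentence preceding the corollary that Theorem~\ref{thmlowerbound1} therefore makes the $O(\epsilon^{-1}\log n)$ upper bound tight when $\epsilon$ is constant and $n \ge p^2$. You have filled in exactly the details the paper leaves implicit --- the per-step feasibility check and the arithmetic $n \ge p^2 \Rightarrow \log(n/p) \ge \tfrac12\log n$ --- and your caveat that the per-cup cap $1-\delta$ must be at least $1/4$ is a fair (and minor) qualification that the paper glosses over.
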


Next, we generalize Theorem \ref{thmlowerbound1} to the case where the
filler is an oblivious adversary, meaning the filler is unable to see
which cups the emptier has or has not removed water from.
\begin{thm}
  Consider the universal emptying game on $p$ processors and $n$
  cups. For any $k$ such that $2p \le k \le n$ and such that $k$ is a
  multiple of $p$, there is an obvious strategy for the filler that
  accomplishes a backlog of $\Theta\left(\log \frac{k}{p}\right)$
  after step $\frac{k}{p} - 1$ with probability at least
  $$\left(\frac{1}{\binom{k}{p}}\right)^{k / p - 1} \ge \frac{1}{k^k}.$$
  \label{thmlowerbound2}
\end{thm}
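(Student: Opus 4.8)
The plan is to take the adaptive strategy from Theorem~\ref{thmlowerbound1} and have the oblivious filler simply \emph{commit in advance} to a fixed sequence of cup-subsets, then argue that with the stated probability the emptier's behavior happens to be consistent with the adaptive strategy's assumptions. Concretely, the filler partitions some $k$ of the $n$ cups into $k/p$ blocks $B_1, \dots, B_{k/p}$ of $p$ cups each, chosen uniformly at random (or in any fixed pattern — what matters is that the emptier, being oblivious, cannot see them). In step $i$ (for $i = 1, \dots, k/p - 1$), the filler distributes $p/2$ units of water evenly among the cups in $B_i \cup B_{i+1} \cup \cdots \cup B_{k/p}$, i.e.\ the ``not-yet-retired'' blocks. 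This is exactly the adaptive strategy's move \emph{provided} that in each of steps $1, \dots, i$ the emptier removed water precisely from the $p$ cups of the corresponding block $B_i$.

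\textbf{Key steps, in order.} First I would describe the event $\mathcal{E}$ that for every $i \in \{1, \dots, k/p - 1\}$, the set of $p$ cups the emptier selects in step $i$ is exactly $B_i$. Second, I would observe that \emph{conditioned on the emptier's coin flips and on the filler's moves so far}, the emptier's choice in step $i$ is some fixed set of $p$ cups, and since $B_i$ is a uniformly random $p$-subset (among the cups still in play, or even among all $k$ designated cups) that the oblivious emptier has no information about, the conditional probability that this choice equals $B_i$ is at most $1/\binom{k}{p}$ — crucially, the filler's moves in steps $1, \dots, i$ reveal nothing about which block is $B_i$ versus $B_{i+1}, \dots$ because all not-yet-retired blocks have been treated identically. (This is the one place that needs care: I must set up the randomness so that $B_i$, $B_{i+1}, \dots, B_{k/p}$ are exchangeable given the transcript up through step $i-1$, which holds because the filler's rule depends only on the \emph{union} of the remaining blocks, not on their labels.) Multiplying over the $k/p - 1$ steps gives $\Pr[\mathcal{E}] \ge \left(1/\binom{k}{p}\right)^{k/p - 1}$. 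Third, I would invoke the computation from Theorem~\ref{thmlowerbound1}: on the event $\mathcal{E}$, the filler has effectively executed the adaptive strategy on the $k$ designated cups, so after step $k/p - 1$ the cups of $B_{k/p}$ each hold
$$\frac{p/2}{k} + \frac{p/2}{k - p} + \cdots + \frac{p/2}{2p} = \sum_{j=2}^{k/p} \frac{1}{2j} = \Theta\!\left(\log \frac{k}{p}\right)$$
water, giving backlog $\Theta(\log(k/p))$. Finally, the crude bound $\binom{k}{p} \le k^p$, so $\binom{k}{p}^{k/p - 1} \le k^{p(k/p-1)} = k^{k - p} \le k^k$, yields the clean lower bound $1/k^k$ on the probability.

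\textbf{Main obstacle.} The only real subtlety is the conditioning/independence argument in the second step — making precise that an oblivious adversary who fixes its randomness in advance still gets to ``win the guessing game'' with probability $1/\binom{k}{p}$ per step, because the emptier's choices cannot correlate with the as-yet-unrevealed block structure. The right framing is: fix all of the emptier's randomness; then the emptier becomes deterministic, so in step $i$ its choice of $p$ cups is a deterministic function of the transcript; by induction on $i$, conditioned on $\mathcal{E}$ holding through step $i-1$, the blocks $B_i, \dots, B_{k/p}$ remain a uniformly random ordered partition of a fixed $k - p(i-1)$-element set into $(k/p - i + 1)$ blocks of size $p$, so the probability the emptier's (fixed) choice equals $B_i$ is $1/\binom{k - p(i-1)}{p} \le 1/\binom{k}{p}$. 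Averaging back over the emptier's randomness preserves the bound. Everything else is a direct reduction to Theorem~\ref{thmlowerbound1} plus the binomial estimate.
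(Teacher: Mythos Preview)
Your approach is the paper's approach—commit to a random ordered partition $B_1,\dots,B_{k/p}$ of $k$ designated cups and hope the emptier's removals line up—but the formalization has a genuine gap, and a sign slip.

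The gap is the event $\mathcal{E}$. You require that the emptier's $p$-cup set at step $i$ \emph{equals} $B_i$. Against an emptier that ever removes from a cup outside the $k$ designated cups (or from an already-retired block $B_1,\dots,B_{i-1}$), the event $\mathcal{E}$ fails with probability $1$, so your claimed bound $\Pr[\mathcal{E}]\ge(1/\binom{k}{p})^{k/p-1}$ is simply false for such emptiers. You are proving a lower bound against \emph{all} emptier strategies, including ones that deliberately waste removals on empty cups to dodge your event. Such emptiers only help the filler's backlog, of course—but that observation does not rescue your probability bound on $\mathcal{E}$; it means $\mathcal{E}$ is the wrong event. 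The paper's fix is to use the weaker success condition ``the emptier did not touch any cup in $[k]\setminus S_i$,'' i.e.\ the emptier's choice intersected with the still-in-play cups is contained in $S_i$. This event still forces $B_{k/p}$ to be untouched, and now the per-step conditional probability is genuinely at least $1/\binom{k}{p}$ regardless of where the emptier aims, because a uniformly random $p$-subset covers any fixed set of size $\le p$ with probability at least $1/\binom{k}{p}$.

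The sign slip: you write that the conditional probability is ``at most $1/\binom{k}{p}$,'' and later that $1/\binom{k-p(i-1)}{p}\le 1/\binom{k}{p}$. Both inequalities point the wrong way. Since $\binom{m}{p}$ is increasing in $m$, one has $1/\binom{k-p(i-1)}{p}\ge 1/\binom{k}{p}$, and you need the per-step probability to be \emph{at least} $1/\binom{k}{p}$ to multiply up to a lower bound. With the weakened event and the inequalities reversed, your conditioning argument (exchangeability of the unrevealed blocks) is exactly right and the proof goes through.
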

\begin{proof}
  In order to achieve a backlog of $\Theta\left(\log
  \frac{k}{p}\right)$, the filler considers only the cups $1, \ldots,
  k$ and attempts to follow the same construction as presented in the
  proof of Theorem \ref{thmlowerbound1}. The caveat is that the filler
  must now guess during each step which $p$ cups the emptier selected
  in the previous step. Since the filler is concerned only about the
  first $k$ cups, it suffices for the filler to guess after each step
  $i$ a $p$-element subset $S_i \subseteq [k]$ such that during the
  $i$-th step, the emptier did not touch any of the cups in $[k]
  \setminus S_i$. The number of options for each $S_i$ is
  $\binom{k}{p}$. Thus the filler can successfully simulate the
  adaptive lower-bound construction with probability at least
  $$\left(\frac{1}{\binom{k}{p}}\right)^{k / p - 1} \ge \left(\frac{1}{k^p}\right)^{k / p - 1} \ge \frac{1}{k^k}.$$
\end{proof}

For $k \ge p^2$, the oblivious filler in Theorem \ref{thmlowerbound2}
achieves a backlog of $\Omega(\log k)$ with probability at least
$\frac{1}{2^k}$. Substituting $k$ with $2^j$, it follows that for $j
\ge 2 \log p$, the oblivious filler achieves a backlog of $\Omega(j)$
with probability at least $2^{-2^j}$. This, in turn, implies that
Theorems \ref{thmloglog} and \ref{thmobliviousmulticorefull}, which
upper bound the backlog of the \smoothefc and the \multicoresmoothefc,
are optimal (up to constant factors in the backlog) when $\epsilon \le 1/2$ is a positive constant.
The lower bound
demonstrating the optimality of Theorems \ref{thmloglog} and
\ref{thmobliviousmulticorefull} is captured formally in the following
corollary.

\begin{corollary}
Consider the $p$-processor \multicore on $n$ cups
with $\epsilon \le 1/2$. For any $k$ satisfying $\log p \le k \le \log
n$, there is an oblivious pouring strategy that guarantees after some
particular step that there is a backlog of at least $\Omega(k)$ with
probability at least $2^{-2^k}$.
\end{corollary}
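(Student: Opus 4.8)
The plan is to obtain the corollary from Theorem~\ref{thmlowerbound2} together with the observation (asserted when the universal emptying game is introduced) that lower bounds for the universal emptying game port to the \multicore whenever $\epsilon \le 1/2$. Recall that in the universal emptying game the filler places $p/2$ units per step and the emptier flushes $p$ cups, while in the \multicore the filler may place $(1-\epsilon)p \ge p/2$ units per step subject to a per-cup cap of $1-\delta$, and the emptier removes only one unit from each of $p$ cups. The first thing I would check is that the filler strategy in the proof of Theorem~\ref{thmlowerbound2} is legal in the \multicore: it always spreads $p/2$ units over at least $2p$ live cups, so it never places more than $\frac{p/2}{2p} = \frac14$ unit in any single cup (well under the cap $1-\delta$), and it pours at most $p/2 \le (1-\epsilon)p$ units per step. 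Second, replacing the flushing emptier by the one-unit-per-cup emptier only weakens the emptier, and this does not affect the quantity the lower bound tracks — namely a cup the emptier never touches — since the contents of an untouched cup are determined purely by the filler's pours. Hence the Theorem~\ref{thmlowerbound2} filler strategy, run verbatim, forces the same backlog with the same probability in the \multicore.

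Given $k$ with $\log p \le k \le \log n$, I would then invoke Theorem~\ref{thmlowerbound2} with its parameter set to a multiple of $p$ of size $2^{\Theta(k)}$ (clamped to lie between $2p$ and $n$, which is possible in the stated range up to adjusting the hidden constant in $\Omega(k)$ — exactly as in the paragraph preceding the corollary, which performs the substitution ``$k \mapsto 2^j$''). This yields, after step $\Theta(2^k/p)-1$, a cup of backlog $\Theta\!\left(\log \frac{2^{\Theta(k)}}{p}\right) = \Theta(k-\log p) = \Omega(k)$, with probability at least $\binom{2^{\Theta(k)}}{p}^{-\Theta(2^k/p)}$. Using the crude estimate $\binom{2^m}{p} \le 2^{mp}$, this probability is at least $2^{-\Theta(k\cdot 2^k)} \ge 2^{-2^{\Theta(k)}}$; absorbing the constant in the exponent by replacing $2^k$ with $2^{k/c}$ for a suitable constant $c$ (which only rescales the forced backlog by a constant) brings this to $2^{-2^k}$, as claimed. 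Since the corresponding upper bounds are Theorems~\ref{thmloglog} and~\ref{thmobliviousmulticorefull}, this certifies their optimality up to constants in the backlog (and, as elsewhere in the section, up to constants in $p$).

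I do not expect a genuine obstacle here: all the real work is in the (already proven) Theorem~\ref{thmlowerbound2}, and the only care needed is in two routine bookkeeping steps — verifying the per-cup cap $1-\delta$ is respected (automatic, since no cup ever receives more than $\frac14$ unit) and chasing the constants that convert $\binom{2^k}{p}^{-(2^k/p-1)}$ into the clean form $2^{-2^k}$ after a constant-factor rescaling of $k$. The mildest subtlety is the low-$k$ endpoint (near $k = \log p$) and the high-$k$ endpoint (near $k=\log n$ when $p$ is close to $n$), where ``$\Omega(k)$'' should be read with the understood constant-factor slack in the backlog and in $p$; for $k$ below $\Theta(\log p)$ the statement is in any case already subsumed by the super-constant-backlog bound of Theorem~\ref{thmmulti-core}, in the same way as Remark~\ref{remthmoblivsmallk}.
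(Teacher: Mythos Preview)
Your proposal is correct and follows essentially the same approach as the paper: the paragraph immediately preceding the corollary already derives it from Theorem~\ref{thmlowerbound2} by the substitution $k \mapsto 2^j$ together with the remark that universal-emptying-game lower bounds port to the \multicore for $\epsilon \le 1/2$. You are in fact more careful than the paper about two points the paper glosses over --- verifying that the filler's per-cup pours stay under the $1-\delta$ cap, and explicitly handling the constant-factor rescaling needed to turn the probability bound $1/k^k$ into the clean form $2^{-2^k}$.
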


Finally, for arbitrary constants $c$, Theorem \ref{thmlowerbound2} can
be used by the filler to achieve backlog $c$ with probability at least
$e^{-O(p)}$. In particular, a backlog of $c$ corresponds with a
backlog of $\log \frac{2^c \cdot p}{p}$, which by Theorem
\ref{thmlowerbound2} can be achieved after some step $\lceil \frac{2^{O(c)}
  \cdot p}{p} \rceil - 1$ with probability at least
$$\left(\frac{1}{\binom{2^{O(c)} \cdot p}{p}}\right)^{2^{O(c)}} \ge
\left(\frac{1}{2^{\left(2^{O(c)} \cdot p\right)}}\right)^{2^{O(c)}} =
2^{-\left(2^{O(c)} \cdot p\right)}.$$ Since $c$ is constant, this
probability is just $2^{-O(p)}$. Hence Theorem \ref{thmmulti-core},
which ensures in the \multicore a backlog of three or smaller with
probability at least $1 - O\left(e^{-\Omega(\varepsilon^2p)}\right)$,
is optimal (up to constant factors in $p$) when $\epsilon \le 1/2$ is
a positive constant. This is captured formally in the following
corollary.

\begin{corollary}
  Consider the $p$-processor \multicore with
$\epsilon \le 1/2$. For any constant $c$, there is an oblivious
pouring strategy that guarantees after some particular step that there
is a backlog of at least $c$ with probability at least
$2^{-O(p)}$.
\end{corollary}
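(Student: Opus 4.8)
The plan is to obtain this corollary as a direct specialization of Theorem~\ref{thmlowerbound2}, choosing the parameter $k$ so that the backlog $\Theta(\log(k/p))$ of that construction genuinely exceeds $c$, and then simplifying the resulting success probability. First, note that it suffices to work in the universal emptying game: the emptier there is strictly more powerful than in the \multicore (it drains chosen cups entirely rather than removing a single unit), and in the lower-bound construction the filler pours only $p/2 \le (1-\epsilon)p$ total water, placing at most $\frac{p/2}{p} = 1/2 \le 1-\delta$ into any single cup, so the pouring is legal for the \multicore whenever $\epsilon \le 1/2$. Hence any oblivious lower bound for the universal emptying game transfers to the \multicore, as already noted in the text.

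Next I would fix an absolute constant $\gamma$ (depending only on the implied constant in the ``$\Theta(\log(k/p))$'' of Theorem~\ref{thmlowerbound2}) such that $\log(k/p) \ge \gamma c$ forces the backlog guaranteed by that theorem to be at least $c$. Then I set $k$ to be the smallest multiple of $p$ that is at least $2^{\gamma c} p$. Since $c$ is a fixed constant this gives $k = 2^{\Theta(c)} p$, so $2p \le k \le n$ (for $n$ at least this large, which is a mild assumption since $k = O(p)$), and moreover $k/p - 1 = \lceil 2^{\gamma c}\rceil - 1 \le 2^{O(c)}$ is itself a constant. Applying Theorem~\ref{thmlowerbound2} with this $k$, there is an oblivious pouring strategy that, after step $k/p - 1$, achieves backlog $\Theta(\log(k/p)) \ge c$ with probability at least $\bigl(1/\binom{k}{p}\bigr)^{k/p-1}$.

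It remains to check that this probability is $2^{-O(p)}$. Using the elementary bound $\binom{k}{p} \le 2^{k}$ together with $k = 2^{\Theta(c)} p$, and raising to the constant power $k/p - 1 \le 2^{O(c)}$, we get
$$\left(\frac{1}{\binom{k}{p}}\right)^{k/p - 1} \ge \left(2^{-2^{\Theta(c)} p}\right)^{2^{O(c)}} = 2^{-2^{O(c)} p}.$$
Since $c$ is a constant we have $2^{O(c)} = O(1)$, so this is $2^{-O(p)}$, which completes the argument.

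I do not expect any genuine obstacle here: the corollary is essentially a plug-in of Theorem~\ref{thmlowerbound2}. The two points needing a little care are (i) choosing the constant factor hidden in $k = 2^{\Theta(c)} p$ large enough that $\Theta(\log(k/p))$ truly exceeds $c$, rather than merely being $\Theta(c)$; and (ii) bounding the binomial coefficient by $2^{k}$ rather than by the coarser $k^{p}$, since the latter would inject a spurious $\log p$ factor into the exponent and yield only $2^{-O(p\log p)}$ instead of the claimed $2^{-O(p)}$.
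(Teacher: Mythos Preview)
Your proposal is correct and follows essentially the same approach as the paper: set $k = 2^{\Theta(c)} p$, apply Theorem~\ref{thmlowerbound2}, and bound $\binom{k}{p} \le 2^{k}$ to obtain success probability $2^{-2^{O(c)} p} = 2^{-O(p)}$. You are somewhat more explicit than the paper about why the universal-emptying-game bound transfers to the \multicore and about the need to use $\binom{k}{p} \le 2^{k}$ rather than $k^{p}$, but the argument is the same.
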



\section{Recovery from Bad Starting States}
\label{sec:recovery}

In this section we revisit both the \singlecore and the
\multicore in the situation in which the
starting-state of the cups is non-empty. In particular, suppose that
$b$ units of water have already been dispersed among the cups
arbitrarily before the game begins. We wish to show that in both games
the system can quickly recover from such a starting state.

We will prove the following two theorems:

\begin{thm}
Consider the \singlecore beginning from a starting state in
which $b$ units of water have already been dispersed arbitrarily among
the cups. Suppose $\varepsilon \le 1/2$ and consider some value $k \ge
0$ such that $3\log \frac{1}{\varepsilon} \le k$. Then for $i >
\frac{b}{\varepsilon}$, the probability that the \smoothefc has backlog
$k$ or greater after step $i$ is at most
$$O\left(\frac{1}{2^{2^{\Omega(k)}}}\right).$$
\label{thmsinglecorecatchup}
\end{thm}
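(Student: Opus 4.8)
The plan is to reduce the "bad starting state" setting to the clean setting of Theorem \ref{thmloglog} by showing that the smoothed greedy algorithm deterministically flushes away all of the adversarial water within $b/\varepsilon$ steps, and that after this point the analysis of Theorem \ref{thmloglog} applies verbatim. The crucial observation is that the $b$ units of adversarially placed water do not interfere with the randomization argument, because the emptier's actions in the smoothed greedy algorithm never change the water modulo one in any cup, so \eqref{eqmod1isgood}-style reasoning is still available once we know that the integer fill has returned to zero.

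First I would establish the deterministic recovery claim: there exists a step $i' \le b/\varepsilon$ after which the integer fill $\sum_j \lfloor f_j(i')\rfloor$ is zero. To see this, note that for each step $t$, either the integer fill is already zero at the end of step $t$, or there is at least one cup containing one or more units of water, in which case the emptier removes a full unit of water during step $t$. Since the filler adds at most $1 - \varepsilon$ units per step and the emptier removes exactly $1$ unit on any step where the integer fill is nonzero, on each such step the total amount of water in the system decreases by at least $\varepsilon$. The system begins with $b$ units of water, so the integer fill cannot be nonzero at the end of each of the first $\lceil b/\varepsilon\rceil$ steps; hence some $i' \le b/\varepsilon < i$ has integer fill zero at the end of step $i'$.

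Next I would run the argument of Section \ref{sec:singlecore} starting from step $i'$ rather than from step $0$. The key point is that Lemma \ref{lemconcentratedthresholds} only needs the emptier to remove integer quantities of water (so that the number of integer thresholds crossed in cup $j$ during steps $l,\ldots,l+t-1$ is given by \eqref{eqthresholds}, independent of the emptier's targeting), and it needs $r_j + a_j \bmod 1$ to be uniformly random, which remains true since $r_j$ is chosen uniformly at the start of the game and the adversarial water contributes only a fixed shift. Now apply Lemma \ref{lemboundedthresholds} with the "reset point" being step $i'$: taking $t$ to be the smallest value such that the integer fill was zero at the end of step $i - t$, we have $t \le i - i'$, and as in the proof of Lemma \ref{lemboundedthresholds}, a backlog of $k$ or more after step $i$ forces $S_t \ge t + 2^k$ for this $t$, where $S_t$ counts integer thresholds crossed by the filler in the last $t$ steps. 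The Chernoff-plus-union-bound computation in Lemma \ref{lemboundedthresholds} (summing over all $t \ge 1$) then gives failure probability $O(1/2^{2^{\Omega(k)}})$ whenever $2^k \ge 1/\varepsilon^3$, i.e.\ whenever $3\log\frac1\varepsilon \le k$; the fact that there was adversarial starting water only enters through the existence of the reset point $i'$, which we have already guaranteed. Finally, conditioned on the integer fill after step $i$ being at most $2^k$, Lemma \ref{lemdynamiccups} applied to the active cups gives backlog $O(k)$, completing the proof.

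I expect the only genuine subtlety — and hence the "main obstacle" to state carefully — is verifying that Lemma \ref{lemconcentratedthresholds} and the reset-point argument compose correctly: one must check that the adversarial water is absorbed into the deterministic quantity $a_j$ (the water placed before the window under consideration) and never into the random part, so that the independence and uniformity of $r_j + a_j \bmod 1$ is unaffected, and that the emptier's possible inaction (doing nothing when all cups have fill $<1$) does not disturb the accounting. Everything else is a direct re-run of the clean proof with the window $[i'+1, i]$ in place of $[1,i]$.
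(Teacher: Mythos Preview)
Your proposal is correct and follows essentially the same approach as the paper's proof: both establish via a water-balance argument that the integer fill must hit zero at some step $i' \le b/\varepsilon$, and then rerun the analysis of Lemmas \ref{lemboundedthresholds} and \ref{lemdynamiccups} from that reset point. You are in fact slightly more explicit than the paper about why Lemma \ref{lemconcentratedthresholds} continues to apply (namely, that the adversarial $b$ units are absorbed into the deterministic shift $a_j$ and do not disturb the uniformity of $r_j + a_j \bmod 1$), which is exactly the point the paper leaves implicit.
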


\begin{thm}
Consider the \tweakedmulticore beginning from a starting
state in which $b$ units of water have already been dispersed
arbitrarily among cups. Set each counter $w_j(0)$ initially to be the
number of thresholds crossed by the $b$ units of water in each cup $j$.

  Suppose that $1/2 \ge \varepsilon \ge p^{-1/3}$ and that $\delta <
  1$ is sufficiently large in $\Omega(e^{-O(\varepsilon^2p)})$.

  Then the \multicoresmoothefc guarantees after any
  given step $i$ satisfying $i \ge \delta^{-8} \cdot p^2$ and $i \ge
  2b / \delta$, the following is true:
  \begin{itemize}
  \item The backlog is three or less with probability at least
  $$1 - O\left(e^{-\Omega(\varepsilon^2p)}\right) - O\left(e^{-\Omega(i^{1/6})}\right).$$
  \item If we further assume that $\delta \ge 1/\poly(p)$, then for
    any $k \ge \log p$, the backlog is $O(k/\epsilon)$ with
    probability at least
    $$1 - O\left(e^{-e^{k}}\right) - O\left(e^{-\Omega(i^{1/6})}\right).$$
  \end{itemize}
\label{thmmulti-corecatchup}
\end{thm}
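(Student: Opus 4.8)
The plan is to show that, starting from a configuration holding $b$ units of water, the \multicoresmoothefc almost surely reaches a ``clean'' configuration in which $\sum_j w_j(i') = 0$ at some step $i' \le i$, after which the analyses of Theorems~\ref{thmmulti-core} and~\ref{thmobliviousmulticorefull} carry over essentially unchanged to the game played from step $i'$ onward. Concretely I would (i) bound by $O(e^{-\Omega(i^{1/6})})$ the probability that no such $i'$ exists, via a drift argument together with the surplus-threshold concentration bound of Remark~\ref{rempropwitnessdetail}; and (ii) observe that, once a clean configuration has occurred at some $i' \le i$, the backlog-witness machinery of \secref{multicoreconstantbacklog} and \secref{multicoresmoothefc} applies verbatim, since the only place it used that the game started empty was to guarantee the existence of a step at which $\sum_j w_j = 0$.

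For (i): because thresholds are spaced at least one apart and lie at heights at least two, the $b$ initial units cross at most $b$ thresholds, so $\sum_j w_j(0) \le b$. Suppose, contrary to the clean-configuration event, that $\sum_j w_j(m) > 0$ for every $m \in \{0, 1, \dots, i\}$. The case analysis in the proof of Lemma~\ref{lemboundwsum}---which uses only that $\sum_j w_j$ is positive at the end of the step in question---shows that during every step $m \le i$ the net change in $\sum_j w_j$ is at most $T_m(1 + \delta) - \delta$. Summing over $m = 1, \dots, i$ and using $\sum_j w_j(i) > 0$ gives $(1 + \delta)\sum_{m = 1}^i T_m > \delta i - \sum_j w_j(0) \ge \delta i - b$, and the hypothesis $i \ge 2b/\delta$ makes the right side at least $\delta i / 2$, so $\sum_{m=1}^i T_m > \delta i / 4$. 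By Remark~\ref{rempropwitnessdetail} with $l = i$ and $t = 0$, this event has probability at most $O(pi)\cdot\exp\!\left(-\Omega\!\left((\delta^4 i / p)^{1/3}\right)\right)$. The hypothesis $i \ge \delta^{-8} p^2$ yields $\delta^4 \ge p/\sqrt{i}$, hence $\delta^4 i/p \ge \sqrt{i}$ and $(\delta^4 i/p)^{1/3} \ge i^{1/6}$; since also $pi \le i^{3/2}$, the bound collapses to $O(e^{-\Omega(i^{1/6})})$, completing (i).

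For (ii): write $\Pr[\,\sum_j w_j(i) > t\,] \le \Pr[\,\text{there exists a height-}t\text{ backlog witness } l \le i \text{ for step } i\,] + O(e^{-\Omega(i^{1/6})})$, where the second term is the failure probability from (i); the inequality holds because, on the complementary event, a clean configuration at some $i' \le i$ supplies a step at which $\sum_j w_j = 0$, so the reduction in Lemma~\ref{lemboundwsum} applies with $l = i - i' \le i$. The first term is now bounded exactly as in \secref{multicoreconstantbacklog} and \secref{multicoresmoothefc}: the law of the threshold crossings during steps $i - l + 1, \dots, i$ is unaffected by the $b$ units placed before the game (these only shift the cups' starting heights), so Lemma~\ref{lemmulti-coresinglestep}, Proposition~\ref{propwitnessprob}, and (for non-constant backlog) Lemma~\ref{lemboundwitness} all apply to an arbitrary step $i$. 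Taking $t = 0$, the first term is $O(e^{-\Omega(\varepsilon^2 p)})$ exactly as in the proof of Theorem~\ref{thmmulti-core}, and with Lemma~\ref{lemcounterbacklog} this gives the first bullet. Taking $t = e^{ck}$ for a suitably large constant $c$, and now also using $\delta \ge 1/\poly(p)$, the first term is $O(e^{-e^k})$ exactly as in the proof of Theorem~\ref{thmobliviousmulticorefull} via Lemma~\ref{lemboundwitness}; combining Lemma~\ref{lemcounterbacklog} with the dynamic-game backlog bound of Theorem~\ref{thmdynamicmulticoredeterministic}, applied to the cups holding $4 + \delta$ or more units of water, gives the second bullet.

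The main obstacle is (i): one must verify that the drift inequality extracted from the proof of Lemma~\ref{lemboundwsum} is still valid when the counters do not start at zero---so that ``$\sum_j w_j$ never returns to zero'' genuinely forces $\Omega(\delta i)$ surplus threshold crossings---and that the crude bound $\sum_j w_j(0) \le b$ together with $i \ge 2b/\delta$ leaves enough slack for Remark~\ref{rempropwitnessdetail} to kick in. Once the clean configuration is secured, the rest is bookkeeping, since every result quoted from \secref{multicoreconstantbacklog} and \secref{multicoresmoothefc} was already proven for an arbitrary step $i$ with no hypothesis on the starting configuration.
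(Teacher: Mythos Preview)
Your proposal is correct and follows essentially the same route as the paper. The paper likewise (a) argues that with probability at least $1 - O(e^{-\Omega(i^{1/6})})$ some step $i' \le i$ has $\sum_j w_j(i') = 0$, by combining the drift bound from Lemma~\ref{lemboundwsum}, the deterministic bound $T_0 \le b \le \delta i/2$, and Remark~\ref{rempropwitnessdetail} together with the hypothesis $i \ge \delta^{-8} p^2$; and (b) observes that once such a clean step exists, the proofs of Theorems~\ref{thmmulti-core} and~\ref{thmobliviousmulticorefull} apply without modification. Your algebraic reduction of $O(pi)\exp(-\Omega((\delta^4 i/p)^{1/3}))$ to $O(e^{-\Omega(i^{1/6})})$ via $\delta^4 \ge p/\sqrt{i}$ and $p \le \sqrt{i}$ is exactly the computation the paper performs.
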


The proof of Theorem \ref{thmsinglecorecatchup} largely reuses the
ideas from the analysis in the standard setting:
\begin{proof}[Proof of Theorem \ref{thmsinglecorecatchup}]
  For a step $i$, we begin by considering the event $E_i$ that for
  each of the first $i$ steps $1, \ldots, i$, the integer fill at the
  end of each step has remained positive. Note that if the integer
  fill at the end of a step is positive, then the emptier must have
  successfully removed at least one full unit of water during that
  step. Thus, if event $E_i$ occurs, then the emptier must
  successfully remove a total of at least $i$ units of water in the
  first $i$ steps. Given that the cups contain $b$ units of water
  initially, and that the filler places at most $(1 - \epsilon)i$
  units of water during the steps, it follows that
  $$i \le b + (1 + \varepsilon)i.$$
  Therefore, the event $E_i$ cannot occur for $i > b / \varepsilon$.

  So far we have shown that, during the first $i$ steps, the integer
  fill must at some point hit zero. Using this, we can analyze the
  backlog in precisely the same way as we do for the \singlecore. In particular, starting at the point where the
  integer fill was zero, we consider an instance of the dynamic
  cup game being played on the cups that contain one or more
  units of water (with their fill in the dynamic game being one less
  than their actual fill). By Lemma \ref{lemdynamiccups}, the backlog
  at any given step is at most logarithmic in the number of cups
  engaged in the dynamic game. This, in turn, is at most the integer
  fill. By the analysis in Lemma \ref{lemboundedthresholds}, with
  probability at least
  $$1 - e^{-\Omega(r^{1/3})},$$ the integer fill at the end of step
  $j$ is no greater than $r$.\footnote{Notice that in order for this
    analysis to work, we require that there is some $t < j$ such that
    the integer fill was zero at step $j - t$; but fortunately we have
    already established this.} Plugging in $r = 2^k$, the backlog
  after step $j$ will be $O(k)$ with probability at least $1 -
  O\left(e^{-e^{\Omega(k)}}\right)$.
\end{proof}

Next we prove Theorem \ref{thmmulti-corecatchup}.
\begin{proof}[Proof of Theorem \ref{thmmulti-corecatchup}]
  Let $T_0$ be the number of thresholds crossed (using the \multicoresmoothefc's definition of thresholds) by the initial $b$ units of
  water. For each $m > 0$, define $T_m$ to be the number of surplus
  thresholds crossed during the $m$-th step.

  Suppose there is some step $l \le i$ such that after step $l$ the
  counters $w_j(l)$ are all zero. Then the proof of Theorems
  \ref{thmmulti-core} and \ref{thmobliviousmulticorefull} apply
  without modification, since both proofs rely only on analysis of how
  the counters $w_j$ have changed since the last time they were all
  zero.
  
  In order to complete the proof, it suffices to show that with
  probability at least
  $$1 - O\left(e^{-\Omega(i^{1/6})}\right),$$ there is some step $l \le i$ at which the counters
    $w_j$ are all zero.

  By the same analysis as in Lemma \ref{lemboundwsum}, for each step
  $m$ such that $\sum_j w_j(m) \ge 0$, the net increase in $\sum_j w_j$
  during the step is at most $T_m \cdot (1 + \delta) - \delta$. Thus if
  \begin{equation}
    T_0 + \sum_{m = 1}^i T_m \cdot (1 + \delta) < \delta i,
    \eqlabel{eqT0sum}
  \end{equation}
  then we will be
  guaranteed that there is some step $l \le i$ such that $\sum_j w_j(l)
  = 0$, as desired.

  We will consider $T_0$ and $\sum_{i = 1}^j T_j$ separately, proving the bounds,
  \begin{equation}
    \Pr\left[\sum_{m = 1}^i T_m \ge \delta i / 4\right] \le O(e^{-\Omega(i^{1/6})}).
    \eqlabel{eqbreakT0out1}
  \end{equation}
  and
  \begin{equation}
    \Pr[T_0 \ge \delta i / 2] = 0,
    \eqlabel{eqbreakT0out2}
  \end{equation}
  which together establish \eqref{eqT0sum}. 

  By Proposition \ref{propwitnessprob} and Remark \ref{rempropwitnessdetail}, 
  \begin{equation*}
      \Pr\left[\sum_{m = 1}^i T_m \ge \delta i / 4\right]  \le O(pi) \cdot \exp\left(-\Omega\left(\left(\frac{\delta^4 i}{p}\right)^{1/3}\right)\right).
  \end{equation*}
  Using the fact that $i \ge \delta^{-8} \cdot p^2$, it follows that
  \begin{align*}
    \Pr\left[\sum_{m = 1}^i T_m \ge \delta i / 4\right]  & \le O(pi) \cdot \exp\left(-\Omega\left(i^{1/6}\right)\right) \\
    & \le O\left(\exp\left(-\Omega\left(i^{1/6}\right)\right)\right),
  \end{align*}
  thereby proving \eqref{eqbreakT0out1}.

  On the other hand, the number of thresholds crossed by the initial
  $b$ units of water (i.e., $T_0$) can be bounded as follows: If a cup
  $j$ received $r$ units of water, then it cannot have crossed more
  than $\lfloor r \rfloor$ thresholds, since the threshold $r_j(0)$
  is null for each cup $j$ (meaning there is no
  threshold to cross between $0$ and $1$). Therefore, $T_0$ is
  deterministically at most $b$, which in turn is less than $\delta i
  / 2$. This implies \eqref{eqbreakT0out2}, completing the proof.
\end{proof}




\section{Acknowledgments} This work was supported in part by NSF Grants CCF 805476, CCF 822388, CNS
1408695, CNS 1755615, CCF 1439084, CCF 1725543, CSR 1763680, CCF
1716252, CCF 1617618, CCF 1314547 and CCF 1533644; and by an MIT Akamai Fellowship and a Fannie
\& John Hertz Foundation Fellowship.



\subsection*{Acknowledgments}
\label{sec:acknowledgments}

This research was supported in part by NSF Grants 1314547 and 1533644.


\bibliographystyle{abbrv}
\bibliography{./all}

\begin{thebibliography}{10}

\bibitem{AdlerBeFr03}
M.~Adler, P.~Berenbrink, T.~Friedetzky, L.~A. Goldberg, P.~Goldberg, and
  M.~Paterson.
\newblock A proportionate fair scheduling rule with good worst-case
  performance.
\newblock In {\em Proceedings of the Fifteenth Annual ACM Symposium on Parallel
  Algorithms and Architectures (SPAA)}, pages 101--108, 2003.

\bibitem{AlonSp04}
N.~Alon and J.~H. Spencer.
\newblock {\em The probabilistic method}.
\newblock John Wiley \& Sons, 2004.

\bibitem{AmirFaId95}
A.~Amir, M.~Farach, R.~M. Idury, J.~A.~L. Poutr{\'{e}}, and A.~A.
  Sch{\"{a}}ffer.
\newblock Improved dynamic dictionary matching.
\newblock {\em Inf. Comput.}, 119(2):258--282, 1995.

\bibitem{AmirFr14}
A.~Amir, G.~Franceschini, R.~Grossi, T.~Kopelowitz, M.~Lewenstein, and
  N.~Lewenstein.
\newblock Managing unbounded-length keys in comparison-driven data structures
  with applications to online indexing.
\newblock {\em SIAM Journal on Computing}, 43(4):1396--1416, 2014.

\bibitem{AzarLi06}
Y.~Azar and A.~Litichevskey.
\newblock Maximizing throughput in multi-queue switches.
\newblock {\em Algorithmica}, 45(1):69--90, 2006.

\bibitem{Bar-NoyFrLa02}
A.~Bar-Noy, A.~Freund, S.~Landa, and J.~S. Naor.
\newblock Competitive on-line switching policies.
\newblock In {\em Proceedings of the Thirteenth Annual ACM-SIAM Symposium on
  Discrete Algorithms (SODA)}, pages 525--534, 2002.

\bibitem{BarNi02}
A.~Bar-Noy, A.~Nisgav, and B.~Patt-Shamir.
\newblock Nearly optimal perfectly periodic schedules.
\newblock {\em Distributed Computing}, 15(4):207--220, 2002.

\bibitem{BaruahCoPl96}
S.~K. Baruah, N.~K. Cohen, C.~G. Plaxton, and D.~A. Varvel.
\newblock Proportionate progress: A notion of fairness in resource allocation.
\newblock {\em Algorithmica}, 15(6):600--625, Jun 1996.

\bibitem{BaruahGe95}
S.~K. Baruah, J.~E. Gehrke, and C.~G. Plaxton.
\newblock Fast scheduling of periodic tasks on multiple resources.
\newblock In {\em Parallel Processing Symposium, 1995. Proceedings., 9th
  International}, pages 280--288. IEEE, 1995.

\bibitem{BenderCrCo18}
M.~A. Bender, J.~Christensen, A.~Conway, M.~Farach-Colton, R.~Johnson, and
  M.-T. Tsai.
\newblock Optimal ball recycling.
\newblock In {\em Proceedings of the Symposium on Discrete Algorithms (SODA)},
  2018.

\bibitem{BenderFeKr15}
M.~A. Bender, S.~P. Fekete, A.~Kr{\"{o}}ller, V.~Liberatore, J.~S.~B. Mitchell,
  V.~Polishchuk, and J.~Suomela.
\newblock The minimum backlog problem.
\newblock {\em Theor. Comput. Sci.}, 605:51--61, 2015.

\bibitem{BodlaenderHuKu12}
M.~H.~L. Bodlaender, C.~A.~J. Hurkens, V.~J.~J. Kusters, F.~Staals, G.~J.
  Woeginger, and H.~Zantema.
\newblock Cinderella versus the wicked stepmother.
\newblock In {\em IFIP International Conference on Theoretical Computer
  Science}, pages 57--71, 2012.

\bibitem{BodlaenderHuWo11}
M.~H.~L. Bodlaender, C.~A.~J. Hurkens, and G.~J. Woeginger.
\newblock The cinderella game on holes and anti-holes.
\newblock In {\em Proceedings of the 37th International Conference on
  Graph-Theoretic Concepts in Computer Science (WG)}, pages 71--82, 2011.

\bibitem{ChrobakCsIm01}
M.~Chrobak, J.~Csirik, C.~Imreh, J.~Noga, J.~Sgall, and G.~J. Woeginger.
\newblock The buffer minimization problem for multiprocessor scheduling with
  conflicts.
\newblock In {\em Proceedings of the 28th International Colloquium on Automata,
  Languages and Programming (ICALP)}, volume 2076, pages 862--874, 2001.

\bibitem{DamaschkeZh05}
P.~Damaschke and Z.~Zhou.
\newblock On queuing lengths in on-line switching.
\newblock {\em Theoretical computer science}, 339(2-3):333--343, 2005.

\bibitem{DietzSl87}
P.~Dietz and D.~Sleator.
\newblock Two algorithms for maintaining order in a list.
\newblock In {\em Proceedings of the Nineteenth Annual ACM Symposium on Theory
  of Computing (STOC)}, pages 365--372, 1987.

\bibitem{DietzRa91}
P.~F. Dietz and R.~Raman.
\newblock Persistence, amortization and randomization.
\newblock In {\em Proceedings of the Second Annual ACM-SIAM Symposium on
  Discrete Algorithms (SODA)}, pages 78--88, 1991.

\bibitem{FischerGa15}
J.~Fischer and P.~Gawrychowski.
\newblock Alphabet-dependent string searching with wexponential search trees.
\newblock In {\em Annual Symposium on Combinatorial Pattern Matching}, pages
  160--171. Springer, 2015.

\bibitem{FleischerKo04}
R.~Fleischer and H.~Koga.
\newblock Balanced scheduling toward loss-free packet queuing and delay
  fairness.
\newblock {\em Algorithmica}, 38(2):363--376, Feb 2004.

\bibitem{Gail93}
H.~R. Gail, G.~Grover, R.~Gu{\'e}rin, S.~L. Hantler, Z.~Rosberg, and M.~Sidi.
\newblock Buffer size requirements under longest queue first.
\newblock {\em Performance Evaluation}, 18(2):133--140, 1993.

\bibitem{GkasieniecKl17}
L.~Gasieniec, R.~Klasing, C.~Levcopoulos, A.~Lingas, J.~Min, and T.~Radzik.
\newblock Bamboo garden trimming problem (perpetual maintenance of machines
  with different attendance urgency factors).
\newblock In {\em International Conference on Current Trends in Theory and
  Practice of Informatics}, pages 229--240. Springer, 2017.

\bibitem{Goldwasser10}
M.~H. Goldwasser.
\newblock A survey of buffer management policies for packet switches.
\newblock {\em {SIGACT} News}, 41(1):100--128, 2010.

\bibitem{GoodrichPa13}
M.~T. Goodrich and P.~Pszona.
\newblock Streamed graph drawing and the file maintenance problem.
\newblock In {\em International Symposium on Graph Drawing}, pages 256--267.
  Springer, 2013.

\bibitem{Kopelowitz12}
T.~Kopelowitz.
\newblock On-line indexing for general alphabets via predecessor queries on
  subsets of an ordered list.
\newblock In {\em Foundations of Computer Science (FOCS), 2012 IEEE 53rd Annual
  Symposium on}, pages 283--292. IEEE, 2012.

\bibitem{LitmanMo05}
A.~Litman and S.~Moran-Schein.
\newblock On distributed smooth scheduling.
\newblock In {\em Proceedings of the seventeenth annual ACM symposium on
  Parallelism in algorithms and architectures}, pages 76--85. ACM, 2005.

\bibitem{LitmanMo09}
A.~Litman and S.~Moran-Schein.
\newblock Smooth scheduling under variable rates or the analog-digital
  confinement game.
\newblock {\em Theor. Comp. Sys.}, 45(2):325--354, June 2009.

\bibitem{LitmanMo11}
A.~Litman and S.~Moran-Schein.
\newblock On centralized smooth scheduling.
\newblock {\em Algorithmica}, 60(2):464--480, 2011.

\bibitem{Liu69}
C.~L. Liu.
\newblock Scheduling algorithms for multiprocessors in a hard real-time
  environment.
\newblock {\em JPL Space Programs Summary, 1969}, 1969.

\bibitem{McDiarmid89}
C.~McDiarmid.
\newblock On the method of bounded differences.
\newblock {\em Surveys in combinatorics}, 141(1):148--188, 1989.

\bibitem{MoirRa99}
M.~Moir and S.~Ramamurthy.
\newblock Pfair scheduling of fixed and migrating periodic tasks on multiple
  resources.
\newblock In {\em Real-Time Systems Symposium, 1999. Proceedings. The 20th
  IEEE}, pages 294--303. IEEE, 1999.

\bibitem{Mortensen03}
C.~W. Mortensen.
\newblock Fully-dynamic two dimensional orthogonal range and line segment
  intersection reporting in logarithmic time.
\newblock In {\em Proceedings of the fourteenth annual ACM-SIAM symposium on
  Discrete algorithms}, pages 618--627. Society for Industrial and Applied
  Mathematics, 2003.

\bibitem{RosenblumGoTa04}
M.~Rosenblum, M.~X. Goemans, and V.~Tarokh.
\newblock Universal bounds on buffer size for packetizing fluid policies in
  input queued, crossbar switches.
\newblock In {\em INFOCOM 2004. Twenty-third AnnualJoint Conference of the IEEE
  Computer and Communications Societies}, volume~2, pages 1126--1134. IEEE,
  2004.

\bibitem{SleatorTa85}
D.~D. Sleator and R.~E. Tarjan.
\newblock Amortized efficiency of list update and paging rules.
\newblock {\em Communications of the ACM}, 28(2):202--208, 1985.

\end{thebibliography}
\clearpage
\appendix
\section{Porting Results Between Parameter Regimes}\label{apptweaked}

In this section, we describe how to transfer results for the
\tweakedemphmulticore to results for the \untweakedmulticore.

Suppose $0 \le \delta \le O(\epsilon)$ and $\epsilon \le 1/2$. Then
the corresponding $p'$, $\epsilon'$, $\delta'$ in the \multicore are
$$\epsilon' = 1 - \frac{(1 - \epsilon) \cdot p}{(1 +
  2\delta) \cdot (p + 1)} = \Theta(\epsilon + \delta) =
\Theta(\epsilon),$$
since $1 - \epsilon'$ is the ratio of water poured to water emptied;
$$\delta' = 1 - \frac{1}{1 + 2\delta} = \Theta(\delta),$$ since $1 -
\delta'$ is the ratio of the amount the filler is permitted to place
in each cup to the amount the emptier is allowed to remove; and $p' =
p + 1$. Moreover, $1$ unit of backlog in the \tweakedemphmulticore
corresponds with $1 + 2\delta$ units of backlog in the \multicore.

It follows that, whenever using the \tweakedemphmulticore, one can
always add the assumption that $0 \le \delta \le O(\epsilon)$ to
obtain an analogous result for the \multicore for values of
$\epsilon$ and $\delta$ that differ by a constant factor from before;
and in fact the assumption that $0 \le \delta \le O(\epsilon)$ is
without loss of generality in each of our result statements, since no
advantage in the guarantees is obtained by increasing $\delta$ beyond
$O(\epsilon)$. Thus all of our randomized results port over immediately to the \untweakedmulticore (with the modification of  $\epsilon$ and $\delta$ by constant factors).

\section{Bounding the Number of Random Bits}\label{apprandombits}

In this section, we briefly explain how to limit the number of random
bits used by our algorithms.

In all of our results, we may assume without loss of generality that
$\epsilon$ and $\delta$ are rational and use $O(\log n)$ bits in their
denominators. Although the filler may place water in arbitrary real
quantities into cups, we can assume without loss of generality that
the filler always places multiples of $\frac{\epsilon}{n^2}$ for the
\singlecore and $\frac{\min(\epsilon, \delta)}{n^2}$ for the
\multicore. In particular, even if this is not the case, we may
simulate it by always rounding up the quantity placed into each cup;
this increases the total amount of water poured by no more than
$\min(\epsilon, \delta)/n$, a modification that can easily be absorbed
by changing $\epsilon$ and $\delta$ negligibly.

Many of our algorithms assume the ability to select random real
thresholds in the range $[0, 1]$. However, since we may assume without
loss of generality (as described above) that the water in each cup is
always a rational number with a $O(\log n)$-bit denominator $d$
determined by $\epsilon, \delta, n$, we can select thresholds with
precision only $\frac{1}{d}$ without changing the behavior of the
algorithms.\footnote{However, to be careful about boundary conditions
  for when a threshold is crossed, one should avoid rounding
  thresholds to be exactly multiples of $\frac{1}{d}$; to select a
  random threshold in $[0, 1]$, one can instead select a member of
  $\{\frac{1}{2d}, \frac{3}{2d}, \frac{5}{2d}, \ldots, \frac{2d -
    1}{2d}\}$ at random.}  Thus at most $O(\log n)$ random bits
are required for each threshold.

\section{Analysis of the Dynamic \SingleCore}
\label{appdynamic}

Here we present the proof of Lemma \ref{lemdynamiccups}, analyzing the dynamic \singlecore. The analysis is a simple variation on the analysis presented by \cite{AdlerBeFr03} for the \efc.

\begin{proof}[Proof of Lemma \ref{lemdynamiccups}]
At the end of a given step $i$, we denote the number of cups in the system by $n_i$, and we denote the average amount of water in the fullest $j$ cups by $\av_i(j)$. (That is, $\av_i(j)$ is $1/j$ times the sum of the amounts of water in each of the fullest $j$ cups.) 

We will prove by induction on $i$, that at the end of the $i$-th step, for all $j \ge 1$,
\begin{equation}
\av_i(j) \le 1 + \left(\frac{1}{j + 1} + \frac{1}{j + 2} + \cdots + \frac{1}{n_i}\right).
\eqlabel{eqdynamichypothesis}
\end{equation}
For convenience, we will include cases where $j > n_i$ in our inductive hypothesis, even though \eqref{eqdynamichypothesis} is trivially true for these cases. As a base case, note that \eqref{eqdynamichypothesis} is immediate for $i = 0$, since initially there are no cups. Suppose the inductive hypothesis holds for step $i - 1$, and we wish to prove it for step $i$.

Consider some value $\av_i(j)$, and examine how it evolves over the course of step $i$, starting initially equal to $\av_{i - 1}(j)$. At the beginning of step $i$, some number of new cups are added to the system. Because these cups are initially empty, $\av_i(j)$ is unchanged. The filler then distributes at most one unit of water among all the cups, thereby increasing each $\av_i(j)$ by at most $1/j$. There are then three cases to consider:
\begin{enumerate}
\item \emph{The fullest cup contains one or fewer units of water:} In this case, the emptier will remove all of the water in the cup, thereby removing the cup from the system. However, because the fullest cup contained no more than one unit of water prior to the removal, it must be that no cup contains more than one unit of water. Thus \eqref{eqdynamichypothesis} remains true.
\item \emph{The fullest cup contains more than one unit of water, and after the removal of a unit, the cup is still among the $j$ fullest cups: }In this case, the initial addition of water to the cups may have increased $\av_i(j)$ by as much as $1/j$, but the subsequent removal of water will have then decreased $\av_i(j)$ by exactly $1/j$, resulting in $\av_i(j) \le \av_{i - 1}(j)$.  Since $n_i \ge n_{i - 1}$, it follows that \eqref{eqdynamichypothesis} still holds.
\item \emph{The fullest cup contains more than one unit of water, and after the removal of a unit, the cup is no longer among the $j$ fullest cups: }In this case, we turn our focus to $\av_i(j + 1)$. The initial addition of water during the step may have increased $\av_i(j + 1)$ by as much as $\frac{1}{j + 1}$. Then, once water was removed from the fullest cup, the cups which were formerly in ranks $2, 3, \ldots, j + 1$ for fullness, become the new fullest $j$ cups. The average fill in these cups is at most what the average fill was in the fullest $j + 1$ cups prior to the emptying. Thus the final value for $\av_i(j)$ is at most $\av_{i - 1}(j + 1) + \frac{1}{j + 1}$. By the inductive hypothesis, this is at most
$$1 + \left(\frac{1}{j + 2} + \frac{1}{j + 2} + \cdots + \frac{1}{n_i}\right) + \frac{1}{j + 1} \le  1 + \left(\frac{1}{j + 1} + \frac{1}{j + 2} + \cdots + \frac{1}{n_i}\right),$$
as desired.
\end{enumerate}
In each of the three cases, \eqref{eqdynamichypothesis} continues to hold, implying by induction on $i$ that it will hold for all $i$ and $j$. Considering the case of $j = 1$, we see that the amount of water in the fullest cup $\av_i(1)$ after the $i$-th step is no greater than $1 + 1/2 + 1/3 + \cdots + 1/n_i \le O(\log n_i)$.
\end{proof}

\end{document}